\def\input@path{{styles/}{../styles/}}
   \providecommand{\SODAVer}[1]{}%
   \providecommand{\NotSODAVer}[1]{#1}%
   \newcommand{\RegVer}[1]{#1}
   \providecommand{\SODAVer}[1]{#1}%
   \providecommand{\NotSODAVer}[1]{}%
   \newcommand{\RegVer}[1]{}
\def\UseBibLatex{1}
   \newcommand{\tcite}[1]{\cite{#1}}%
\newcommand{\eps}{\varepsilon}%
\def\C{C}%
\def\F{\mathcal{F}}%
\def\J{\mathcal{J}}%
\def\K{\mathcal{K}}
\def\R{\mathcal{R}\flogi}
\newcommand{\weight}{\AlgorithmI{weight}\xspace}%
\newcommand{\double}{\AlgorithmI{double}\xspace}%
\newcommand{\Insert}{\AlgorithmI{insert}\xspace}%
\newcommand{\Delete}{\AlgorithmI{delete}\xspace}%
\newcommand{\RSample}{\Mh{\mathcal{R}}}%
\newcommand{\sample}{\AlgorithmI{sample}\xspace}%
\newcommand{\halve}{\AlgorithmI{halve}\xspace}%
   \theoremstyle{plain}%
   \newtheorem{theorem}{Theorem}[section]
   \newtheorem{lemma}[theorem]{Lemma}
   \newtheorem{corollary}[theorem]{Corollary}
   \newtheorem{observation}[theorem]{Observation}
   \theoremstyle{plain}%
   \newtheorem*{remark:unnumbered}[theorem]{Remark}%
   \newtheorem{remark}[theorem]{Remark}%
   \newtheorem{defn}[theorem]{Definition}
   \newcommand{\myqedsymbol}{\rule{2mm}{2mm}}
   \theoremstyle{nonumberplain}%
   \newtheorem{proof}{Proof:}%
   \numberwithin{figure}{section}%
   \numberwithin{table}{section}%
   \numberwithin{equation}{section}%
   \newtheorem{defn}{Definition}[section]
   \newtheorem{remark}{Remark}[section]
\definecolor{blue25emph}{rgb}{0, 0, 11}
\providecommand{\emphic}[2]{%
   \textcolor{blue25emph}{%
      \textbf{\emph{#1}}}%
   \index{#2}}
\providecommand{\emphi}[1]{\emphic{#1}{#1}}
\definecolor{almostblack}{rgb}{0, 0, 0.3}
\providecommand{\emphw}[1]{{\textcolor{almostblack}{\emph{#1}}}}%
\newcommand{\atgen}{\symbol{'100}} \newcommand{\SarielThanks}[1]{%
   \thanks{%
      Department of Computer Science; University of Illinois; 201
      N. Goodwin Avenue; Urbana, IL, 61801, USA; {\tt
         sariel\atgen{}illinois.edu};
      {\tt \url{http://sarielhp.org/}.} #1
   }%
}
\newcommand{\PankajThanks}[1]{%
   \thanks{%
      Department of Computer Science, Duke University, Durham NC
      27708. %
      #1
   }%
}
\newcommand{\RahulThanks}[1]{%
   \thanks{%
      Department of Computer Science, Duke University, Durham NC
      27708. %
      #1
   }%
}
\newcommand{\StavrosThanks}[1]{%
   \thanks{%
        Department of Computer Science. University of Illinois at Chicago, Chicago IL 60607. %
      #1
   }%
}
\newcommand{\HLink}[2]{\hyperref[#2]{#1~\ref*{#2}}}
\newcommand{\HLinkSuffix}[3]{\hyperref[#2]{#1\ref*{#2}{#3}}}
\newcommand{\figlab}[1]{\label{fig:#1}}
\newcommand{\figref}[1]{\HLink{Figure}{fig:#1}}
\newcommand{\thmlab}[1]{{\label{theo:#1}}}
\newcommand{\thmref}[1]{\HLink{Theorem}{theo:#1}}
\newcommand{\corlab}[1]{\label{cor:#1}}
\newcommand{\corref}[1]{\HLink{Corollary}{cor:#1}}%
\newcommand{\seclab}[1]{\label{sec:#1}}
\newcommand{\secref}[1]{\HLink{Section}{sec:#1}}
\newcommand{\itemlab}[1]{\label{item:#1}}
\newcommand{\itemrefY}[2]{\hyperref[item:#2]{#1}}
\providecommand{\deflab}[1]{\label{def:#1}}
\newcommand{\defref}[1]{\HLink{Definition}{def:#1}}
\newcommand{\remlab}[1]{\label{rem:#1}}
\newcommand{\remref}[1]{\HLink{Remark}{rem:#1}}%
\newcommand{\lemlab}[1]{\label{lemma:#1}}
\newcommand{\lemref}[1]{\HLink{Lemma}{lemma:#1}}%
\providecommand{\eqlab}[1]{}%
\renewcommand{\eqlab}[1]{\label{equation:#1}}
\newcommand{\Eqref}[1]{\HLinkSuffix{Eq.~(}{equation:#1}{)}}
\newcommand{\remove}[1]{}%
\newcommand{\Set}[2]{\left\{ #1 \;\middle\vert\; #2 \right\}}
\newcommand{\pth}[2][\!]{\mleft({#2}\mright)}%
\newcommand{\pbrcx}[1]{\left[ {#1} \right]}%
\newcommand{\ProbLTR}{\mathbb{P}}%
\newcommand{\Prob}[1]{\mathop{\ProbLTR} \mleft[ #1 \mright]}%
\newcommand{\Ex}[2][\!]{\mathop{\mathbb{E}}#1\pbrcx{#2}}
\newcommand{\ceil}[1]{\left\lceil {#1} \right\rceil}
\newcommand{\floor}[1]{\left\lfloor {#1} \right\rfloor}
\newcommand{\cardin}[1]{\left| {#1} \right|}%
\renewcommand{\th}{th\xspace}
\newcommand{\ZZ}{\mathbb{Z}}%
\renewcommand{\Re}{\mathbb{R}}%
\newlist{compactenumA}{enumerate}{5}%
\setlist[compactenumA]{topsep=0pt,itemsep=-1ex,partopsep=1ex,parsep=1ex,%
   label=(\Alph*)}%
\newlist{compactenuma}{enumerate}{5}%
\setlist[compactenuma]{topsep=0pt,itemsep=-1ex,partopsep=1ex,parsep=1ex,%
   label=(\alph*)}%
\newlist{compactenumI}{enumerate}{5}%
\setlist[compactenumI]{topsep=0pt,itemsep=-1ex,partopsep=1ex,parsep=1ex,%
   label=(\Roman*)}%
\newlist{compactenumi}{enumerate}{5}%
\setlist[compactenumi]{topsep=0pt,itemsep=-1ex,partopsep=1ex,parsep=1ex,%
   label=(\roman*)}%
\newlist{compactitem}{itemize}{5}%
\setlist[compactitem]{topsep=0pt,itemsep=-1ex,partopsep=1ex,parsep=1ex,%
   label=\ensuremath{\bullet}}%
\providecommand{\BibLatexMode}[1]{}
\providecommand{\BibTexMode}[1]{#1}
  \renewcommand{\BibLatexMode}[1]{}
  \renewcommand{\BibTexMode}[1]{#1}
  \renewcommand{\BibLatexMode}[1]{#1}
  \renewcommand{\BibTexMode}[1]{}
\newcommand{\UsePackage}[1]{%
  \IfFileExists{styles/#1.sty}{%
      \usepackage{styles/#1}%
   }{%
      \IfFileExists{../styles/#1.sty}{%
         \usepackage{../styles/#1}%
      }{%
         \usepackage{#1}%
      }%
   }%
}
\providecommand{\Mh}[1]{#1}%
\newcommand{\ISX}[1]{\mathcalb{i}\pth{#1}} %
\newcommand{\fIS}{\mathcalb{i}^*} %
\newcommand{\fPN}{\mathcalb{p}^{*}} %
\newcommand{\PNX}[1]{\Mh{\mathcalb{p}}\pth{#1}}
\newcommand{\popt}{\Mh{\mathcalb{p}}}
\newcommand{\AlgPierce}{\AlgorithmI{Pierce}\xspace}%
\newcommand{\TPierce}{\ensuremath{T_\AlgorithmI{P}\xspace}}%
\newcommand{\Partition}{\Mh{\Pi}}
\newcommand{\Term}[1]{\textsf{#1}}
\newcommand{\LCA}{\Term{LCA}\xspace}
\newcommand{\LP}{\Term{LP}\xspace}
\newcommand{\MWU}{\Term{MWU}\xspace}
\newcommand{\ts}{\hspace{0.6pt}}
\newcommand{\rstY}[2]{#1 \sqcap  #2}%
\newcommand{\nrstY}[2]{\cardin{\rstY{#1}{#2}}}%
\renewcommand{\S}{\Mh{\mathcal{S}}}%
\renewcommand{\L}{\Mh{\mathcal{L}}}%
\newcommand{\I}{\Mh{\mathcal{I}}}
\newcommand{\projY}[2]{#1_{\downarrow #2}}%
\newcommand{\XX}{\Mh{\mathcal{X}_{\varphi}}}
\newcommand{\YY}{\Mh{\mathcal{Y}_{\varphi}}}
\newcommand{\XXz}{\Mh{\mathcal{X}_{z}}}
\newcommand{\YYz}{\Mh{\mathcal{Y}_{z}}}
\newcommand{\MX}{\Mh{\mathcal{X}}}
\newcommand{\MY}{\Mh{\mathcal{Y}}}
\newcommand{\interior}{\mathrm{int}}
\newcommand{\intX}[1]{\interior\pth{#1}}%
\newcommand{\e}{e}
\newcommand{\si}[1]{#1}
\newcommand{\N}{\Mh{\mathcal{N}}}
\newcommand{\X}{\Mh{{X}}}
\newcommand{\Q}{\Mh{{Q}}}
\renewcommand{\P}{\Mh{{P}}}
   \titleformat{\paragraph}[runin]
   {\normalfont\bfseries}
   {\theparagraph}
   {1em}
   {\addperiod}
   \newcommand{\addperiod}[1]{#1.}
   \titlespacing{\paragraph}{0pt}{2ex}{0.2cm}%
\DeclareFontFamily{U}{BOONDOX-calo}{\skewchar\font=45 }
\DeclareFontShape{U}{BOONDOX-calo}{m}{n}{
  <-> s*[1.05] BOONDOX-r-calo}{}
\DeclareFontShape{U}{BOONDOX-calo}{b}{n}{
  <-> s*[1.05] BOONDOX-b-calo}{}
\DeclareMathAlphabet{\mathcalb}{U}{BOONDOX-calo}{m}{n}
\SetMathAlphabet{\mathcalb}{bold}{U}{BOONDOX-calo}{b}{n}
\DeclareMathAlphabet{\mathbcalb}{U}{BOONDOX-calo}{b}{n}
\providecommand{\TPDF}[2]{\texorpdfstring{#1}{#2}}
\newcommand{\Dim}{\Mh{\mathcalb{d}}}%
\newcommand{\BadProb}{\Mh{\varphi}}%
\newcommand{\GroundSet}{\mathcal{X}}
\newcommand{\FGroundSet}{\mathsf{X}}
\newcommand{\Family}{\Mh{\mathcal{F}}}%
\newcommand{\range}{\Mh{\mathbf{r}}}
\newcommand{\bb}{\Mh{\mathsf{b}}}%
\newcommand{\bbA}{\Mh{\mathsf{c}}}%
\newcommand{\bbC}{\Mh{\mathsf{g}}}%
\newcommand{\bbD}{\Mh{\mathsf{e}}}%
\newcommand{\MeasureChar}{\overline{m}}
\newcommand{\Measure}[1]{\MeasureChar\pth{#1}}
\newcommand{\sMeasure}[1]{\overline{s}\pth{#1}}
\newcommand{\Grid}{G}%
\newcommand{\GridSet}{\mathcal{G}}%
\newcommand{\Arr}{\Mh{\mathop{\mathrm{\mathcal{A}}}}}
\newcommand{\ArrX}[1]{\Arr\pth{#1}}%
\newcommand{\etal}{\textit{et~al.}\xspace}
\newcommand{\Edges}{\mathsf{E}}%
\providecommand{\Bronnimann}{Br{\"o}nnimann\xspace}
\newcommand{\AlgorithmI}[1]{{%
      \textcolor[named]{AlgorithmColor}{\texttt{\bf{#1}}}%
   }}
\newcommand{\ii}{\mathcalb{i}}
\providecommand{\IntRange}[1]{\mleft\llbracket #1 \mright\rrbracket}
\newcommand{\IRX}[1]{\IntRange{#1}}%
\newcommand{\IRY}[2]{\left\llbracket #1:#2 \right\rrbracket}
\newcommand{\nQ}{\Mh{\mathcalb{n}}}%
\newcommand{\nG}{\Mh{n_\GridSet}}%
\renewcommand{\R}{\Mh{\mathcal{R}}}%
\newcommand{\RangeSet}{{\Mh{\mathcal{R}}}}
\newcommand{\rect}{\Mh{\mathsf{b}}}
\newcommand{\cell}{\varphi}%
\newcommand{\Sample}{\Mh{\mathcal{S}}}%
\newcommand{\CBOXES}{\mathcal{C}}%
\newcommand{\CBeps}{\mathcal{C}_{\eps/2}}%
\newcommand{\BBGS}{\Mh{\mathcal{B}}_\GridSet}%
\newcommand{\BBA}{\Mh{\widehat{\mathcal{C}}}}%
\newcommand{\BBB}{\Mh{{\mathcal{D}}}}%
\newcommand{\BBM}{\Mh{{\mathcal{M}}}}%
\newcommand{\Boxes}{\Mh{\mathcal{B}}}%
\newcommand{\BoxesB}{\Mh{\mathcal{C}}}%
\newcommand{\AB}{\Mh{{\mathcal{C}}}}%
\newcommand{\hBoxes}{{\Boxes_\downarrow}}
\newcommand{\ListB}{\Mh{\mathcal{S}}}%
\newcommand{\LightB}{\Mh{\mathcal{L}}}%
\newcommand{\HeavyB}{\Mh{\mathcal{H}}}%
\newcommand{\IndepB}{\Mh{\mathcal{I}}}%
\newcommand{\Crates}{\Mh{\mathcal{T}}}%
\newcommand{\lgEps}{\Mh{L}}%
\newcommand{\nMass}{\Mh{\mathcalb{w}}}%
\newcommand{\DS}{\Mh{\EuScript{D}}}%
\newcommand{\nRounds}{\Mh{\zeta}}%
\newcommand{\polylog}{\mathop{\mathrm{polylog}}}%
\newcommand{\VV}{\Mh{\mathcal{V}}}%
\newcommand{\VX}[1]{\VV\pth{#1}}
\newcommand{\wC}{\Mh{\omega}}
\newcommand{\wX}[1]{\wC\pth{#1}}%
\newcommand{\hwC}{\Mh{\mathcalb{w}}}
\newcommand{\hwY}[2]{\hwC^{}_{\!#2}\pth{#1}}
\newcommand{\mm}{\mathcalb{m}\!\ts}%
\newcommand{\Tree}{\Mh{T}}%
\newcommand{\BS}{\mathcal{B}}
\newcommand{\BSB}{\mathcal{C}}
\newcommand{\BSC}{\mathcal{D}}
\newcommand{\CFaceX}[1]{\mathcal{F}(#1)}%
\renewcommand{\Re}{\mathbb{R}}%
\newcommand{\Ps}{\mathcal{P}}
\newcommand{\VSetX}[1]{\mathcal{V}^*\pth{#1}}
\newcommand{\dY}[2]{\left\| #1 - #2 \right\|}
\newcommand{\codim}{\mathrm{codim}}%
\newcommand{\poly}{\mathrm{poly}}%
\newcommand{\cenX}[1]{\overline{\mathsf{c}}\pth{#1}}
\newcommand{\anc}{\mathrm{Anc}}
\providecommand{\tcite}[1]{\cite{#1}}
\title{Fast Approximation Algorithms for Piercing Boxes by Points%
   \thanks{A full version of this paper is available on the \arXiv
      \cite{ahrs-faapb-23}. A preliminary version of this paper
      appeared in SODA 2024 \cite{ahrs-faapb-24}.}%
}
\author{%
   Pankaj K. Agarwal%
   \PankajThanks{Work by Pankaj Agarwal was partially supported by NSF
      grants \si{IIS-18-14493}, CCF-20-07556, and CCF-22-23870.}%
   \and%
   Sariel Har-Peled%
   \SarielThanks{Work by Sariel Har-Peled was partially supported by a NSF
      AF award CCF-1907400 and CCF-2317241.  }%
   \and%
   Rahul Raychaudhury%
   \RahulThanks{}%
   \and%
   Stavros Sintos%
   \StavrosThanks{}%
}%
\begin{document}
\date{\today}%
\maketitle

\begin{abstract}
    Let $\Boxes=\{\bb_1, \ldots ,\bb_n\}$ be a set of $n$ axis-aligned
    boxes in $\Re^d$ where $d\geq2$ is a constant. The \emph{piercing
       problem} is to compute a smallest set of points
    $\N \subset \Re^d$ that hits every box in $\Boxes$, i.e.,
    $\N\cap \bb_i\neq \emptyset$, for $i=1,\ldots, n$. Let
    $\popt=\popt(\Boxes)$, the \emph{piercing number} be the minimum
    size of a piercing set of $\Boxes$. We present a randomized
    $O(d^2\log\log \popt)$-approximation algorithm with expected
    running time $O(n^{d/2}\polylog n)$. Next, we present a faster
    $O(n^{\log d+1})$-time algorithm but with a slightly inferior
    approximation factor of $O(2^{4d}\log\log\popt)$.
    The running time of both  algorithms can
    be improved to near-linear using a sampling-based technique, if
    $\popt = O(n^{1/d})$.

    For the dynamic version of the problem in the plane, we obtain a
    randomized $O(\log\log\popt)$-approximation algorithm with
    $O(n^{1/2}\polylog n )$ amortized expected update time for
    insertion or deletion of boxes. For squares in $\Re^2$, the update
    time can be improved to $O(n^{1/3}\polylog n )$.

\end{abstract}

\SODAVer{%
  \fancyfoot[R]{\scriptsize{Copyright \textcopyright\ 2024 by SIAM\\
  Unauthorized reproduction of this article is prohibited}}
}

\section{Introduction}

\paragraph{Problem statement}

A \emphi{box} is an axis-aligned box in $\Re^d$ of the form
$\prod_{i=1}^d [\alpha_i, \beta_i]$. A one dimensional box is an
\emphi{interval}, and a two dimensional box is a \emphi{rectangle}.
Let $\Boxes=\{\bb_1, \ldots ,\bb_n\}$ be a set of $n$ boxes in
$\Re^d$. A subset $\N\subset \Re^d$ is a \emphw{piercing set} of
$\Boxes$ if $\N\cap R\neq \emptyset$ for every box $\bb\in
\Boxes$. The \emph{piercing problem} asks to find a piercing set $\N$
of $\Boxes$ of the smallest size, and the \emphi{piercing number} of
$\Boxes$ is denoted by $\popt=\PNX{\Boxes}$.
The piercing problem is a
fundamental problem in computational geometry and has applications in
facility location, sensor networks, etc.

The piercing problem is closely related to the classical
\emphw{geometric hitting-set} problem: Given a (geometric) range space
$\Sigma=(\X, \RangeSet)$, where $\X$ is a (finite or infinite) set of
points in $\Re^d$ and $\RangeSet \subseteq 2^\X$ is a finite family of
ranges, defined by simply-shaped regions such as rectangles, balls,
hyperplanes etc. That is, each range in $\Sigma$ is of the form
$R\cap \X$, where $R\in\RangeSet$ is a geometric
shape\footnote{Strictly speaking,
   $\Sigma=(\X,\Set{R\cap\X}{ R\in \RangeSet})$ but with a slight
   abuse of notation we use $(\X,\RangeSet)$ to denote the range
   space.}. A subset $H\subseteq \X$ is a \emph{hitting set} of
$\Sigma$ if $H\cap R\neq \emptyset$ for all $R \in \RangeSet$. The
hitting-set problem asks for computing a minimum-size hitting set of
$\Sigma$ and the smallest size is referred to as the \emph{stabbing
   number} of $\Sigma$, denoted by $\popt(\X,\RangeSet)$. The piercing
problem is a special case of the hitting-set problem in which
$\Sigma=(\Re^d,\Boxes)$. Instead of letting the set of points be the
entire $\Re^d$, we can choose the set of points to be the set of
vertices in $\ArrX{\Boxes}$, the arrangement of $\Boxes$\footnote{The
   arrangement of $\Boxes$, denoted by $\ArrX{\Boxes}$, is the
   partition of $\Re^d$ into maximal connected cells so that all
   points within each cell are in the interior/boundary of the same
   set of rectangles. It is well known that $\ArrX{\Boxes}$ has
   $O(n^d)$ complexity~\cite{e-acg-87}.}, and the
range space is now
$\Sigma=\bigl(\VV,\Set{ \bb \cap \VV }{ \bb\in\Boxes } \bigr)$, where
$\VV=\VX{\Boxes}$ is the set of vertices in $\ArrX{\Boxes}$. It is
easily seen that $\Boxes$ has a piercing set of size $\popt$ if and
only if $\Sigma$ has a hitting set of size $\popt$. Hitting set for
general range spaces was listed as one of the original NP-complete
problems \cite{k-racp-72}. Furthermore, the box piercing problem is
NP-Complete even in 2D \cite{fpt-opcpa-81}, so our goal is to develop
an efficient approximation algorithm for the piercing problem.

In many applications, especially those dealing with large data sets,
simply a polynomial-time algorithm is not enough, and one desires an
algorithm whose running time is near-linear in $|\Boxes|$. In
principle, the classical greedy algorithm can be applied to the range
space $(\VV,\Boxes)$, but $|\VV|=O(n^d)$, so it does not lead to a
fast algorithm. Intuitively, due to the unconstrained choice of points
with which to pierce boxes of $\Boxes$, the piercing problem seems
easier than the geometric hitting-set problem and should admit faster
and better approximations. In this paper, we make progress towards
this goal for a set of boxes in both static and dynamic settings.

\paragraph{Related work}

The well-known shifting technique by Hochbaum and Maass
\cite{hm-ascpp-85} can be used to obtain a PTAS when $\Boxes$
comprises of unit-squares or near-equal-sized fat objects in any fixed
dimension. Efrat \etal \cite{ekns-ddsfo-00} designed an
$O(1)$-approximation algorithm for a set of arbitrary ``fat'' objects
that runs in near-linear time in 2d and 3d. Chan \cite{c-paspp-03}
gave a separator-based PTAS for arbitrary sized fat objects, with
running time $O(n^{O(1/\eps^{d})})$. Chan and
Mahmood~\cite{cm-apnur-05} later gave a PTAS for a set of boxes with
arbitrary width but unit height. All of the above results consider a
restricted setting of boxes. Surprisingly, little is known about the
piercing problem for a set of arbitrary axis-aligned boxes in
$\Re^d$. By running a greedy algorithm or its variants based on a
multiplicative weight update (\MWU) method, an
$O(\log\popt)$-approximation algorithm with running time roughly
$O( n^d )$ can be obtained for the box-piercing problem in
$\Re^d$. Using the weak $\eps$-net result by Ezra \cite{e-nawea-10}
(see also \cite{aes-ssena-10}), the approximation factor improves to
$O(\log\log \popt)$. An interesting question is what is the smallest
piercing set one can find in near-linear time. Nielsen
\cite{n-fsbhd-00} presented an $O(\log^{d-1} \popt)$-approximation
divide-and-conquer algorithm that runs in $O(n\log^{d-1} n)$ time. We
are unaware of any near-linear time algorithm even with
$O(\log \popt)$-approximation ratio for $d\geq 3$.  We note that the
piercing problem has also been studied in discrete and convex
geometry, where the goal is to bound the size of the piercing set for
a family of objects with certain properties. See e.g. \cite{ak-pcs-92,
   csz-pab-18}.

There has been much work on the geometric hitting-set problem. For a
range space $\Sigma=(\X,\RangeSet)$ and weight function
$\wC: \X \xrightarrow{}\Re_{\geq 0}$, a subset $\N\subseteq \X$ is an
$\eps$-net if for any $R \in \RangeSet$ with
$\wX{R}\geq \eps \wX{\X}$, we have $R \cap \N \neq \emptyset$. The
multiplicative weight update (\MWU) method assigns a weight to each
point so that every range in $\RangeSet$ becomes
``${1}/{e \popt}$-heavy'' and then one simply chooses a
${1}/{e \popt}$-net.  Using the \MWU method and results on
$\eps$-nets, \Bronnimann and Goodrich \cite{bg-aoscf-95} presented a
polynomial-time $O(\log\popt)$-approximation algorithm for the
hitting-set problem for range spaces with finite VC-dimension
\cite{vc-ucrfe-71}. Later Agarwal and Pan \cite{ap-naghs-20} presented
a more efficient implementation of the \MWU method for geometric range
spaces. Their approach led to
$O\bigl((|X|+|\RangeSet|)\polylog n \bigr)$ algorithm with
$O(\log \popt)$-approximation for many cases including a set of
rectangles in $\Re^d$ (see also \cite{bmr-peagh-18, ch-faags-20}). But
it does not lead to a near-linear time algorithm for the piercing
problem because $|X|=O(n^d)$ in this case. In another line of work,
polynomial-time approximation algorithms for hitting sets based on
local search have also been proposed \cite{mr-irghs-10}.

The \MWU algorithm essentially solves and rounds the \LP associated
with the hitting-set problem, see \cite[Chapter 6]{h-gaa-11}. Thus,
the approximability of the problem is strongly connected to the
integrality gap of the \LP. For the hitting-set problem of points with
boxes for $d\leq 3$, Aronov \etal \cite{aes-ssena-10} showed a
rounding scheme with integrality gap $O( \log \log
\popt)$. Furthermore, Ezra \cite{e-nawea-10} showed the same gap holds
in higher dimensions if one is allowed to use any point to do the
piercing. Surprisingly, Pach and Tardos \cite{pt-tlbse-11} showed that
this integrality gap is tight. While a better approximation than the
integrality gap can be obtained in a few cases \cite{ch-aamis-12,
   mr-irghs-10}, these algorithms require a fundamentally different
approach. Thus, a major open problem is to obtain an
$O(1)$-approximation algorithm for the box-piercing problem.

Recently, Agarwal \etal \cite{acsxx-dgsch-22} initiated a study of
dynamic algorithms for geometric instances of set-cover and
hitting-set. Here the focus is on maintaining an approximately optimal
hitting-set (resp. set-cover) of a dynamically evolving instance,
where in each step a new object may be added or deleted. They
introduced fully dynamic sublinear time hitting-set algorithms for
squares and intervals.  These results were improved and generalized in
\cite{chsx-dgscr-22,ch-mddsg-21}.  Khan \etal \cite{klssw-odags-23}
proposed a dynamic data structure for maintaining a
$O(\polylog n)$-factor approximation of the optimal hitting-set for
boxes under restricted settings, but no algorithm with sublinear
update time for the general setting is known.

\paragraph{Dependency on dimension}

Bounds stated with $O( \cdot)$ hides constants that depend on the
dimension $d$, and this dependency is almost always exponential (or
even doubly exponential). Where the exact dependency on the dimension
matters, we explicitly state it in the bound.

\paragraph{Our results}

In this paper, we design $O(\log\log\popt)$ approximation algorithms
for the box-piercing problem.  A naive way to get a
$O( 2^{O(d)} \log\log \popt )$-approximation is by using
aforementioned \MWU \cite{ap-naghs-20, bg-aoscf-95} based hitting-set
algorithms on the range space $(\VV,\Boxes)$, and use Ezra's
\cite{e-nawea-10} algorithm for computing a weak $\eps$-net instead of
computing a strong $\eps$-net.  While this naive approach gives the
desired approximation, it runs in $\Omega(n^d)$ time. We present
significantly faster $O(\log\log\popt)$ approximation algorithms.

\subparagraph{(A) New constructions of (weak) $\eps$-nets.}

We revisit Ezra's result \cite{e-nawea-10} on weak $\eps$-net, and
improve its dependency on the dimension, and in the process simplify
it, see \secref{weak-net}.  In particular, given a set $\P$ of $n$
points in $\Re^d$, a weight function $w:\P\xrightarrow{}\Re_{\geq 0}$,
and a parameter $\eps\in (0,1)$, the new algorithm computes a weak
$\eps$-net $\N\subseteq \Re^d$ of $O(d^2 \eps^{-1}\log\log \eps^{-1})$
points (not necessarily a subset of $\P$) in
$O( n+\eps^{-1} \log^{O(d^2)} \eps^{-1} )$ expected time, such that
$\bb\cap \N\neq\emptyset$ for all boxes $\bb$ with
$\wX{\bb \cap \P} \geq \eps \wC(\P)$. Unlike Ezra's construction,
which has exponential dependency on the dimension, our has only $d^2$
dependency on the dimension.  The running time can be improved to
$O( (n+m +\eps^{-1}) \log^{d} \eps^{-1})$ if the net is constructed
for a given set of $m$ boxes, which is the case in our settings. Our
technique also gives an efficient algorithm for constructing an
$\eps$-net of size $O(\eps^{-1}\log\log \eps^{-1})$ for rectangles in
$d=2,3$, which is somewhat simpler than the one in
\cite{aes-ssena-10}.

Building on the earlier work
\cite{aes-ssena-10,e-nawea-10,pt-tlbse-11}, our work brings the key
insights of the results to the forefront. Our main new ingredient is
using poly-logarithmic different grids to ``capture'' the distribution
of the point sets. This idea appears in the work of Pach and Tardos
\cite{pt-tlbse-11} in the construction of the lower-bound, so it is
not surprising that it is useful in the construction itself (i.e.,
upper-bound).

\subparagraph{(B) A new \MWU-based $O(d^2\log\log\popt)$-approximation
   algorithm.}

We present two algorithms in \secref{mwu}. The first
is essentially the one by Agarwal-Pan that computes a hitting set of
the range space $(\VV,\Boxes)$ of size $O(d^2\popt \log\log
\popt)$. We show that it can be implemented in
$O(n^{(d+1)/2}\log ^3 n)$ expected time (\thmref{main:1}). To achieve
the desired running time, we need a data structure to perform all the
required operations on $\VV$ without ever explicitly constructing
it. In \secref{DS}, we present such a data structure, which exploits
the properties of the partitioning technique by Overmars and Yap
\cite{oy-nubkm-91}.

Our main result, \thmref{pierce:alg}, is a different and faster \MWU
algorithm tailored for boxes with expected running time
$O(n^{d/2}\log^{2d+3} n)$.  It exploits the duality between the
piercing-set problem and the independent-set problem, along with fast
approximation algorithms for these two problems. The basic idea is to
use a fast approximation algorithm to find a large set of independent
boxes among the light boxes identified by the \MWU algorithm in a
round. If the algorithm does not find such an independent set, then we
can use a simple piercing-set algorithm to compute a desired piercing
set. Otherwise we double the weight of the boxes in the independent
set. This idea dramatically reduces the number of rounds performed by
the algorithm from $O( \popt \polylog n)$ to $O( \polylog n)$.  The
idea of duality and approximation to speedup the \MWU is critical to
get a near linear running time in the plane. As far as we are aware,
the idea of quickly rounding the dual problem and using it to speedup
the primal algorithm, in the context of
set-cover/hitting-set/piercing-set for rectangles, was not used
before. This algorithm also relies on the data structure described in
\secref{DS}.

\subparagraph{(C) A Faster $O(2^{4d}\log\log\popt)$-approximation
   algorithm.}

In \secref{candidate}, we present an $O(n^{\log d+1})$-time
$O(2^{4d}\log\log\popt)$-approximation algorithm for the box-piercing
problem. Let $\X\subseteq\VV$ be the set of vertices that lie on the
boundary of the intersection of at most $1+\floor{\log d}$ boxes of
$\Boxes$. $|\X|=O(n^{\log d +1})$.  We show that for every point
$v \in \VV$, there exists a point $x \in \GroundSet$ such that
$|x \cap \Boxes| = |v \cap \Boxes|/2^{4d}$. Using
relative-approximations \cite{hs-rag-11} and the duality of piercing
and independent sets, we show that the range space $(\X,\Boxes)$ has a
hitting set of size $\kappa = O(2^{4d} \popt)$. We compute a
fractional hitting set of $(\X,\Boxes)$ using the algorithm by
\cite{ap-naghs-20} and then compute a weak-net as in the previous
algorithm. This leads to an $O(n^{\log d + 1})$-time algorithm with an
approximation ratio of $O(2^{4d} \log \log \popt)$.

\subparagraph{(D) Piercing, clustering, and multi-round algorithms.}

In \secref{iteralg}, we show that the natural algorithm of first
computing a piercing set $\P_0$ for a random sample of input boxes,
and then piercing the input boxes that are not pierced by $\P_0$ leads
to an efficient piercing algorithm. This algorithm can be extended to
run an arbitrary number of rounds. For clustering this idea was
described by Har-Peled \cite{h-cm-04} (but the idea is much older see
e.g. \cite{grs-cecal-01}). In particular, if the algorithm runs for
$\nRounds$ rounds, then it computes piercing sets $\nRounds$ times for
sets of boxes of size (roughly) $\popt^{1-1/\nRounds}n^{1/\nRounds}$
and returns a piercing set of size at most $\nRounds\cdot \kappa$,
where $\kappa$ is the size of the maximum piercing set returned in
each round. If we use the algorithm in \secref{mwu} we obtain an
$O( d^3 \log \log \popt)$ approximation algorithm that runs in
$n\log^{O(d)} n$ expected time, provided that
$\popt = O(n^{1/(d-1)})$.  See \thmref{main:2} and
\corref{near:linear:time}.  For $d=2$, a slightly more careful
analysis leads to the striking result that a piercing set of size
$O( \popt \log \log \popt)$ can be computed in $O( n \log \popt)$
expected running time provided that $\popt = O( n/ \log^{15} n)$, see
\corref{2d:linear:time}.  On the other hand, if we use the algorithm
in \secref{candidate}, we obtain $2^{O(d)} \log \log \popt$
approximation algorithm that runs in $n\log^{O(d)}n$ expected time,
provided that $\popt = O(n^{1/(2\ceil{\log d}+1)})$. See
\thmref{faster:bad:approx}, and \corref{near:linear:b:aprx}.

\subparagraph{(E) Dynamic Algorithms for Piercing.}  In
\secref{dynamic}, we consider the piercing problem for a set $\Boxes$
of boxes in $\Re^2$ in the \emph{dynamic} setting, i.e., at each step
a new box is inserted into or deleted from $\Boxes$.  Our goal is to
maintain an (approximately) optimal solution of the current set.  We
implement a dynamic version of the multi-round sampling based
algorithm, and attain a randomized Monte Carlo
$O(\log\log\popt)$-approximation algorithm with
$O(n^{1/2}\polylog n )$ amortized expected update time. The update
time improves to $O(n^{1/3}\polylog n )$ if $\Boxes$ is a set of
squares in $\Re^2$.  In principle, our approach extends to higher
dimensions but currently we face a few technical hurdles in its
efficient implementation (\secref{dynamic}).

\paragraph{Recent developments}

Since the initial publication of our results, Bhore and Chan
\cite{bc-fsdaa-25} have proposed a faster algorithm for the piercing
problem.
In particular, for a parameter $ \delta > 0 $, they present an
$O( n^{1+\delta})$-time algorithm that achieves an approximation
factor of
\begin{math}
   O\pth{\log \log \popt },
\end{math}
where the hidden constant factors in the approximation ratio depend
polynomially on $\delta^{-1}$ and exponentially on $d$ ($d^{O(d)}$ to
be precise).  In contrast, the approximation ratio of our algorithms
has a dependency of $d^2$ on the dimension.  Compared with
\thmref{main:2}, their algorithm is deterministic and faster when the
optimal piercing set size $\popt = \Omega(n^{1/{d+1}})$ (or
$\popt = \Omega(n^{1/(2{\ceil{\log_2 d} + 1})})$ if we allow
exponential dependency on $d$), while ours is randomized and has
expected time $O(n \log n)$ for $d=2$ and
$\popt= O( n /\log^{O(1)} n)$, see \corref{2d:linear:time}. For $d >2$
and $\popt =O( n^{1/(d-1)})$, the running time of our algorithm
increases to $O(n \log^{O(d)} n)$, see \corref{near:linear:time}.

Our initial motivation was to handle the case where the piercing
number is small, say, a constant or $O( \log n)$, which is often the
case in real-world applications, in which case our algorithm is faster
while having a better dependency on the dimension in the approximation
factor. They also present a faster dynamic algorithm with
$n^{O(\delta)}$ update time, which is significantly faster than ours,
but with the same worse dependency on the dimension. Their work
contains other problems for which they achieve significant speedup.

\paragraph{Paper organization}

The improved weak $\eps$-net construction is described in
\secref{weak-net}. The new \MWU algorithms are presented in
\secref{mwu}. The surprisingly faster $n^{O(\log d)}$ algorithm is
presented in \secref{candidate}. The multi-round speedup is presented
in \secref{iteralg}. The dynamic algorithm in the plane is presented
in \secref{dynamic}.
The data-structure enabling all of the above is presented in
\secref{DS}. Some  conclusions are drawn in \secref{conclusions}.

\section{Weak \TPDF{$\eps$}{eps}-net for boxes}
\seclab{weak-net}

Let $\P$ be a set of $n$ points in $\Re^d$,
$w:\P\rightarrow \Re_{\geq 0}$ a weight function, and $\eps \in (0,1)$
a parameter. We describe an algorithm for choosing a set
$\N \subset \Re^d$ of $O( \tfrac{1}{\eps} \log \log \tfrac{1}{\eps} )$
points, such that for any box $\bb$ in $\Re^d$ with
$w(\bb \cap \P) \geq \eps w(\P)$, we have $\bb \cap \N\neq
\emptyset$. For $d\leq 3$, the set $\N$ is an $\eps$-net, as
$\N \subseteq \P$. For $d > 3$ the set $\N$ is not necessarily a
subset of $\P$, and is thus a \emph{weak} $\eps$-net. Our overall
approach is inspired by previous work \cite{aes-ssena-10,e-nawea-10},
but it is somewhat simpler.  In particular, unlike previous work, the
constant in the size of the computed net is polynomial in $d$.  The
algorithm works in three stages:
\begin{compactenumI}
    \medskip%
    \item It reduces the problem to computing a piercing set of a set
    $\X$ on the integral grid, where $\X$ has
    $\nQ=O(\tfrac{1}{\eps}\log \tfrac{1}{\eps})$ points. Thus, it
    suffices to compute a piercing set of a family $\BBB$ of
    $O(\nQ^{2d})$ canonical boxes.

    \medskip%
    \item Next, it chooses a set of $O(\tfrac{1}{\eps})$ points that pierces
    all the heavy boxes of $\BBB$, except for a subset
    $\CBeps\subseteq \BBB$ of size
    $O( \tfrac{1}{\eps}\log^{O(d^2)}\tfrac{1}{\eps} )$.

    \medskip%
    \item Finally, the algorithm chooses a set of
    $O(\frac{d^2}{\eps}\log\log \frac{d}{\eps})$ points that stab all
    boxes of $\CBeps$.
\end{compactenumI}
\smallskip%
Merging the sets results in the desired piercing set.
The expected running time of the algorithm is
$O(n+\tfrac{1}{\eps}\log^{O(d^2)}\tfrac{1}{\eps})$. We show how the
algorithm can be adapted so that the piercing set is a subset of $\P$
for $d=2,3$. Finally, if the goal is to pierce a given set of $m$
$\eps$-heavy boxes, the running time can be improved to
$O(m \log^{d} \tfrac{1}{\eps})$.

\subsection{Reducing the number of points}
\seclab{reducing}

For the sake of completeness, we provide a simple direct proof of the
following known result.

\begin{lemma}
    \lemlab{dir}
    Let $\P$ be a set of $n$ points in $\Re^d$,
    $\wC:\P\xrightarrow{} \Re_{\geq 0}$ be a weight function,
    $\eps \in (0,1/2)$ be a parameter, and let $\Q \subseteq \P$ be a
    random subset of size
    $\nQ = c \tfrac{d}{\eps}\log\tfrac{d}{\eps}$, where $c$ is an
    appropriate constant independent of $d$. Sampling is done with
    repetition, and the probability of a point to be chosen is
    proportional to its weight. Then, for all boxes $\bb$ with
    $\wX{\bb \cap \P} \geq \eps \wX{\P}$, we have
    $|\bb \cap \Q | \geq (\eps/2)\nQ$, with probability at least
    $1- \eps^{O(d)}$.
\end{lemma}
\begin{proof}
    For simplicity, we prove this lemma for the unweighted case. It
    easily extends to the weighted case.  Let $H_i$ be a minimal set
    of hyperplanes orthogonal to the $i$\th axis, such that there are
    at most $\zeta = \eps n / (16 d)$ points of $\P$ between two
    consecutive hyperplanes of $H_i$. The region lying between two
    adjacent hyperplanes of $H_i$ is a \emphi{slab}. Set
    $\Grid=\bigcup\nolimits_{i=1}^d H_i$. Observe that
    $\cardin{\Grid}=O(d^2/\eps)$, where the constant hiding in the
    $O$-notation is independent of $d$.  A box that all its $2d$
    facets lie on the hyperplanes of $\Grid$ is a \emph{snapped} box.

    Given a box $\bb$ that contains at least $ \eps n$ points of $\P$,
    let $\bb_\Grid$ be the largest snapped box that is contained in
    $\bb$, i.e. $\bb_\Grid$ is obtained by shrinking $\bb$ so that its
    facets lie on hyperplanes of $\Grid$. Since
    $\bb \setminus \bb_\Grid$ can be covered by $2d$ slabs of $\Grid$,
    and each slab contains at most $\frac{\eps n}{16d}$ points of
    $\P$, it follows that
    \begin{equation*}
        0
        \leq
        \cardin{ \bb \cap \P } - \cardin{ \bb_\Grid \cap \P }
        \leq
        \frac{\eps}{8} n.
    \end{equation*}

    Let $\Boxes$ be the set of all $\frac{7\eps}{8}$-heavy snapped
    boxes. Observe that
    \begin{equation*}
        \cardin{\Boxes} \leq \prod_{i=1}^d \cardin{H_i}^2 \leq
        (n/\zeta)^{2d} \leq (32d/\eps)^{2d}.
    \end{equation*}
    Let $\N$ be a random sample of $\P$ with repetition of
    $\nQ = c \tfrac{d}{\eps} \log \tfrac{d}{\eps}$ points for some
    constant $c >1$. (Technically $\N$ is a multiset, but for
    simplicity assume it to be a set.)

    Consider any box $\bb_\Grid \in \Boxes$, and let
    $\alpha = \frac{|\bb_\Grid \cap \P|}{\eps n}\geq \frac{7}{8}$. Let
    $Y = \cardin{\N \cap \bb_\Grid}$.  The variable $Y$ is a sum of
    $\nQ$ random binary variables, each being $1$ with probability
    $\frac{|\bb_{\Grid}\cap\P|}{n}=\alpha\eps$. Therefore,
    \begin{align*}
      \mu = \Ex{Y} = \alpha\eps \nQ\geq\frac{7}{8}\eps \nQ
    \end{align*}
    Using Chernoff inequality\footnote{Specifically, for any
       $\delta \in (0,1)$,
       $\Prob{Y < (1-\delta)\Ex{Y}\bigr.} \leq \exp\pth{ -(\delta^2/2)
          \Ex{Y}\bigr.}$.}, we have
    \begin{align*}
      \gamma
      &=
        \Prob{ Y \leq \frac{\eps}{2} \nQ}
        \leq
        \Prob{ Y \leq (1-\tfrac{3}{7})\frac{7}{8}\eps \nQ}
      \leq%
      \Prob{ Y \leq (1-\tfrac{3}{7})\mu}
      \leq%
      \exp\pth{ -\mu \frac{(3/7)^2}{2}}
      \\&
      \leq
      \exp\pth{ -\frac{7}{8}\cdot \frac{(3/7)^2}{2} \cdot \eps \nQ}
      \leq %
      \exp\pth{ -c \frac{ d}{14} \log \frac{d}{\eps}}
      \leq
      (\eps/d)^{cd/14}.
    \end{align*}

    Hence,  the probability of a box in $\Boxes$
    containing less then $\frac{\eps}{2}\nQ$ points of $\N$ is at most
    \begin{equation*}
        |\Boxes|\gamma
        \leq%
        \Bigl(\frac{32d}{\eps}\Bigr)^{2d}
        \Bigl(\frac{\eps}{d}\Bigr)^{{cd}/{14}}
        <
        \eps^{O(d)},
    \end{equation*}
    for $c = O(1)$, as $\eps \leq 1/2$. Hence, with probability at
    least $1-\eps^{-O(d)}$, $\N$ has the desired property.~~
\end{proof}

\begin{remark}
    \lemref{dir} provides us with a stronger result than just an
    $\eps$-net, it shows that a sample of size
    $\Bigl.\Theta( \tfrac{d}{\eps} \log \tfrac{d}{\eps})$ has the
    property that any $\eps$-heavy box contains at least
    $\Theta( \log \tfrac{1}{\eps})$ points of the sample. Relative
    approximations \cite{hs-rag-11} provide even stronger guarantees,
    which are not needed here.
\end{remark}

Let $\Q\subseteq\P$ be a set of $\nQ$ points as in \lemref{dir}. For simplicity,
assume that all the points of $\Q$ have unique coordinates. In
particular, for a point $p \in \Q$, let
$G(p) = (i_1, i_2, \ldots, i_d)$ be the integral point, where $i_j$ is
the rank of the $j$\th coordinate of $p$ among the $\nQ$ values of the
points of $\Q$ in this coordinate. Consider the ``\si{gridified}''
point set
\begin{equation*}
    \X = G(\Q) = \Set{ G(p)}{p \in \Q}
    \subseteq%
    \IRX{\nQ}^d.
\end{equation*}

\begin{defn}
    A box $\bb$ is \emphi{canonical} if it is of the form
    $\prod_{i=1}^d [\ell_i,\ell_i']$, where
    $\ell_i, \ell_i' \in \IRX{\nQ} = \{1,\ldots, \nQ\}$,
    $\ell_i \leq \ell_i'$, and each of its facets contains one point
    of $\X$.  Let $\CBOXES$ be the set of all canonical boxes. Observe
    that $\cardin{\CBOXES} \leq\nQ^{2d}$.
\end{defn}

Clearly, it is sufficient to construct a piercing set for all the
canonical boxes that are $(\eps/2)$-heavy with respect to $\X$.

\subsection{Reducing the number of canonical boxes}
\seclab{heavyCanBox}

We next construct a piercing set $\N_1$, of size $O(\tfrac{d}{\eps})$,
such that the residual set $\CBeps \subseteq \CBOXES$, of all boxes
not stabbed by $\N_1$, is of size at most
$\tfrac{1}{\eps}\log^{O(1)}\tfrac{1}{\eps}$ (the constant hiding in
the $O$-notation depends on $d$).  Assume that
\begin{equation}
    \nQ=2^{\lgEps} = \alpha \frac{d}{\eps} \log \frac{d}{\eps},%
    \qquad\text{for some integer}
    \qquad%
    \lgEps = \log \nQ
    =%
    O\Bigl(  \log  \frac{d}{\eps}  \Bigr),
    \eqlab{lg:eps}
\end{equation}
where $\alpha \geq 1$ is some appropriate constant.

\begin{defn}
    \deflab{canonical:boxes}%
    Let $\IRY{-1}{\lgEps} = \{-1,0,\ldots, \lgEps\}$, see
    \Eqref{lg:eps}. Given
    $\ii = (i_1, \ldots, i_d) \in \IRY{-1}{\lgEps\,}^d$, we associate
    with it the box
    $\bb_\ii = [0,2^{i_1}] \times [0, 2^{i_2}] \times \ldots \times
    [0,2^{i_d}]$. Such a box $\bb_\ii$ naturally tiles the bounding
    cube $[0,\nQ]^d \supset \X$ into a grid, and let $\Grid_\ii$
    denote this grid. The associated set of grids is
    \begin{equation*}
        \GridSet=\Set{\Grid_\ii}{ \ii\in \smash{\IRY{-1}{\lgEps}^d}},
    \end{equation*}
    and $\BBGS$ be the set of all boxes that appear as grid cells in
    any grid of $\GridSet$.  Observe that
    \begin{equation*}
        \nG
        =%
        |\GridSet| = (\lgEps+1)^d = O\bigl( \log^d \tfrac{d}{\eps}
        \bigr).
    \end{equation*}
    For a box $\bb = \prod_{i=1}^d [\ell_i, \ell_i']$ its
    \emphi{center} is the point
    $\cenX{\bb} = \bigl( (\ell_1+\ell_1')/2, \ldots,
    (\ell_d+\ell_d')/2 \bigr)$. For a canonical box $\bb$, a box
    $\bbC \in \BBGS$, such that $\bb \subseteq \bbC$ and
    $\cenX{\bbC} \in \bb$ is a \emphi{frame} of $\bb$.
\end{defn}

\begin{lemma}
    \lemlab{frame}%
    For any canonical box $\bb \in \CBOXES$ that is $\eps/2$-heavy
    with respect to $\X$, there exists a box $\bbD \in \BBGS$ that is
    a frame of $\bb$.
\end{lemma}
\begin{proof}
    For a box $\bbA$, a \emphw{halving hyperplane} is a hyperplane
    passing through the center point $\cenX{\bbA}$ that is orthogonal
    to one of the axes.

    Let $\bbA_0 = [0,\nQ]^d$ be the initial cell, with
    $\bb\subseteq \bbA_0$. For $i >0$, in the $i$\th iteration, if
    there is a halving plane $h_i$ of $\bbA_{i-1}$ that does not
    intersect $\bb$, then we split $\bbA_{i-1}$ by $h_i$ into two
    equal boxes, and set $\bbA_i$ to be the half that contains
    $\bb$. We continue this process until all halving hyperplanes of
    $\bbA_i$ intersect $\bb$. Clearly, the resulting box $\bbD$
    contains $\bb$, $\bbD$ is a box of $\BBGS$, and the center point
    of $\bbD$ is contained in $\bb$, since all halving hyperplanes of
    $\bbD$ intersect $\bb$.
\end{proof}

A cell $\bbA \in \BBGS$ is \emphi{massive} if it contains at least
\begin{equation*}
    \nMass = (3 \lgEps)^{d+2}
    =
    2^{c_2 d} \log^{d+2} \frac{d}{\eps}
\end{equation*}
points of $\X$ in its interior, where $c_2$ is some constant
independent of $d$, see \Eqref{lg:eps}.

\begin{lemma}
    \lemlab{massive-bound}%
    The set $\BBGS$ contains at most
    \begin{math}
        c'd / \eps
    \end{math}
    massive boxes, where $c'$ is a constant independent of $d$.
\end{lemma}
\begin{proof}
    Fix a grid $\Grid$ out of the $\gamma = (2+\lgEps)^d$ grids of
    $\GridSet$ (see \defref{canonical:boxes}). Such a grid can contain
    at most
    \begin{equation*}
        \gamma' =
        \frac{\cardin{\X}}{ \nMass}%
        =%
        \frac{\nQ}{\nMass}%
        =%
        \frac{\nQ}{(3 \lgEps)^{d+2}}
    \end{equation*}
    massive cells. Since there are $\gamma$ different grids, it
    follows that the total number of massive cells is at most
    \begin{math}
        \gamma \gamma' \leq \nQ / (3\lgEps)^2
        \leq
        c' d/\eps,
    \end{math}
    where $c'$ is a constant that is independent of $d$, see
    \Eqref{lg:eps}.
\end{proof}

\begin{observation}
    All canonical boxes, whose frames are massive, are stabbed by the
    centers of these massive boxes.
\end{observation}

\begin{lemma}
    \lemlab{b:b:a:size}%
    Let $\CBeps\subseteq \CBOXES$ be the set of canonical boxes that
    are $(\eps/2)$-heavy with respect to $\X$, but their frames are
    not massive. We have
    \begin{equation*}
        |\CBeps|
        \leq %
        \frac{2^{c_3 d^2}}{\eps} \log^{4d^2} \frac{d}{\eps},
    \end{equation*}
    where $c_3$ is a constant independent of $d$.
\end{lemma}
\begin{proof}
    Fix a grid $\Grid_\ii$ of $\GridSet$.  A cell
    $\cell \in \Grid_\ii$ that is a frame of a box $\bb \in \CBeps$,
    must contain at least one point of $\X$. Hence,
    \begin{math}
        \sum_{\cell \in \Grid_\ii} \cardin{\cell \cap \X} \leq 2^d
        \nQ,
    \end{math}
    since a point of $\X$ can participate in at most $2^d$ cells of
    $\Grid_\ii$.  On the other hand, such a cell $\cell$ can host
    inside it at most $\cardin{\cell \cap \X}^{2d}$ canonical
    boxes. Thus, the total number of boxes of $\CBeps$ that have
    frames in $\Grid_\ii$, that are not massive, is at most
    $\Bigl.\sum_{\cell \in \Grid_\ii: \cardin{\cell \cap \X} < \nMass}
    \cardin{\cell \cap \X}^{2d} $. Observing that this quantity is
    maximized by the frames being as heavy as possible (that is,
    $\nMass -1$) and summing the bound over all grids of $\GridSet$,
    we obtain
    \begin{align*}
      |\CBeps|
      &\leq%
        \sum_{\Grid_\ii \in \GridSet }
        \;
        \sum_{\cell \in \Grid_\ii: \cardin{\cell \cap \X} < \nMass}
        \cardin{\cell \cap \X}^{2d}
        \leq%
        (\lgEps+1)^d\cdot\frac{2^d\nQ}{\nMass}\cdot\nMass^{2d}
      \\&%
      \leq%
      (\lgEps+1)^d\cdot{2^d\cdot\nQ}\cdot {(3\lgEps)}^{(d+2)\cdot(2d-1)}
      \leq%
      {2^d}\cdot {(3\lgEps)}^{2d^2+4d-2}\cdot2^{\lgEps}
      \\&
      \leq%
      \frac{2^{c_3d^2}}{\eps}\log^{4d^2}\frac{d}{\eps},
    \end{align*}
    where $c_3>0$ is a constant independent of $d$.
\end{proof}

\begin{remark}
    \remlab{big:enough}%
    If $|\CBeps| =O(\tfrac{1}{\eps}\log\log \tfrac{1}{\eps})$, then we
    can choose a point inside each such box into our piercing set, and
    we are done. Thus, from this point on, assume $|\CBeps| > 4/\eps$.
\end{remark}

\subsection{Piercing boxes with light frames }

Finally, we choose a set of
$O(\tfrac{1}{\eps}\log\log \tfrac{1}{\eps})$ points that pierce all
boxes of $\CBeps$ as follows. For
\begin{equation*}
    \nu = |\CBeps|
    \leq \frac{2^{c_3 d^2}}{\eps} \log^{4d^2} \frac{d}{\eps}
    \qquad\text{and}\qquad%
    m = \frac{4}{\eps}\ln (\eps\nu ),
\end{equation*}
let $\R_1$ be a random sample $\X$ of size $m$ (say, with
repetitions). Let $\BBA_0\subseteq \CBeps$ be the set of boxes that
are not pierced by $\R_1$. If $|\BBA_0|\leq 4/\eps$, we choose one
point from each box of $\BBA_0$. Let $\R_2$ be the resulting set of
points. We return $\R_1 \cup \R_2$ as the piercing set of $\BBA$. If
$|\BBA_0| >4/\eps$, we discard $\R_1$ and repeat the above steps. The
following lemma shows that
$\Prob{\bigl.\smash{|\BBA_0|} > 4/\eps} < \frac{1}{4}$, so the
algorithm stops in (expected) constant number of rounds.

\begin{lemma}
    \lemlab{leftover}%
    We have $\Prob{\bigl.\smash{|\BBA_0|} > 4/\eps} < \frac{1}{4}$.
\end{lemma}
\begin{proof}
    Let $\nu = \cardin{\smash{\BBA}}$.  By \lemref{b:b:a:size} and
    \remref{big:enough}, we have
    \begin{equation*}
        1 \leq \eps \nu
        \leq%
        2^{ c_3 d^2}  \log^{4d^2} \frac{d}{\eps}.
    \end{equation*}
    For a specific box $\bb \in \BBA$ (which is $\eps/2$-heavy by
    definition), we have
    \begin{align*}
      \tau
      =
      \Prob{\bigl.\bb\cap \R_1= \emptyset}
      &=%
        \pth{ 1 - \frac{\cardin{\bb \cap \X}}{\cardin{\X}}}^m
        \leq
        \pth{ 1 - \frac{\eps}{2}}^m
        \leq
        \exp \pth{  - \frac{\eps}{2} m}
        \leq%
        \frac{1}{\eps^2 \nu^2},
    \end{align*}
    as
    $ \tfrac{\eps}{2} m = \tfrac{\eps}{2} \cdot \tfrac{4}{\eps}\ln
    (\eps\nu ) = 2 \ln (\eps \nu)$, Thus, by linearity of
    expectations, the expected number of boxes of $\BBA\Bigr.$ that
    failed to be hit by $\R_1$ is
    \begin{equation*}
        \Ex{\cardin{\smash{\BBA_0}} \bigr.}%
        \leq%
        \cardin{\bigl.\smash{\BBA}} \tau
        =
        \frac{\nu}{\eps^2 \nu^2}
        =%
        \frac{1}{\eps^2 \nu}
        \leq%
        \frac{1}{\eps}.
    \end{equation*} By Markov's inequality,
    $\Prob{\bigl. \smash{|\BBA_0|} > \tfrac{4}{\eps}} < \frac{1}{4}$.
\end{proof}

As for the size of the random sample, we have
\begin{equation*}
    m
    =%
    |\R_1|
    =%
    \frac{4}{\eps} \log (\eps\nu)%
    =%
    \frac{4}{\eps}
    \log \pth{%
       2^{ c_3 d^2}  \log^{4d^2} \frac{d}{\eps}
    }
    =%
    O\pth{ \frac{d^2}{\eps} \log \log \frac{d}{\eps} }.
\end{equation*}
Hence
$\Bigl.|\R_1 \cup \R_2| \leq c_4d^2\tfrac{1}{\eps}\log\log
\tfrac{d}{\eps}$, where $c_4$ is a constant.

By preprocessing $\R_1$ using range-searching data structure, in
\begin{equation*}
    O(|\R_1|\log^{d-1}|\R_1|)=O(\tfrac{1}{\eps}\log^d\tfrac{1}{\eps})
\end{equation*}
time, the set $\BBA_0$ can be computed in
$\Bigl.O(\nu\log^{d-1}
\nu)=\tfrac{1}{\eps}\log^{O(d^2)}\tfrac{1}{\eps}$ time. Finally, the
set $\R_2$ can be computed in $O(\tfrac{1}{\eps})$ time. Hence, we
obtain the following:

\begin{lemma}
    \lemlab{net:light}%
    For some constant $c_4 > 1$, a piercing set of $\BBA$ of size at
    most $c_4d^2\tfrac{1}{\eps}\log\log \tfrac{d}{\eps}$ can be
    computed in
    \begin{math}
        \tfrac{1}{\eps} \log^{O(d^2)}\tfrac{1}{\eps}
    \end{math}
    expected time.
\end{lemma}

\subsection{Putting it together}

We first choose a random subset $\Q\subseteq\P$ of $\nQ$ points and
map them to a set $\X$ of grid points, as described in
\secref{heavyCanBox}. Next, we construct the family of grids
$\GridSet$.  For a grid $\Grid_\ii$ of $\GridSet$ we can identify all
the massive cells in it (and all the cells that are not massive, but
are $\eps/2$-heavy) in $O(\cardin{\X})$ time (using hashing to
``store'' the points of $\X$ in the grid $\Grid_\ii$.  Thus, we can
compute the set $\BBM$ of all massive cells in the grids of $\GridSet$
in
\begin{equation*}
    O(\cardin{\X}
    \lgEps^d
    )
    =%
    O( \nQ \log^d \nQ)
\end{equation*}
time, see \Eqref{lg:eps}. Let $\N_1$ be the set of centers of cells in
$\BBM$. By \lemref{massive-bound}, we have $|\N_1|\leq
c'd/\eps$. Finally, for each (light) grid cell
$\cell\in \BBGS\setminus \BBM$, we compute the set of canonical boxes
that are contained in $\cell$ and are $(\eps/2)$-heavy with respect to
$\X$. Let $\BBA$ be the resulting set of boxes. By
\lemref{b:b:a:size},
$|\BBA|\leq {2^{c_3 d^2} }{\tfrac{1}{\eps}} \log^{4d^2}
\tfrac{d}{\eps}$, and it can be computed in
$O(|\BBGS|\log^{d-1}\nQ+|\BBA|)=O(\tfrac{1}{\eps}\log^{O(d^2)}\tfrac{1}{\eps})$
time. Finally, we construct a piercing set $\N_2$ of $\BBA$ of size at
most $c_4d^2\tfrac{1}{\eps}\log\log \tfrac{d}{\eps} $, using
\lemref{net:light}. The set $\N_1\cup\N_2$ pierces all canonical boxes
that are $\tfrac{\eps}{2}$-heavy with respect to $\X$.

The set $\N_1 \cup \N_2$ can readily be ``translated'' into a piercing
set $\N$ of the same size for the original set of points, which
pierces all the $\eps$-heavy boxes of $\P$.  Combining the above, we
get the following.

\begin{theorem}
    \thmlab{main:w:net}%
    Let $\P$ be a set of $n$ points in $\Re^d$,
    $w:\P\rightarrow\Re_{\geq 0}$ a weight function, and
    $\eps\in (0,1)$ a parameter. A set $\N$ of
    $c_4 \tfrac{d^2}{\eps} \log \log \tfrac{d}{\eps}$ points can be
    computed in $O(n+\tfrac{1}{\eps}\log^{O(d^2)}\tfrac{1}{\eps})$
    expected time, such that every $\eps$-heavy box $\bb$ (with
    respect to $\P$) contains at least one point of $\N$.
\end{theorem}

\subsection{An improved algorithm for weak \TPDF{$\eps$}{eps}-net}
\seclab{alt:algorithm}

\begin{lemma}
    \lemlab{net}%
    Given a set $\P$ of $n$ points in $\Re^d$, a parameter
    $\eps \in (0,1)$, and a set $\Boxes$ of $m$ boxes in $\Re^d$ that
    all contain at least $\eps n$ points of $\P$. Then, a weak
    $\eps$-net $\N$ of size
    $O(\tfrac{1}{\eps} \log \log \tfrac{1}{\eps} )$ that pierces every
    box $\bb \in \Boxes$ can be computed in
    $O\bigl((m + \tfrac{1}{\eps} \log \tfrac{1}{\eps} ) \log^{d-1}
    \tfrac{1}{\eps} \bigr)$ expected time.
\end{lemma}
\begin{proof}
    If $m \leq 1/\eps$, then we pick a point from each box of
    $\Boxes$, and we are done.  Otherwise, we roughly follow the above
    construction. We first sample a set $\Q$ of size
    $\nQ = O( \tfrac{1}{\eps} \log \tfrac{1}{\eps} )$, and compute
    $\X$ from it, as done above. This takes
    $O( \tfrac{1}{\eps} \log^2 \tfrac{1}{\eps} )$ time. Next, we map
    every box of $\Boxes$ to the appropriate box in $[0,\nQ]^d$. We
    shrink it so that its corners all have integral coordinates. This
    takes $O( m \log \tfrac{1}{\eps})$ time. Let $\Boxes'$ be the
    resulting set of boxes.

    We compute for each box of $\Boxes'$ its frame using the
    construction algorithm of \lemref{frame}. This can be done in
    $O( m d )$ time by using \LCA queries in the appropriate tree.

    Next, we choose a sample $\N_1$ from $\X$
    of size $O( \tfrac{1}{\eps} \log \log \tfrac{1}{\eps})$. Using
    range tree built for $\N_1$, we compute all the boxes in $\Boxes'$
    not being hit by $\N_1$. Let $\Boxes''$ be this set of boxes. This
    takes $O( (m+\tfrac{1}{\eps}) \log^{d-1} \tfrac{1}{\eps})$ time
    \cite{a-rs-04}.

    For each unstabbed box $\bb \in \Boxes''$, let $\bbA$ be its
    frame. If $\bbA$ is massive, we add its center to the weak
    $\eps$-net. Let $\N_2$ be the resulting set of new stabbing
    points.

    The argument of \lemref{net:light} then implies that, in
    expectation, the number of remaining unstabbed boxes in $\Boxes''$
    (with non-massive frame) is $O(1/\eps)$. We pick a point of $\X$
    from each one these boxes, and let $\N_3$ be the set of these
    points.  Clearly, $\N_1 \cup \N_2 \cup \N_3$ is the desired
    piercing set, of the desired set as it is just a slightly
    different implementation of the process described above. If this
    set is too large, we restart the algorithm till success.  Overall,
    the total running time is dominated by orthogonal range searching
    queries, that take
    $O\bigl( (m + \tfrac{1}{\eps}\log\tfrac{1}{\eps} )
    \log^{d-1}\tfrac{1}{\eps} )$.
\end{proof}

\subsection{An \TPDF{$\eps$}{eps}-net in two and three dimensions}

Here, we show how one can convert the construction of
\thmref{main:w:net} so that it generates an $\eps$-net (instead of a
\emph{weak} $\eps$-net) in two and three dimensions.  The construction
uses points that are outside the set $\P$ only when adding the centers
of the massive boxes to the piercing set (see \lemref{massive-bound}).
We describe only the 3d case, as the 2d case is implied by it.

\begin{defn}
    A \emphi{crate} in $\Re^3$ is a box $\bb$ that one of its corners
    is the center of a massive box $\bbA \in \BBGS$,
    $\bb \subseteq \bbA$, $\bb$ does not contain any point of $\P$ in
    its interior, and it is maximal (under containment).  A
    \emphi{$k$-crate} is a box that is a crate after deleting $k$
    points of $\P$ that lie in its interior.
\end{defn}

\begin{lemma}
    \lemlab{crates}%
    Let $\P$ be a set of $n$ points in $\Re^3$, and let
    $\bbA \in \BBGS$ be a massive box. Then, the number of crates
    contained in $\bbA$ with a corner at $\cenX{\bbA}$ is
    $O( \cardin{ \P \cap \bbA })$.

    Furthermore, the number of $k$-crates contained in $\bbA$ is at
    most $O(k^2\cardin{ \P \cap \bbA })$.
\end{lemma}

\begin{proof}
    Assume the center of $\bbA$ is in the origin, and without limiting
    generality only count the crates with the diametrical corner being
    in the positive octant (the bound applies symmetrically to the
    remaining seven octants).

    So, let $\Q$ be the set of $\nu$ points of $\P \cap \bbA$ in the
    positive octant, and sort them by increasing $z$ coordinate, with
    $\Q_i = \{ q_1, q_2, \ldots, q_i \}$, for $i=1,\ldots, \nu$, being
    the prefix set. Consider a crate $\bbC$ in the positive octant,
    that has the points $q_i, q_j$ on its vertical faces (not adjacent
    to the origin), and $q_k$ on its top face. Observe that $k > i$
    and $k > j$, and assume that $i < j <k$. We charge the crate
    $\bbC$ to $j$. Observe that $j$ can be charged at most twice --
    indeed, consider the projection $\Q_j$ to the $xy$-plane, denoted
    by $\Q_j'$. Clearly, the projection of $\bbC$ into the $xy$-plane
    is a maximal box with a corner at the origin, no points of $\Q_j'$
    in its interior, and $q_i'$ and $q_j'$ on its two edges not
    adjacent to the origin, see \figref{stairway}.

    Note, that there could be at most two such boxes having $q_j'$ on
    its boundary, which readily implies $q_j'$ get charged at most
    twice. This readily implies the first part of the claim.

    \begin{figure}[h]
        \centering%
        \includegraphics{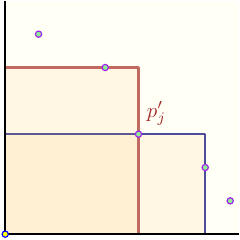}
        \caption{The two crates charged to $p_j$, and their projection
           to the $xy$-axis.}
        \figlab{stairway}
    \end{figure}

    The second implication is readily implied by the Clarkson-Shor
    technique \cite{cs-arscg-89} as every crate is defined by three
    points.
\end{proof}

\begin{lemma}
    In $\Re^3$, the center points of massive boxes used by the
    piercing set of \thmref{main:w:net}, can be replaced by a subset
    of $O( \tfrac{1}{\eps}\log \log \tfrac{1}{\eps} )$ points of $\P$,
    so that the resulting set is an $\eps$-net, of size
    $O( \tfrac{1}{\eps}\log \log \tfrac{1}{\eps} )$, for all the
    $\eps$-heavy boxes
\end{lemma}

\begin{proof}
    Let $\BBM$ be the set of all massive cells in the grids of
    $\GridSet$, By \lemref{dir}, all the $\eps$-heavy boxes of $\P$
    contain at least $(\eps/2)\nQ$ points of $\X$.  Such an
    $\eps$-heavy box $\bb$ contained in a massive cell $\bbC$, that is
    stabbed by its center $\cenX{\bbC}$, contains a crate
    $\bbA \subseteq \bb$ that has at least (and thus, exactly)
    $k = (\eps/16)\nQ = O( \log \tfrac{1}{\eps})$ points of
    $\X$. Thus, a massive box containing $n'$ points, given rise to
    $O( k^2 n')$ crates, by \lemref{crates}. Summing over all cells in
    $\BBM$, we have that the total number of $k$-crates (in massive
    cells) is at most
    \begin{equation*}
        \sum_{\cell \in \BBM }
        O(k^2 \cardin{\cell \cap \X})
        =
        O( k^2 \cardin{\GridSet} \nQ )
        =
        O( \tfrac{1}{\eps}\log^6 \tfrac{1}{\eps} ).
    \end{equation*}
    Now, arguing as above, a random sample of size
    $O( \tfrac{1}{\eps} \log \log \tfrac{1}{\eps} )$ from $\X$ would
    stab all the heavy crates under consideration except for
    $O(1/\eps)$ of them. Adding a point for each of these unstabbed
    creates, then complete the construction of the $\eps$-net.
\end{proof}

We thus get the following extension of \thmref{main:w:net}.

\begin{theorem}
    \thmlab{main:strong:net}%
    Let $\P$ be a set of $n$ points in $\Re^d$, for $d=2$ or $d=3$,
    $w:\P\rightarrow\Re_{\geq 0}$ a weight function, and
    $\eps\in (0,1)$ a parameter. A set $\N \subseteq \P$ of
    $O( \tfrac{1}{\eps} \log \log \tfrac{d}{\eps} )$ points can be
    computed, in $O(n+\tfrac{1}{\eps}\log^{O(1)}\tfrac{1}{\eps})$
    expected time, such that every $\eps$-heavy box $\bb$ (with
    respect to $\P$) contains at least one point of $\N$.
\end{theorem}

\section{Piercing via multiplicative weight update (\MWU)}
\seclab{mwu}

The input is a set $\Boxes$ of $n$ boxes in $\Re^d$. Formally, a
\emphi{box} is an axis-aligned closed set of the form
$\bb = [\ell_1,\ell_1'] \times \cdots \times [\ell_d,\ell_d']$.  A
\emphi{piercing set} $\Q \subseteq \Re^d$, such that for all
$\bb \in \Boxes$, the set $\bb \cap \Q$ is not empty (i.e., all the
boxes of $\Boxes$ are stabbed by some point of $\Q$). The minimum
cardinality of a piercing set of $\Boxes$ is the \emphi{piercing
   number} of $\Boxes$, denoted by $\popt$.  The task at hand is to
compute a piercing set of minimum size, thus providing an efficient
approximation to $\popt$.

\bigskip

We describe two multiplicative-weight-update (\MWU) based
$O(\log\log \popt)$-approximation algorithms for computing a piercing
set for $\Boxes$, where $\popt$ is the optimal piercing number. Let
$\VV=\VX{\Boxes}$ denote the set of vertices in the arrangement
$\ArrX{\Boxes}$. As already mentioned, there is always an optimal
piercing set $\N \subseteq \VV$ of $\Boxes$.  The basic idea is to
reweight the points of $\VV$ such that all boxes become
\textit{heavy}.  Specifically, we compute a weight function
$\wC:\VV \xrightarrow{} \ZZ^{+}$, such that for any box
$\bb \in \Boxes$, we have
\begin{math}
    \wX{\bb \cap \VV}%
    =%
    \sum_{ p \in \bb \cap \VV} \wX{p}%
    \geq%
    \tfrac{\wX{\VV}}{{2e \popt }} \Bigr. .
\end{math}
We then compute a weak $\tfrac{1}{2e \popt }$-net $\N$ of the range
space $(\VV,\Boxes)$ with respect to the above weight function. By
definition $\N$ is a piercing set of $\Boxes$. The two algorithms
differ in how the weights are updated. We next specify the data
structure used by both algorithms. The implementation details of the
data structure are in \secref{DS}.

\begin{figure}
    \centering
    \begin{tabular}{|c|c|c|c|l|}
      \hline
      Dim
      &
        preprocessing
      &
        Std. operations
      &
        \sample
      & ref\\
      \hline
      $d=1$
      &
        $O(n \log n)$
      &
        $O( \log n)$
      &
        $O( \log n)\Bigr.$
      &
        \lemref{d:s:1:dim}%
      \\
      \hline
      $d=2$
      &
        $O(n^{3/2} \log n)$
      &
        $O( \sqrt{n}\log n )$
      &
        $O( \log n)\Bigr.$
      &
        \lemref{data-structure}%
      \\
      \hline
      $d>2$
      &
        $O(n^{(d+1)/2} \log n)$
      &
        $O( n^{(d-1)/2}\log n )\Bigr.$
      &
        $O( \log n)$
      &
        \lemref{data-structure}%
      \\
      \hline
    \end{tabular}
    \caption{%
       We describe a data-structure for maintaining (implicitly) the
       vertices of an arrangement of boxes, under operations \weight,
       \double, \halve, \Insert, \Delete, see \defref{arr:ds1}.}
    \figlab{d:s:p}
\end{figure}

For a multi-set $\S \subseteq \Boxes$ of boxes, and a
point $p\in \Re^d$, let $\rstY{\S}{p} = \Set{ \bb \in \S}{ p \in \bb}$
be the multi-set of all boxes in $\S$ containing $p$, let
\begin{equation}
    \hwY{p}{\S} = 2^{\nrstY{\S}{p}}
    \eqlab{doubling:w}%
\end{equation}
be the \emphi{doubling weight} of $p$. For a finite set of points $X\subseteq \Re^d$,
let $\hwY{X}{\S} = \sum\nolimits_{x \in X} \hwY{p}{\S}$.

\begin{defn}[Implicit arrangement data-structure]
    \deflab{arr:ds1}%
    \deflab{arr:ds}%
    Let $\Boxes$ be a set of axis-aligned boxes in $\Re^d$ (known in
    advance). Let $\AB\subseteq \Boxes$ be a set of \emphi{active}
    boxes (initially empty) and let $\VX{\AB}$ be the set of vertices
    of the arrangement $\ArrX{\AB}$. Let $\S\subseteq \Boxes$ be a
    multi-set of \emphi{update} boxes (initially empty). The set $\S$
    induces a weight function on the vertices of $\VV(\AB)$ using
    \Eqref{doubling:w}. Here, we require a data-structure that
    supports the following operations:
    \begin{compactenumI}
        \smallskip%
        \item $\weight (\bb)$: given a box $\bb$, compute $\hwY{\VX{\AB}\cap \bb}{\S}$.

        \smallskip%
        \item $\double(\bb)$: given a box $\bb \in \Boxes$, adds a
        copy of $\bb$ to $\S$.

        \smallskip%
        \item $\halve(\bb)$: given a box $\bb \in \Boxes$, removes a
        copy of $\bb$ from $\S$.

        \smallskip%
        \item $\sample$: returns a random point $p \in \VV(\AB)$, with
        probability ${\hwY{p}{\S}} / {\hwY{\bigl.\VX{\AB}}{\S}}$.

        \smallskip%
        \item $\Insert (\bb)$: inserts $\bb$ into the set of active
        boxes $\AB$.

        \smallskip%
        \item $\Delete (\bb)$: removes $\bb$ from the set  of
        active boxes $\AB$.

    \end{compactenumI}
\end{defn}

\figref{d:s:p} summarizes the performance of the data structure. In
particular, it can be constructed in $O(n^{(d+1)/2}\log n)$ time, each
of the above operations can be supported in $O(n^{(d-1)/2}\log n)$
time per operation, except for $\sample$, which takes only $O(\log n)$
time, see \lemref{data-structure}. Note that the $\halve$ and
$\Delete$ operations are not used by the basic algorithm presented
next.

\subsection{Basic \MWU algorithm}
\seclab{mwu:basic}

The algorithm is a variant of the algorithm of Agarwal and Pan
\cite{ap-naghs-20}. The algorithm performs an exponential search on a
value $k$, and it stops when $k \geq \popt$ and $k\leq
2\popt$. Specifically, in the $i$\th stage, $k$ is set to $2^i$. We
next describe such a stage for a fixed value of $k$ (and $i$).

In the beginning of the stage, $\wX{p}=1$ for all $p\in \VV$.  Let
$\eps = \frac{2}{3k}$. A box $\bb$ is \emphi{$\eps$-light} if
$\wX{\VV \cap \bb} < \eps \wX{\VV}$, and \emphi{$\eps$-heavy}
otherwise. The algorithm proceeds in rounds. In each round, it scans
all the boxes in $\Boxes$ trying to find an $\eps$-light box.  If an
$\eps$-light box $\bb$ is found, the algorithm performs a
\emphi{doubling} operation on its points. That is, it doubles the
weight of each of the points in $\VV \cap \bb$. The algorithm performs
the doubling on $\bb$ repeatedly until $\bb$ becomes $\eps$-heavy (in
relation to the updated weight $\wX{\VV}$).  The algorithm then
resumes the scan for $\eps$-light boxes (over the remaining
boxes). Importantly, the algorithm never revisits a box during a round
(thus a box that becomes heavy during a round might become light later
in the same round).  Let $\ell = \floor{ 1/\eps} \leq 2k$. If $\ell$
doubling operations are performed in a round, the algorithm aborts the
round, and proceeds to the next round.  If a round was completed
without $\ell$ weight-doubling being performed, the algorithm computes
a weak $(\eps/e)$-net $\N$ of $(\VV,\Boxes)$ using the algorithm
described below in \secref{weak-net} (specifically, \lemref{net}).

Finally, if the number of rounds exceeds $\tau = c\ln( |\VV| / k ) $
at any stage, where $c>0$ is a suitably large constant, the guess for
$k$ is too small, and the algorithm doubles the value of $k$, and
continues to the next stage, till success.

In order to implement the above algorithm, we use the implicit
arrangement data structure from \defref{arr:ds1}. At the beginning of
each stage, we build an instance of the data structure on $\Boxes$ and
add all the boxes to the active set using the $\Insert$ operation. The
doubling step with a box $\bb$ is performed using $\double(\bb)$ and
the $\eps$-lightness of a box is tested using the $\weight(\bb)$
operation.
\paragraph{Analysis}
\seclab{analysis}

The following lemma  proves the correctness of the
algorithm.

\begin{lemma}[\tcite{ap-naghs-20}]
    \lemlab{eps:net:2d:log:log}%
    If $k\in [{\popt}/{2},\popt ]$, then the algorithm returns a
    piercing set for $\Boxes$, of size $O( \popt \log \log \popt)$,
    where $\popt =\PNX{\Boxes}$ is the piercing number of $\Boxes$.
\end{lemma}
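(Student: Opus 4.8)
The plan is to follow the standard MWU/$\eps$-net analysis (as in Agarwal–Pan), showing that when $k$ is a correct guess (i.e. $k \in [\popt/2,\popt]$) the algorithm must terminate within $\tau = c\ln(|\VV|/k)$ rounds, and that the net produced in a successful round is a valid piercing set of the claimed size. There are three things to establish: (i) a single round performs at most $\ell = \floor{1/\eps} \le 2k$ doublings by construction, so the weight of $\VV$ grows by a bounded factor per round; (ii) on the other hand, the total weight can never grow too large relative to the weight put on an optimal piercing set, which forces an upper bound on the total number of doublings across all rounds; (iii) combining (i) and (ii), within $O(\ln(|\VV|/k))$ rounds some round completes \emph{without} $\ell$ doublings, and for that round every box is $(\eps/e)$-heavy at the end, so the weak $(\eps/e)$-net $\N$ returned by \lemref{net} hits every box, i.e. is a piercing set.

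For step (ii), the key potential argument: fix an optimal piercing set $\N^* \subseteq \VV$ with $|\N^*| = \popt$. Every box $\bb \in \Boxes$ contains some point of $\N^*$, so each doubling operation on $\bb$ at least doubles the weight of that point; after a total of $t$ doublings, by convexity/pigeonhole some point of $\N^*$ has been doubled at least $t/\popt$ times, hence $\wX{\N^*} \ge 2^{t/\popt}$, and therefore $\wX{\VV} \ge 2^{t/\popt}$. On the other side, a doubling of an $\eps$-light box increases $\wX{\VV}$ by less than $\eps\wX{\VV}$, so after $t$ doublings $\wX{\VV} \le |\VV|\,(1+\eps)^t \le |\VV|\,e^{\eps t}$. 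Combining, $2^{t/\popt} \le |\VV| e^{\eps t}$, and since $\eps = \tfrac{2}{3k} \le \tfrac{2}{3}\cdot\tfrac{2}{\popt} = \tfrac{4}{3\popt}$ while $1/\popt \ge \ln 2 \cdot$(that's the slack we exploit) — more carefully, $\tfrac{t}{\popt}\ln 2 - \eps t = t(\tfrac{\ln 2}{\popt} - \eps) \ge c' t/\popt$ for a fixed constant $c'>0$ because $\eps \le \tfrac{4}{3\popt} < \tfrac{\ln 2}{\popt}$ is \emph{not} quite true, so instead one uses $k \ge \popt/2$ to get $\eps = \tfrac{2}{3k} \le \tfrac{4}{3\popt}$ and notes $\ln 2 \approx 0.693 > 4/3\cdot(\text{const})$ — the point is the constants are arranged (this is exactly why $\eps = \tfrac{2}{3k}$ and $\ell = \floor{1/\eps}$ rather than $1/(k\cdot e)$) so that $t = O(\popt \ln |\VV|)$. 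Dividing by $\ell = \Theta(k) = \Theta(\popt)$ doublings per round gives $O(\ln|\VV|)$ rounds; choosing $c$ large enough in $\tau = c\ln(|\VV|/k)$ makes a clean round appear before the abort threshold. (I would double-check the exact constant bookkeeping against the parameter choices; this is the one spot where the argument is not purely formal.)

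For the size bound in step (iii): in the successful round fewer than $\ell \le 2k$ doublings occurred, each increasing $\wX{\VV}$ by a factor $< 1+\eps$, so $\wX{\VV} \le |\VV|(1+\eps)^{2k} = O(|\VV|)$ since $\eps = \Theta(1/k)$; hence $\eps\wX{\VV} = O(|\VV|/k) = O(|\VV|/\popt)$, and a weak $(\eps/e)$-net has size $O\bigl(\tfrac{e}{\eps}\log\log\tfrac{e}{\eps}\bigr) = O(k\log\log k) = O(\popt \log\log\popt)$ by \lemref{net}. Since at the end of the successful round every box is $\eps$-heavy, and being $\eps$-heavy implies being $(\eps/e)$-heavy, the weak $(\eps/e)$-net $\N$ intersects every box of $\Boxes$, so $\N$ is a piercing set of the stated size. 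The main obstacle, as noted, is not conceptual but the precise constant-chasing that ties together the choices $\eps = \tfrac{2}{3k}$, $\ell = \floor{1/\eps}$, $k \in [\popt/2,\popt]$, and the constant $c$ in $\tau$; everything else is the textbook MWU potential argument, and since the lemma is explicitly cited to \cite{ap-naghs-20} it suffices to invoke their analysis after checking that our parameter choices match.
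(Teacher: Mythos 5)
You follow the same route as the paper (the potential argument $\popt\, 2^{\floor{t/\popt}} \le W_t \le |\VV|(1+\eps)^t$, dividing the doubling budget by the $\ell$ doublings performed per round, and a last-round heaviness claim), but the step you yourself flagged is a genuine gap, and it is exactly where the content of the lemma lies. With $\eps = 2/(3k)$ the potential argument only closes when $\eps \popt \log_2 e < 1$, i.e.\ essentially when $k \ge \popt$; your observation that $k \ge \popt/2$ only gives $\eps \le 4/(3\popt)$, and that $4/3 > \ln 2$, is correct, and ``the constants are arranged'' plus a pointer to \cite{ap-naghs-20} does not repair it. The paper resolves this by proving termination only under the assumption $k \ge \popt$ (consistent with the algorithm, whose exponential search stops once $k \in [\popt, 2\popt]$): then $\eps t \popt \log_2 e \le 0.97\, t$, so a stage performs $O(k \log(|\VV|/k))$ doublings and hence $O(\log(|\VV|/k))$ rounds, below the abort threshold $\tau$ for $c$ large. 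For under-guesses $k < \popt$ no termination claim is needed: if such a stage happens to complete a clean round, the output is still a valid piercing set of size $O(k \log\log k) = O(\popt \log\log \popt)$, and otherwise the stage aborts and $k$ doubles. Your write-up needs this case split (equivalently, the hypothesis must be read as ``the guess is within a constant factor of $\popt$ from above''); as written, for $k \le \popt$ your inequality $2^{t/\popt} \le |\VV| e^{\eps t}$ places no bound on $t$.

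The last-round argument is also misstated. You assert that at the end of the successful round every box is $\eps$-heavy; that is false in general---a box certified heavy early in the round can drop below the (growing) threshold by the round's end, which is precisely why the algorithm computes a weak $(\eps/e)$-net rather than an $\eps$-net. The correct chain, as in the paper, is: each box had weight at least $\eps W$ at the moment it was scanned, where $W$ is the total weight of $\VV$ at the start of the round; individual weights never decrease; and since fewer than $\ell = \floor{1/\eps}$ light-box doublings occurred in this round, the final total weight is at most $(1+\eps)^{\ell} W \le e W$, so every box is $\eps/e$-heavy at the end and the weak $(\eps/e)$-net pierces all of $\Boxes$. Relatedly, your claim in step (iii) that $\wX{\VV} \le |\VV|(1+\eps)^{2k} = O(|\VV|)$ is wrong (the start-of-round weight already incorporates all earlier rounds' doublings), but it is also unnecessary: the size of the net is $O((1/\eps)\log\log(1/\eps)) = O(k \log\log k)$ regardless of the weights.
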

\begin{proof}
    This claim is well known \cite[Lemma 3.1]{ap-naghs-20}, but we
    include a proof for the sake of completeness.  Initially, the
    weight $W_0$ of all the vertices of $\VV$ is
    $\mm = |\VV| = O(n^d)$. Let $W_i$ be the weight of $\VV$ after the
    $i$\th doubling operation, and observe that
    \begin{equation*}
        \popt 2^{\floor{i/\popt}} \leq W_i \leq (1+\eps)^i W_0
        \leq%
        \mm \exp\pth{ \eps i}.
    \end{equation*}
    The lower bound follows as every doubling operation must double
    the weight of one of the $\popt$ points in the optimal piercing
    set (and to minimize this quantity, this happens in a round robin
    fashion). The upper bound follows readily from the
    $\eps$-lightness of the box being doubled.  Taking $i=t \popt$,
    and taking the log of both sides (in base $2$), we have
    \begin{equation*}
        \log \popt + t
        \leq
        \log \mm +  \eps t \popt \log e
        \leq
        \log \mm +
        \frac{2}{3k} t \popt \cdot 1.45
        \leq%
        \log \mm + 0.97  t \frac{\popt}{k} .
    \end{equation*}
    Assuming $k \geq \popt$, this readily implies that
    $t = O( \log \tfrac\mm{k})$. Namely, the algorithm performs at
    most $\tau = O( k \log \tfrac{\mm}{k})$ doubling operations in a
    stage if $k \geq \popt$.

    Since every round performs exactly
    $\ell = \floor{1/\eps} = \floor{3k/2} \geq k$ doubling operations
    in each round, except the last one, it follows that the algorithm
    performs at most
    $\tau / \ell \leq \tau/ k = O( \log \tfrac{\mm}{k})$ rounds (in a
    single stage).

    So consider the last round -- all the boxes where scanned
    ``successfully'' during this round. Assume the weight of $\VV$ at
    the start of the round was $W$. At the end of the round, the total
    weight of $\VV$ is at most
    $(1+\eps)^\ell W \leq \exp( \eps \floor{ 1/\eps }) W \leq e
    W$. Every box during this round, must have been heavy (at least
    for a little while), which implies that it had weight
    $\geq \eps W$, as weights only increase during the algorithm
    execution. This implies that all the boxes are
    $\tfrac{\eps}{e}$-heavy at the end of the round. Thus, the weak
    $\tfrac{\eps}{e}$-net the algorithm computes must pierce all the
    boxes of $\Boxes$.
\end{proof}

The above lemma shows that the algorithm succeeds if it stops, and it
must stop in a stage if $k$ is sufficiently large.

\begin{theorem}%
    \thmlab{main:1}
    Let $\Boxes$ be set of $n$ axis-aligned boxes in $\Re^d$, for some
    fixed $d\geq 2$, and let $\popt = \PNX{\Boxes}$ be the piercing
    number of $\Boxes$. The above algorithm computes, in
    $O(n^{(d+1)/2}\log ^3 n)$ expected time, a piercing set of
    $\Boxes$ of size $O(\popt\log\log \popt)$.
\end{theorem}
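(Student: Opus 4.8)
The plan is to account the running time of the basic MWU algorithm from \secref{mwu:basic} by multiplying the number of operations it performs on the implicit arrangement data structure (\defref{arr:ds1}) by the cost per operation given in \figref{d:s:p}, and then adding the cost of the weak $\eps$-net computations. There are three sources of cost: (i) the per-stage preprocessing/\Insert cost of building the data structure on $\Boxes$; (ii) the cost of the scans for $\eps$-light boxes and the \double operations; and (iii) the cost of computing a weak $(\eps/e)$-net whenever a round completes. By \lemref{eps:net:2d:log:log} the algorithm is correct once it reaches a stage with $k \ge \popt$, so it suffices to bound the total work over all stages $i = 1, \dots, \lceil \log \popt \rceil + O(1)$, and since the per-stage costs grow geometrically in $k = 2^i$, the last (successful) stage dominates up to a constant factor.

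\textbf{Bounding the work in a single stage.} Fix a stage with parameter $k$ and $\eps = 2/(3k)$. Building the data structure and performing the $n$ initial \Insert operations costs $O(n^{(d+1)/2}\log n)$ by \figref{d:s:p}. Within the stage: each round scans all $n$ boxes, and for each box performs one \weight query to test $\eps$-lightness, at a cost of $O(n^{(d-1)/2}\log n)$ per query (for $d=2$ this is $O(\sqrt n \log n)$; I will write it uniformly). From the proof of \lemref{eps:net:2d:log:log}, if $k \ge \popt$ the stage runs at most $\tau/k = O(\log(\mm/k)) = O(\log n)$ rounds (using $\mm = |\VV| = O(n^d)$, so $\log(\mm/k) = O(d \log n) = O(\log n)$), and performs at most $\tau = O(k\log(\mm/k)) = O(k \log n)$ \double operations in total. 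Hence the scanning cost over the whole stage is $O(\log n)$ rounds $\times\ n$ boxes $\times\ O(n^{(d-1)/2}\log n)$ per query $= O(n^{(d+1)/2}\log^2 n)$, and the \double cost is $O(k\log n) \cdot O(n^{(d-1)/2}\log n) = O(k\, n^{(d-1)/2}\log^2 n) = O(n^{(d+1)/2}\log^2 n)$ since $k \le 2\popt \le 2n$ (as $\popt \le n$ trivially). The number of times a weak $\eps$-net is computed is at most the number of completed rounds, $O(\log n)$; by \lemref{net} (cited in \secref{weak-net}) each such computation takes $O((n + m + \eps^{-1})\log^d \eps^{-1})$ time, and here $m = n$ and $\eps^{-1} = O(k) = O(n)$, so each net computation costs $O(n \log^d n)$, for a total of $O(n \log^{d+1} n)$, which is subsumed. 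Summing, a single stage with $k \ge \popt$ costs $O(n^{(d+1)/2}\log^2 n)$; for $d = 2$ this is $O(n^{3/2}\log^2 n)$, and the claimed bound in the theorem is $O(n^{(d+1)/2}\log^3 n)$, so we have one spare logarithmic factor.

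\textbf{Summing over stages and handling randomization.} There are $O(\log \popt) = O(\log n)$ stages before termination; but the earlier stages with $k < \popt$ may run the full $\tau = O(k \log(\mm/k))$ rounds only up to the abort threshold, and in any case each such stage costs at most the same $O(n^{(d+1)/2}\log^2 n)$ bound by the identical accounting (the abort rule caps the number of rounds at $O(\log n)$ regardless of whether $k \ge \popt$). So the total over all stages is $O(n^{(d+1)/2}\log^3 n)$. The $\log^3$ (rather than $\log^2$) absorbs the sum over $O(\log n)$ stages. Randomization enters only through the weak $\eps$-net construction (\lemref{net}), whose running time bound is in expectation, and possibly through \sample inside the data structure; since expectations add, the overall expected running time is $O(n^{(d+1)/2}\log^3 n)$, and the correctness and output-size bound follow verbatim from \lemref{eps:net:2d:log:log} applied to the first stage with $k \in [\popt/2, \popt]$.

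\textbf{The main obstacle} is making sure the per-operation cost of the data structure from \figref{d:s:p}/\lemref{data-structure} really is $O(n^{(d-1)/2}\log n)$ amortized over the relevant operation sequence — in particular that the \double and \weight operations interleaved as the algorithm dictates do not incur a worse amortized bound — and checking that the parameters fed to the weak-$\eps$-net routine ($m = n$ boxes, $\eps^{-1} = \Theta(k) = O(n)$) keep that call within the claimed budget. Both are routine given the cited lemmas, but they are where the constants and log-exponents must be tracked carefully; everything else is the geometric-sum bookkeeping sketched above.
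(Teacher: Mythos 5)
Your accounting matches the paper's own proof essentially step for step: correctness and output size from \lemref{eps:net:2d:log:log}, $O(\log n)$ rounds per stage and $O(\log n)$ stages giving $O(n\log^2 n)$ \weight operations plus $O(\popt\log n)$ \double operations, each at cost $O(n^{(d-1)/2}\log n)$, with preprocessing/\Insert and the weak-net call (via \sample and \lemref{net}) subsumed, yielding $O(n^{(d+1)/2}\log^3 n)$. The only nitpick is immaterial: the weak $(\eps/e)$-net is computed at most once (upon the first round that finishes without $\ell$ doublings, the algorithm outputs and stops), not once per completed round, but your overcount still fits comfortably within the claimed budget.
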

\begin{proof}
    The above implies that if the algorithm outputs a piercing set,
    then it is of the desired size (it might be that the algorithm
    stops at an earlier stage than expected with a guess of $k$ that is
    smaller than $\popt$). Thus, it must be that if the guess for the
    value of $k$ is too small, then the algorithm double weights in
    vain, and performs too many rounds (i.e., their number exceeds
    $\tau$), and the algorithm continues to the next stage.

    Let $h = \ceil{ \log \popt}$ and $\mm=|\VV|$. Overall, the
    algorithm performs at most
    \begin{equation}
        \eqlab{doubling}%
        \sum_{i=1}^h O( 2^i \log \mm) = O( \popt \log \mm)
        =%
        O( \popt d \log n)
    \end{equation}
    doubling operations (i.e., \double). Every round requires $n$
    \weight{} operations. There are $O( \log \mm )$ rounds in each
    stage, and there are $h$ stages. We conclude that the algorithm
    performs
    \begin{equation*}
        O(n \log \mm \log \popt) = O(n \log^2 n )
    \end{equation*}
    \weight operations. Thus, the overall running time (including preprocessing and activating all boxes), is
    \begin{equation*}
        O(n^{(d+1)/2}\log n+n\cdot n^{(d-1)/2}\log n\cdot \log^2 n)
        =
        O(n^{(d+1)/2}\log^3 n).
    \end{equation*}
    Finally, the algorithm in \secref{weak-net} for computing a weak
    $\eps$-net first chooses a random sample $\Q$ of $\VV$ of size
    $O(\eps^{-1}\log \eps^{-1})$ and then computes a weak
    $\frac{\eps}{2}$-net of $(\Q,\Boxes)$. Using $\sample$, $\Q$ can
    be computed in $O(|\Q|\log n)$ time. Combining this with
    \lemref{net}, we conclude that a $\eps/e$-net of $(\VV,\Boxes)$ of
    size $O(\popt\log\log\popt)$ can be computed in
    $O\pth{(n+\eps^{-1})\log^d n}$ time, which is dominated by the
    data-structure's overall running time.
\end{proof}

\subsection{An Improved \MWU Algorithm}
\seclab{improved} We now present an alternative algorithm for piercing
that exploits \LP duality. Operationally, the improved algorithm
differs from the basic one in a few ways. The algorithm does not use
the implicit arrangement data-structure to maintain the weights on
$\VV$ directly. Moreover, instead of doubling the weight of light
boxes one at a time, the algorithm performs ``batch doubling'', i.e.,
it doubles the weights of a collection of boxes at the same time. The
algorithm also does not compute the weights of boxes exactly. Instead,
it approximates the weights using a suitable sample. This sample is
periodically recomputed by a process we call ``batch sampling'' that
uses the implicit arrangement data structure in $\Re^{(d-1)}$. Before
describing the improved algorithm, we explore the duality between
piercing sets and independent sets, and discuss the details of batch
sampling.

\begin{figure}
\begin{tabular}{cc}
  \fbox{%
  \begin{minipage}{0.4\linewidth}
      \begin{equation*}
          \begin{aligned}
            \fPN(\Boxes,\S)=
            &\min
              \sum_{v \in \S} x_v
            \\
            \forall \bb \in \Boxes\qquad
            & \sum_{v \in \bb \cap \S} x_v \geq 1
            \\
            \forall v \in \S  \qquad & x_v \geq 0.
          \end{aligned}
      \end{equation*}
  \end{minipage}}%
  & %
    \fbox{%
    \begin{minipage}{0.4\linewidth}
        \begin{equation*}
            \begin{aligned}
              \fIS(\Boxes,\S)=
              &\max
                \sum_{\bb \in \Boxes} y_\bb^{}
              \\
              \forall v \in \S\qquad
              & \sum_{\bb \in \Boxes: v \in \bb} y_\bb^{} \leq 1
              \\
              \forall \bb \in \Boxes  \qquad & y_\bb^{} \geq 0.
            \end{aligned}%
        \end{equation*}
    \end{minipage}%
    }%
  \\[1.2cm]
    \begin{minipage}{0.45\linewidth}
        \smallskip%
        The \LP for the fractional piercing number.
    \end{minipage}
  &
    \begin{minipage}{0.45\linewidth}
        \smallskip%
        The dual \LP for the fractional independence number.
    \end{minipage}
\end{tabular}
\caption{\LP{}s for the stabbing and independence problems. By
   duality, we have
   $\fPN(\Boxes,\S)=\fIS(\Boxes,\S)$.}
\figlab{lp:s}
\end{figure}

\paragraph{\LP duality, piercing and independence numbers}
\seclab{duality}

Let $\Boxes$ be a set of boxes in $\Re^d$, and let $\S$ be a set of
points in $\Re^d$. For the piercing problem, $\S=\VV$, the set of
vertices of the arrangement $\ArrX{\Boxes}$, but we describe the \LP
formulation in a more general setting because we need it later.  A
subset $\X\subseteq\Boxes$ is an \emphi{independent set} if no pair of
boxes of $\X$ contains the same point of $\S$. (Note that if $\S=\VV$
then the boxes in $\X$ are pairwise disjoint.) The \emphi{independence
   number} of $\Boxes$ with respect to $\S$, denoted by
$\ISX{\Boxes,\S}$, is the size of the largest independent set
$\X \subseteq \Boxes$.  Observe that
$\ISX{\Boxes,\S} \leq \PNX{\Boxes,\S}$, as each box in an independent
set must be stabbed, but no point can stab more than one box in the
independent set.

There are {\LP}s associated with stabbing and independence problems,
and they are dual of each other. See \figref{lp:s}. Let
$\fPN=\fPN(\Boxes,\S)$ denote the fractional piercing number of
$(\Boxes,\S)$, i.e., this is the minimum stabbing number of $\Boxes$
when solving the associated \LP, see \figref{lp:s}.  Similarly,
$\fIS=\fIS(\Boxes,\S)$ is the \emphi{fractional Independence number}
of $\Boxes$. By the strong \LP duality
$\fPN(\Boxes,\S) = \fIS(\Boxes,\S)$.

With this general formulation at hand, we return to the piercing
problem, i.e. $\S=\VV$. Let $\fPN=\fPN(\Boxes,\VV)$ and
$\fIS=\fIS(\Boxes,\VV)$. It is known, and also implied by
\lemref{eps:net:2d:log:log}, that
$\fPN \leq \popt \leq c_1 \fPN \log \log \fPN$, for some constant
$c_1 > 0$. This implies that $\fPN \geq \popt /(c_1 \log \log
\popt)$. We need the following standard algorithms for computing
independent set and piercing set of boxes. Better results are known
\cite{cc-misr-09,gkmmp-taami-22, m-amisr-21}, but they are
unnecessary for our purposes.
\begin{lemma}[\tcite{aks-lpmis-98}]
    \lemlab{indep}%
    Let $\Boxes$ be a set of $n$ boxes in $\Re^d$. An independent
    subset of boxes of $\Boxes$ of size $\Omega(\fIS /\log^{d-1}n)$
    can be computed in $O(n \log^{d-1} n )$ time.
\end{lemma}

\begin{proof}
    This algorithm is well known and we sketch it here for
    completeness -- it works by induction on the dimension. For $d=1$, a
    greedy algorithm picking the first interval with the minima right
    endpoint computes the optimal independent set --- the running time
    of this algorithm is $O(n)$ after sorting. It is straightforward
    to verify that one can assume that the \LP solution in this case
    is integral, and thus $\fIS = \ISX{\Boxes}$.

    For $d=2$, compute the vertical line such that its $x$-coordinate
    is the median of the $x$-coordinates of the endpoints of the
    boxes. Compute the optimal independent set of rectangles
    intersecting this line (which is just the one dimensional
    problem), and now recurse on the two subsets of rectangles that do
    not intersect the median line. This results in a partition of the
    set of rectangles $\Boxes$ into $O(\log n)$ sets, such that for
    each set, we have the optimal (fractional) independent
    set. Clearly, one of them has to be of size $\Omega(\fIS/\log n)$.

    For $d>2$, the same algorithm works by approximating the optimal
    solution along the median of the first coordinate (i.e., $d-1$
    subproblem), and then recursively on the two subproblems. The
    bounds stated readily follows.
\end{proof}

The piercing problem can be solved exactly by the greedy algorithm in
one dimension (i.e., add the leftmost right endpoint of an input
interval to the piercing set, remove the intervals that intersect it,
and repeat). For higher dimensions, one can perform the same standard
divide and conquer approach, used in \lemref{indep}, and get the
following.
\begin{lemma}[\tcite{n-fsbhd-00}]
    \lemlab{pierce}%
     Let $\Boxes$ be a set of $n$ boxes in $\Re^d$
    for $d\geq2$. A piercing set $\P$ of $\Boxes$ of size
    $O(\fPN \log^{d-1}n)$ can be computed in $O(n \log^{d-1} n )$
    time.
\end{lemma}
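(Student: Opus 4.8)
The plan is to mimic the divide-and-conquer recursion used for the independent-set bound in \lemref{indep}, but now recursing on the \emph{piercing} problem, and to charge the size of the computed piercing set against the fractional optimum $\fPN$ via \LP duality on each line subproblem. The base case $d=1$ is exact: the greedy ``leftmost right endpoint'' rule returns an optimal integral piercing set, and since the one-dimensional piercing \LP is totally unimodular (it is an interval covering problem), its integral optimum equals $\fPN$, so greedy produces a piercing set of size exactly $\fPN$ in $O(n\log n)$ time (the log coming from sorting the endpoints).

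For $d\ge 2$, I would proceed as follows. Choose a hyperplane $h$ orthogonal to the first coordinate axis whose position is the median of the first coordinates of all box-facets, so that each of the two open halfspaces $h^-,h^+$ contains at most $n/2$ boxes that lie entirely on that side. Let $\Boxes_h$ be the boxes crossing $h$; restricting each such box to $h$ gives a set of $(d-1)$-dimensional boxes inside $h$, and recursively compute a piercing set $\P_h$ for this $(d-1)$-dimensional instance. Let $\Boxes^-,\Boxes^+$ be the boxes lying strictly in $h^-,h^+$; recurse on each to get $\P^-,\P^+$. Output $\P = \P_h \cup \P^- \cup \P^+$. Correctness is immediate: every box is either cut by $h$ (hence pierced by $\P_h$, since a piercing point of its $(d-1)$-dimensional restriction lies on $h$ and therefore inside the original box) or lies entirely in one halfspace (hence pierced by $\P^-$ or $\P^+$).

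For the size bound, the key point is that the fractional piercing number is ``subadditive along the cut'' in the right direction: the restriction to $h$ of an optimal fractional solution for $\Boxes$ is a feasible fractional solution for the $(d-1)$-dimensional instance $\Boxes_h$, so $\fPN(\Boxes_h)\le \fPN(\Boxes)$; likewise $\fPN(\Boxes^-),\fPN(\Boxes^+)\le \fPN(\Boxes)$ since these are sub-instances. Writing $P(n,d)$ for the worst-case size of the returned set on $n$ boxes in $\Re^d$ with fractional optimum $\fPN$, the recursion gives $P(n,d) \le P(n, d-1) + P(n/2, d) + P(n/2, d)$ with base case $P(n,1)\le \fPN$; unrolling the $n$-recursion contributes a factor $O(\log n)$ per dimension, yielding $P(n,d) = O(\fPN \log^{d-1} n)$. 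The running time satisfies $T(n,d) \le T(n,d-1) + 2\,T(n/2,d) + O(n)$ (the $O(n)$ for finding the median and splitting), which solves to $O(n\log^{d-1} n)$.

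The main obstacle, and the only place that needs care, is the duality/monotonicity claim $\fPN(\Boxes_h) \le \fPN(\Boxes)$ for the \emph{cut} subproblem: one must check that ``project the fractional solution onto $h$'' really produces a feasible solution for the lower-dimensional \LP, i.e., that for a box $\bb$ crossing $h$, the vertices of the restricted arrangement $\ArrX{\Boxes_h}$ lying inside $\bb\cap h$ receive at least as much mass as $\sum_{v\in \bb\cap\VV} x_v \ge 1$ does in the original solution --- which holds because one can always push mass of the original fractional solution onto $h$ without decreasing coverage of any box that crosses $h$, and the boxes not crossing $h$ are handled by the other two recursive branches. (Equivalently, one can avoid fractional bookkeeping entirely by noting, as in \lemref{indep}, that the recursion partitions $\Boxes$ into $O(\log^{d-1} n)$ groups each of which is a one-dimensional instance on a line, solves each exactly by greedy, and bounds the total by $O(\log^{d-1}n)$ times the per-line optimum, each of which is at most $\fPN$.) With that monotonicity in hand, the size and time bounds follow by the straightforward recursions above.
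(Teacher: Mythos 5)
Your overall route is the same as the paper's (which only sketches the argument, deferring to the divide-and-conquer of \lemref{indep} and to Nielsen): solve $d=1$ exactly by greedy, cut at a median hyperplane $h$ orthogonal to the first axis, treat the boxes crossing $h$ as a $(d-1)$-dimensional instance on $h$, and recurse on the two sides. Your base case is fine (greedy is optimal and the $1$-dimensional LP is integral, so the integral and fractional optima coincide), and your observation that orthogonally projecting a fractional solution onto $h$ yields a feasible fractional solution for the crossing boxes, certifying $\fPN(\Boxes_h)\le\fPN(\Boxes)$, is exactly the feasibility fact one needs. The running-time recurrence is also fine.

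The gap is in the size accounting. The recurrence $P(n,d)\le P(n,d-1)+2P(n/2,d)$ with base case $P(\cdot,1)\le\fPN$ charges each of the $\Theta(n)$ nodes of the first-coordinate recursion tree up to $\fPN$ separately, and as written it unrolls to $O(n\cdot\fPN)$ already for $d=2$, not $O(\fPN\log n)$. Monotonicity of $\fPN$ under passing to sub-instances is not enough here: unlike \lemref{indep}, where one takes the best group, piercing costs \emph{add up} over all nodes. The correct accounting is per level of the recursion: the sub-instances created at one level live in pairwise disjoint open slabs (a box assigned to a node avoids all ancestor cut hyperplanes), so restricting a single optimal fractional solution to each slab and projecting onto that node's hyperplane shows that the fractional piercing numbers of all same-level sub-instances \emph{sum} to at most $\fPN(\Boxes)$; by induction on $d$ each level then contributes $O(\fPN\log^{d-2}n)$ points, and the $O(\log n)$ levels give $O(\fPN\log^{d-1}n)$. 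Your parenthetical remark gestures at this grouping, but as worded it treats each of the $O(\log^{d-1}n)$ groups as a single line whose optimum is at most $\fPN$; a level in fact contains up to $\Theta(n)$ lines, and it is the disjoint-slab summation, not per-line monotonicity, that caps each level's total cost at $\fPN$. With that repair the proof matches the paper's intended argument.
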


\begin{corollary}
    \corlab{indep:good}%
    Let $\Boxes$ be a set of $n$ boxes in $\Re^d$ for $d\geq2$. Then
    the following two sets can be computed in $O( n \log^{d-1} n)$
    time:
    \begin{compactenumi}
        \smallskip%
        \item An independent set $\Boxes' \subseteq \Boxes$ of
        $\Boxes$ of size
        $ \Omega\bigl( \tfrac{\popt}{ \log^{d-1} n \log \log
           n}\bigr)$, and

        \smallskip%
        \item a piercing set $\P \subseteq \Re^d$ of $\Boxes$ of size
        $O( \popt \log^{d-1} n)$.
    \end{compactenumi}
\end{corollary}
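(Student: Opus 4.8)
The corollary is a direct consequence of \lemref{indep}, \lemref{pierce}, and the two facts recorded just above \lemref{indep}: namely $\fPN(\Boxes)=\fIS(\Boxes)$ by \LP duality, and $\fPN \leq \popt \leq c_1 \fPN \log\log \fPN$ for a constant $c_1>0$, which rearranges to $\fPN \geq \popt/(c_1 \log\log\popt)$. So the plan is simply to invoke the two algorithmic lemmas on $\Boxes$ and then substitute these inequalities.

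For part (i): run the algorithm of \lemref{indep} on $\Boxes$. In $O(n\log^{d-1}n)$ time it returns an independent set $\Boxes'\subseteq\Boxes$ of size $\Omega\!\pth{\fIS(\Boxes)/\log^{d-1}n}$. Since $\fIS(\Boxes)=\fPN(\Boxes)\geq \popt/(c_1\log\log\popt)$, the size is $\Omega\!\pth{\popt/(\log^{d-1}n\,\log\log\popt)}$. Finally, because each box is pierced by any of its own vertices we have the trivial bound $\popt\leq n$, hence $\log\log\popt\leq\log\log n$, giving the claimed size $\Omega\!\pth{\popt/(\log^{d-1}n\,\log\log n)}$.

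For part (ii): run the algorithm of \lemref{pierce} on $\Boxes$. In $O(n\log^{d-1}n)$ time it returns a piercing set $\P\subseteq\Re^d$ of $\Boxes$ of size $O\!\pth{\fPN(\Boxes)\log^{d-1}n}$, and since $\fPN(\Boxes)\leq\popt$ this is $O\!\pth{\popt\log^{d-1}n}$, as required. The total running time of both computations is $O(n\log^{d-1}n)$.

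There is essentially no obstacle here; the only mild subtlety is degenerate small values of $\popt$, where the expression $\log\log\popt$ is not meaningful (e.g.\ $\popt\le 2$). In that regime the asymptotic claims are vacuous or trivially satisfied: a single box forms an independent set, and the output of \lemref{pierce} already has size $O(\log^{d-1}n)$, which one absorbs into the $O(\cdot)$ notation; alternatively one fixes a small threshold constant below which both bounds hold by inspection. So the main work is just the bookkeeping of chaining the size bounds through $\fIS=\fPN$ and $\fPN=\Theta(\popt/\log\log\popt)$.
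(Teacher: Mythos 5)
Your proposal is correct and follows essentially the same route as the paper: invoke \lemref{indep} and \lemref{pierce} directly on $\Boxes$, then convert between $\fIS$, $\fPN$, and $\popt$ via \LP duality and the bound $\fPN \geq \popt/(c_1 \log\log\popt)$ implied by \lemref{eps:net:2d:log:log}, finally using $\log\log\popt \leq \log\log n$. Your extra remarks on degenerate small $\popt$ are fine but not needed; there is nothing to add.
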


\begin{proof}
    By \lemref{indep} one can compute, in $O(n\log^{d-1} n)$ time, an
    independent set $\Boxes'$ of size
    $\Omega\bigl( \frac{\fIS}{\log^{d-1}n} \bigr)$.  By
    \lemref{pierce}, a piercing set $\P$ of size
    $O(\popt\log^{d-1} n)$ can be computed in $O(n\log^{d-1}n)$
    time. By \lemref{eps:net:2d:log:log}, we have
    \begin{equation*}
        \cardin{\Boxes'}
        =%
        \Omega\pth{ \frac{\fIS}{\log^{d-1}n}}
        =
        \Omega\pth{ \frac{\popt}{\log^{d-1}n \log \log n}},
    \end{equation*}
\end{proof}

\paragraph{Batch sampling}

For the purposes of our improved algorithm, we need to support the
following ``batch sampling'' operation.  Given a set $\Boxes$ of $n$
boxes in $\Re^d$, a multiset (i.e., list)
$\ListB \subseteq \Boxes$ of boxes such that $\cardin{\S}=O(n\log n)$, and a parameter $r>0$ such that $r=O(n\log n)$, the task at hand is to compute a random subset $\RSample$ of
$r$ vertices of $\ArrX{\Boxes}$, where each vertex $v\in\ArrX{\Boxes}$ is sampled independently with probability $\hwY{p}{\S} /\hwY{\VV}{\S}$, see \Eqref{doubling:w}.
This procedure can be implemented using the data structure
described in \defref{arr:ds1}, but one can do better, by sweeping along the $x_d$-axis and constructing the
data structure in one lower dimension, as follows.

Let $\hBoxes$ be the collection of $(d-1)$-dimensional projections
of the boxes in $\Boxes$, to the hyperplane $x_d=0$. Let $\VV'$
be the set of vertices in $\ArrX{\smash{\hBoxes}}$, the
arrangement of $\hBoxes$ in $\Re^{d-1}$.  Let $\Family $ be the
set of $x_d$-coordinates of all the vertices of the boxes of
$\Boxes$.

Build $\DS$, an instance of the $(d-1)$-dimensional data structure of
\defref{arr:ds} on $\hBoxes$.  Importantly, initially all the boxes of
$\hBoxes$ are inactive.  Observe that
$\VV \subseteq \VV'\times\Family$. Assume $\Family$ is sorted and the
ith point (in sorted order) is denoted by $e_i$. We perform two
space-sweeps along the $x_d$-axis. In step $i$ of a sweep, $\DS$
represents the point-set $\J_i=\pth{\VV'\times\{\e_i\}} \cap \VV$
lying on a $\Re^{(d-1)}$-dimensional subspace, and we compute
$\alpha(i)$, the total weight of vertices in $\J_i$

At any step $i$ of the first sweep, for every box $\bb \in \Boxes$
that starts (resp. ends) at $e_i$ call $\Insert(\bb)$ (resp. $\Delete(\bb)$). This activates/deactivates all the vertices on the
boundary on $\bb$. Similarly, for every $\bb\in \ListB$ that
begins (resp. ends%
) at $e_i$, use $\double(\bb)$ (resp. $\halve(\bb)$) operation on
$\DS$ to double (resp. halve) the weight of the points in
$\J_i\cap\bb$. Next, use the $\weight$ operation on $\DS$ to get
the weight of all the points in $\J_i$ and store it in
$\alpha[i]$.
At the end of the first sweep, independently sample $r$
numbers from the set $\{1,...,2n\}$ where the integer $j$ is
sampled with probability
${\alpha(j)}/{\sum\nolimits_{i=1}^{2n} \alpha(i)}$. Let $\delta(j)$
denote the number of times $j$ was sampled.

Next, reset $\DS$ and begin the second sweep. In step $i$ of the
second sweep, perform the same operations except replace the
weight-computing step with taking $\delta(i)$ independent samples from
$\J_i$ using the $\sample$ operation on $\DS$.

Observe that ${\alpha(i)}/{\sum\nolimits_{j=1}^{2n} \alpha(j)}$ is the
total probability mass of all the points in $\J_i$ at the end of the
first sweep. Using this, it is easy to verify that at the end of the
second sweep the sampled points correspond to the desired subset
$\RSample$.

Initializing $\DS$ takes $O(n^{{d}/{2}}\log n)$ time, and performing
each step of the sweep takes
\begin{equation*}
        O(n^{(d-2)/2}\log n)
\end{equation*}
time. Each sweep handles $O(|\ListB|)=O(n\log n)$ events, hence,
$\RSample$ can be computed in $O(n^{d/2}\log^2 n)$ time. We thus
obtain the following:

\begin{lemma}
    \lemlab{improved:rel_sample}
    Let $\Boxes$ be a set of $n$ boxes in $\Re^d$, and let $\ListB$ be
    a multiset of boxes such that $\cardin{\ListB} = O( n \log
    n)$. Let $\wC$ be the doubling weight function induced by $\ListB$
    over the vertices $\VV$ of $\ArrX{\Boxes}$, see
    \Eqref{doubling:w}. Given a parameter $r>0$ such that $r=O(n\log n)$, a random subset
    $\RSample\subset \VV$ of size $r$, where each vertex of $\VV$ is sampled
    with probability proportional to its weight, can be computed in  $O(n^{{d}/{2}}\log^{2} n)$ time.
\end{lemma}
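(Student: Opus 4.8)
The plan is to prove the lemma by analyzing exactly the two-sweep batch-sampling procedure described above. The easy structural input is the $x_d$-decomposition of $\VV$: every vertex of $\ArrX{\Boxes}$ lies on a facet of some box perpendicular to each coordinate axis, so its $x_d$-coordinate lies in $\Family$ and its orthogonal projection onto $x_d=0$ is a vertex of $\ArrX{\hBoxes}$; hence $\VV\subseteq\VV'\times\Family$, and the sets $\J_i=(\VV'\times\{e_i\})\cap\VV$, with $e_i$ ranging over the $O(n)$ values of $\Family$, partition $\VV$ according to $x_d$-coordinate, so $\wC(\VV)=\sum_i\wC(\J_i)$. It then remains to show (a) that at each sweep step the data structure $\DS$ faithfully represents the current slice together with the restriction of $\wC$; (b) that the composite sampler built on top of (a) is correct; and (c) that the whole thing runs in $O(n^{d/2}\log^2 n)$ time.

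The core of the argument is the maintenance invariant for (a). After the update calls performed at step $i$ of a sweep, the active set of $\DS$ equals the set of $(d-1)$-dimensional projections of the boxes whose $x_d$-extent contains $e_i$, so the vertices tracked by $\DS$ are precisely (the projections of) the slice $\J_i$; and the update multiset of $\DS$ equals $\{\bb_\downarrow : \bb\in\ListB,\ e_i\text{ in the }x_d\text{-extent of }\bb\}$, so the weight $\DS$ assigns to $v=(v',e_i)\in\J_i$ equals $\hwY{v}{\ListB}=\wC(v)$. Granting this, \weight on a large enclosing box returns $\hwY{\VX{\AB}}{\S}=\wC(\J_i)=:\alpha(i)$ and \sample returns $v\in\J_i$ with probability $\wC(v)/\alpha(i)$. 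I expect establishing this invariant to be the main obstacle --- in particular, verifying that at $x_d$-values where several box facets coincide a fixed tie-breaking order (say, process all \Insert/\double before, and all \Delete/\halve after, the weight/sample step) reproduces the arrangement-vertex structure of the slice exactly, with no vertex over- or under-counted; the remainder is essentially bookkeeping.

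Given the invariant, correctness of the sampler in (b) is a short conditioning argument. The first sweep records $\alpha(i)=\wC(\J_i)$ and $\sum_i\alpha(i)=\wC(\VV)$; we then draw $r$ indices $j_1,\dots,j_r$ independently with $\Prob{j_\ell=i}=\alpha(i)/\wC(\VV)$ and, for each $\ell$, take one independent \sample inside $\J_{j_\ell}$ (executed during the second sweep, after regrouping the draws into the per-step counts $\delta(i)$). A single draw then returns a vertex $v=(v',e_i)\in\VV$ with probability $\frac{\alpha(i)}{\wC(\VV)}\cdot\frac{\wC(v)}{\alpha(i)}=\frac{\wC(v)}{\wC(\VV)}$, the $r$ draws are mutually independent by construction, and since the $\J_i$ partition $\VV$ every $v\in\VV$ is reachable through exactly one index; hence $\RSample$ is $r$ independent weight-proportional samples from $\VV$, as claimed.

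For the running-time bound (c), apply \lemref{data-structure} in dimension $d-1$ to the $O(n)$ boxes of $\hBoxes$: building $\DS$ costs $O(n^{d/2}\log n)$, each standard operation (\Insert, \Delete, \double, \halve, \weight) costs $O(n^{(d-2)/2}\log n)$, and \sample costs $O(\log n)$. One sweep handles $O(n)$ events and issues $O(n)$ \Insert/\Delete calls and $O(n)$ \weight calls, contributing $O(n^{d/2}\log n)$, together with $O(\cardin{\ListB})=O(n\log n)$ \double/\halve calls, contributing $O(n^{d/2}\log^2 n)$; selecting the $r$ indices from the $\alpha$-distribution costs $O(n+r)$; and the second sweep additionally performs $\sum_i\delta(i)=r=O(n\log n)$ \sample calls at $O(\log n)$ each, i.e.\ $O(n\log^2 n)$. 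Summing, the total is $O(n^{d/2}\log^2 n)$, and the hypotheses $\cardin{\ListB}=O(n\log n)$ and $r=O(n\log n)$ are exactly what keep these lower-order contributions from dominating.
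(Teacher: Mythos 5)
Your proposal is correct and follows essentially the same route as the paper: the same sweep decomposition $\VV\subseteq\VV'\times\Family$ with slices $\J_i$, the same two-sweep procedure using the $(d-1)$-dimensional data structure of \lemref{data-structure}, the same conditioning argument for correctness, and the same cost accounting yielding $O(n^{d/2}\log^2 n)$. You in fact spell out details the paper leaves implicit (the maintenance invariant and the ordering of \Insert/\double before and \Delete/\halve after the weight/sample step at coinciding facet coordinates), which only strengthens the argument.
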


\paragraph{The new \MWU algorithm}
\seclab{fewer:iterations}

As in the previous algorithm, the algorithm performs an exponential
search on $k$ until $k\leq 2\popt$. For a particular guess $k$, the
new algorithm also works in rounds. The difference is how each round
is implemented. Instead of doubling the weight of a light box as soon
as we find one, we proceed as follows. Let
$\wC:\VV \rightarrow\Re_{\geq 0}$ be the weight function at the
beginning of the current round. Observe that for any point $p\in\VV$, $\wC(p)=\hwY{p}{\ListB}$ where $\ListB$ is the multi-set of boxes that doubled their weights so far, see \Eqref{doubling:w}.
At the beginning of each round, we process
$\Boxes$ and $\ListB$ to generate a random subset $\RSample \subset\VV$ of
size $O(k\log n)$, such that for a box $\bb\in \Boxes$ one can
determine whether it is light, i.e., $\wC(\bb)\leq \wC(\VV)/4k$ by
checking if $|\bb\cap \RSample|\leq |\RSample|/4k$.  By processing
$\RSample$ into an orthogonal range-counting data structure
\cite{a-rs-04}, we can compute $|\bb\cap\RSample|$, in
$O(\log^{d-1} n)$ time, for any $\bb\in\Boxes$, by querying the data
structure with $\bb$.  By repeating this for all boxes of $\Boxes$, we compute a set $\LightB \subseteq \Boxes$ of light boxes.

Next, we compute an independent set of boxes
$\IndepB \subseteq \LightB $ using \lemref{indep}. There are several
possibilities:
\begin{compactitem}
    \smallskip
    \item If $\cardin{\IndepB} \geq 2k$, then the guess for $k$ is too
    small. We double the value of $k$, and restart the process.

    \smallskip
    \item If $\cardin{\IndepB} < c_1k/ \log^{2d-1} n$, then by
    \corref{indep:good}, a piercing set $\P$ of $\LightB$ of $O(k)$
    points can be computed in $O(n\log^{d-1} n)$ time. The remaining
    boxes of $\HeavyB = \Boxes \setminus \LightB$ are (say)
    $1/4.01k$-heavy.  By applying \lemref{net} to $\HeavyB$ and
    $\RSample$, we compute a weak $1/4.01k$-net $\N$ for these boxes
    of size $O( k \log \log k)$ in $O( n \log^d n)$ expected time. The
    set $\P\cup \N$ is the desired piercing set of $\Boxes$ of size
    $O(\popt\log\log\popt)$.

    \smallskip
    \item If $\cardin{\IndepB} \geq c_1k/ \log^{2d-1} n$, then we
    update $\ListB$, the multi-set of boxes doubled so far, to
    $\ListB = \ListB \cup \IndepB$. The algorithm now continues to the
    next round.
\end{compactitem}

\begin{remark}
    Observe that the batch doubling operation in the above algorithm
    corresponds to merely updating the list $\ListB$. The actual work
    associated with the doubling is in regenerating the sample
    $\RSample$ at the beginning of the next round, which is done using
    batch sampling.
\end{remark}

\paragraph{Computing the random sample $\RSample$ in each round}
\seclab{relative:approx}

Let $\wC:\VV \rightarrow\Re_{\geq 0}$ be the weight function in the
beginning of a particular round. Our goal is to compute a sample
$\RSample \subset \VV$, such that we can check if a box $\bb\in\Boxes$
is light by checking if $|\bb\cap \RSample|\leq
{|\RSample|}/{4k}$. Recall that $\ListB$ denotes the list of boxes for
which the \MWU algorithm had doubled the weights. Moreover, recall
that the weight of a vertex $v \in \VV$ is $2^{\nrstY{\ListB}{v}}$,
where $\nrstY{\ListB}{v}$ denotes the number of boxes in $\ListB$ that
contain $v$. We use the concept of relative approximations.

\begin{defn}
    Let $\Sigma=(\X,\RangeSet)$ be a finite range space, and
    $\RSample \subset \X$, and let
    $\wC: \X \xrightarrow{}\Re_{\geq 0}$ be a weight function. For a
    range $\range \in \RangeSet$, let
    \begin{math}
        \Measure{\range}%
        =%
        {\wC({\range \cap \FGroundSet})}/{\wC({\FGroundSet})}
    \end{math}
    and
    \begin{math}
        \sMeasure{\range}%
        =%
        {\cardin{\range \cap \RSample}/}{\cardin{\RSample}}.
    \end{math}
    Then, $\RSample$ is a \emphi{relative $(p,\eps)$-approximation} of
    $\Sigma$ if for each $\range\in \RangeSet$ we have:
    \begin{compactenumi}
        \smallskip%
        \item If $\Measure{\range} \geq p$, then
        $\displaystyle \displaystyle (1-\eps) \Measure{\range} \le
        \sMeasure{\range}\leq (1+\eps) \Measure{\range}$.

        \smallskip%
        \item If $\Measure{\range} \leq p$, then
        $\sMeasure{\range} \leq (1+\eps)p$.
    \end{compactenumi}
\end{defn}

It is known \cite{hs-rag-11, h-gaa-11} that if the VC-dimension of
$\Sigma$ is $\Dim$, then a random sample $\RSample$ of size
$ O \bigl( \tfrac{1}{\eps^2p} \bigl[\Dim \log\frac{1 }{ p} +
\log\frac{1}{\BadProb} \bigr] \bigr)$, where each point is chosen with
probability proportional to its weight, is a relative
$(p,\eps)$-\si{approx}\-\si{imation} with probability
$\geq 1-\BadProb$. In view of this result, we can detect light boxes
of $\Boxes$ (with respect to $\S$) as follows. Set $\eps=0.01$,
$p=\frac{\eps}{20k}$, $\BadProb=\frac{1}{n^{O(1)}}$. We chose a random
subset $\RSample \subseteq \VV$ of $r=O(\frac{k}{\eps^2}\log n)$
points. Then we have the following property for each box
$\bb\in\Boxes$.

\begin{compactenumI}
    \smallskip
    \item \itemlab{c:i} %
    If
    $\Measure{\bb} = \wC({\bb \cap \VV}) /\wC(\Boxes) \geq (\eps/20k)$
    then
    $0.99 \Measure{\bb} \leq \sMeasure{\bb} \leq 1.01 \Measure{\bb}$.

    \smallskip
    \item \itemlab{c:ii} %
    If $\Measure{\bb} \leq \eps/20k$ then
    $ \sMeasure{\bb} \leq 1.01/20k$.
\end{compactenumI}
\smallskip%

Therefore to check if a box is $\bb$ is light, it suffices to check
$|\bb\cap\RSample|$. As for computing the sample $\RSample$, observe
that this is exactly what batch sampling is designed for, see
\lemref{improved:rel_sample}.

\paragraph{Analysis}

The correctness of the new \MWU follows from the previous one. We now
analyze the running time. Each round except the last round doubles the
weight of $\Omega(k/ \log^{2d} n)$ boxes. Since the total number of
weight-doubling operations performed by the algorithm is
$O(\popt \log n)$, see \cite{ap-naghs-20}, the algorithm stops within
$O( \log^{2d} n)$ rounds. In a particular round, the algorithm uses
batch sampling to recompute the random subset $\RSample$ which it uses
to identify the set $\LightB\subseteq \Boxes$ of light
boxes. \lemref{improved:rel_sample} shows that the cost of batch
sampling is bounded by $O(n^{{d}/{2}}\log^{2} n)$ time. As discussed
above, once $\RSample$ is computed, $\LightB$ can be identified in
$O(n\log^{d-1} n)$ time. Finding an independent set
$\IndepB\subseteq\LightB$ using \lemref{indep} also takes
$O(n\log^{d-1} n)$ time. Updating the multi-set $\ListB$ of boxes
whose weights have been doubled so far also takes $O(n)$ time. The
algorithm computes a piercing set and a weak net only in the last
round. Together, they take $O(n\log^{d} n)$ time.  The running time
for these steps is dominated by the time for the batch sampling in
each round.  Considering the need to do an exponential search for
$\popt$, we obtain the following:

\begin{theorem}
    \thmlab{pierce:alg}
    Let $\Boxes$ be set of $n$ axis-aligned boxes in $\Re^d$, for
    $d\geq 2$, and let $\popt=\PNX{\Boxes}$ be the piercing number of
    $\Boxes$. A piercing set of $\Boxes$ of size
    $O(\popt\log\log \popt)$ can be computed in
    $O(n^{d/2}\log ^{2d+3} n)$ expected time.
\end{theorem}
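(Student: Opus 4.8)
The plan is to assemble the pieces already developed in \secref{improved} and bound the total running time across all stages of the exponential search. Here is how I would organize it.

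\medskip

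\noindent\textbf{Correctness.} First I would observe that the correctness of the new \MWU reduces to that of the basic one (\lemref{eps:net:2d:log:log}): the only operational differences are (i) weights are maintained implicitly via the list $\ListB$ rather than explicitly on $\VV$, which changes nothing semantically; (ii) lightness of a box is tested against the relative $(p,\eps)$-approximation sample $\RSample$ rather than exactly — by properties \itemref{c:i} and \itemref{c:ii}, a box flagged light has $\Measure{\bb}\le 1.01/(0.99\cdot 4k)<1/(3.96k)$ say, and a box flagged heavy has $\Measure{\bb}\ge 1/(4.01k)$, so up to constant-factor adjustments of $\eps$ the same invariants hold; and (iii) the algorithm doubles a \emph{batch} (an independent set $\IndepB\subseteq\LightB$) at once rather than one box at a time — but each box in the batch is individually light, so each doubling is a legitimate \MWU doubling step, and the analysis of \lemref{eps:net:2d:log:log} (the lower bound $\popt 2^{\lfloor i/\popt\rfloor}\le W_i$ and the upper bound $W_i\le\mm\exp(\eps i)$) applies verbatim to the sequence of individual doublings, of which there are $O(\popt\log\mm)=O(\popt\,d\log n)$ in total when $k\ge\popt$. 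When the final round terminates, the heavy boxes $\HeavyB$ are $1/(4.01k)$-heavy (with respect to $\RSample$, hence $\Omega(1/k)$-heavy with respect to $\wC$ by the relative-approximation guarantee), so the weak net of \lemref{net} pierces them, and $\P\cup\N$ pierces all of $\Boxes$ with size $O(k\log\log k)=O(\popt\log\log\popt)$.

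\medskip

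\noindent\textbf{Bounding the number of rounds.} Next I would bound the number of rounds in a stage with a correct guess $k\in[\popt/2,\popt]$. In every round that is not the last, we are in the third bullet: $\cardin{\IndepB}\ge c_1 k/\log^{2d-1} n$, and all these boxes are added to $\ListB$, i.e., their weights are doubled. Since the total number of individual doublings in the stage is $O(\popt\log\mm)=O(k\log n)$ (from the basic analysis, using $k\ge\popt/2$ and $\mm=|\VV|=O(n^d)$), the number of non-last rounds is at most $O(k\log n)/(c_1 k/\log^{2d-1} n)=O(\log^{2d} n)$. Adding the last round, a stage uses $O(\log^{2d} n)$ rounds. (I should also handle the first bullet — if $\cardin{\IndepB}\ge 2k$ then since $\ISX{\Boxes}\le\popt$ we certainly have $k<\popt$, so this only fires when the guess is too small, and the stage is abandoned; and the exponential search has $O(\log\popt)$ stages total, which is absorbed below.)

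\medskip

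\noindent\textbf{Per-round and total cost.} Then I would tally the per-round cost. The dominant operation is recomputing $\RSample$, which by \lemref{improved:rel_sample} (applied with $r=O((k/\eps^2)\log n)=O(k\log n)=O(n\log n)$ and $\ListB$ of size $O(n\log n)$, since $\cardin{\ListB}\le$ total doublings $=O(\popt\log n)=O(n\log n)$) costs $O(n^{d/2}\log^2 n)$ time. Everything else per round is cheaper: building the orthogonal range-counting structure on $\RSample$ and querying it for all $n$ boxes to extract $\LightB$ costs $O(n\log^{d-1} n)$ (using \cite{a-rs-04}); computing $\IndepB$ via \lemref{indep} costs $O(n\log^{d-1} n)$; updating $\ListB$ costs $O(n)$. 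The piercing set and weak net are computed once (last round only) in $O(n\log^d n)$ total (\corref{indep:good} and \lemref{net}, the latter in expectation). So a stage costs $O(\log^{2d} n)\cdot O(n^{d/2}\log^2 n)=O(n^{d/2}\log^{2d+2} n)$ expected time, and multiplying by the $O(\log n)$ stages of the exponential search gives the claimed $O(n^{d/2}\log^{2d+3} n)$ expected running time.

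\medskip

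\noindent The main obstacle I anticipate is the bookkeeping in the second paragraph: one must be careful that the bound "$O(\popt\log n)$ individual doublings in a stage" is exactly the quantity that controls the number of rounds, and that the relative-approximation sample correctly certifies lightness/heaviness with the constants chosen so that the light-box test and the eventual weak-net argument are mutually consistent (i.e., the boxes declared heavy are genuinely $\Omega(1/k)$-heavy in the true weight $\wC$, not merely in the sampled measure $\overline{m}$). Once those constants are pinned down and the round count is established, the running-time sum is routine.
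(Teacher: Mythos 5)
Your proposal is correct and follows essentially the same route as the paper's own analysis: bound the rounds per stage by dividing the $O(\popt\log n)$ total doublings by the batch size $\Omega(k/\log^{2d-1} n)$, observe that batch sampling at $O(n^{d/2}\log^2 n)$ per round (\lemref{improved:rel_sample}) dominates the other per-round work, and multiply by $O(\log^{2d} n)$ rounds and $O(\log n)$ stages of the exponential search. You in fact supply more correctness detail than the paper does (which simply defers to the basic \MWU); the one point worth making explicit there is that batch doubling stays legitimate precisely because $\IndepB$ is an independent set, so doubling one box of the batch cannot raise the weight of another batch member and destroy its lightness mid-batch.
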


\section{Improving the running time}
\seclab{candidate}

Let $\Boxes=\{\bb_1,...,\bb_n\}$ be a set of axis-parallel $d$-boxes
in $\Re^d$, where a box
$\bb_i=\prod\nolimits_{t=1}^d [\alpha_t,\beta_t]$,
$\alpha_t,\beta_t\in \Re$, and $\alpha_t < \beta_t$. In this section,
we show that a subset $\P$ of $n^{O(1+\log d)}$ vertices of $\ArrX{B}$
can be computed in $n^{O(1+\log d)}$ time such that $(\P,\Boxes)$
contains a hitting set of size $(4d+1)2^{4d+1}\cdot \popt$, where
$\popt$ is the piercing number of $\Boxes$.  We can then use the
algorithm of Agarwal and Pan to compute a hitting set of
$(\P,\Boxes)$.

For $0\leq k\leq d$, a $(d-k)$-dimensional face $f$, or a
\emphi{$(d-k)$-face} for brevity, of $\bb_i$ is specified by fixing
$d-k$ coordinates to one of the endpoints of the corresponding
intervals. The \emphw{codimension} of $f$, denoted by $\codim( f)$ is
$k$.

For a $d$-dimensional box $\bb$, let $\CFaceX{\bb}$ be the collection
of its $\floor{d/2}$-faces. The codimension of these faces is
$\ceil{ d/2}$. Let
\begin{equation*}
    \psi
    =
    2^{\ceil{ d/2}} \binom{d}{\floor{ d/2 }}
    \leq%
    2^{3d/2}
\end{equation*}
be the number of $\floor{ d/2}$ faces in a $d$-box. For a set of
$d$-dimensional boxes $\Boxes'$, define
$\CFaceX{\Boxes'}=\bigcup\nolimits_{\bb\in \Boxes'}\CFaceX{\bb}$.

\begin{lemma}
    \lemlab{pairs}%
    For all $\bb,\bb' \in \Boxes$, with
    $\bb \cap \bb' \neq \emptyset$, there is a $\floor{ d/2 }$ face of
    $\bb$ or $\bb'$, such that $f \cap \bb \cap \bb' \neq \emptyset$.
\end{lemma}

\begin{proof}

    Let $v$ be a vertex of $\bb\cap\bb'$ and let $g$ and $g'$ be the
    lowest dimensional faces of $\bb$ and $\bb'$ respectively such
    that $v=g\cap g'$. Since $\codim(v)=d$,
    $\max\{\codim(g),\codim(g')\}\geq \ceil{ d/2 }$. Without loss of
    generality, assume that $\codim(g)\geq \ceil{ d/2}$, them
    $\dim(g)\leq \floor{ d/2 }$. The $\floor{d/2}$-dimensional face
    $f$ containing $g$ is the desired face with
    $f\cap\bb\cap\bb'\neq \emptyset$.
\end{proof}

\begin{corollary}
    \corlab{half:dim}%
    Let $\BSB$ be a set of $m \geq 3$ boxes in $\Re^d$ such
    $\bigcap \BSB \neq \emptyset$. Then there exists a box
    $\bb_j\in\BSB$, such that one of its $\floor{ {d}/{2}}$-faces $f$
    intersects at least ${m}/(3\psi)$ boxes in $\BSB$.
\end{corollary}
\begin{proof}
    Let $\F=\CFaceX{\BSB}$ be the set of $\floor{d/2}$-faces of boxes
    in $\BSB$. We have $|\F|\leq m\psi$. By \lemref{pairs}, for any
    pair $\bb,\bb'\in\BSB$, there is a face $f\in\F$ such that
    $f\cap\bb\cap\bb'\neq \emptyset$. Hence, by the pigeonhole
    principle, there is a face in $\F$ that intersects at least
    \begin{equation*}
        \frac{\binom{m}{2}}{|\F|}%
        =%
        \frac{m(m-1) }{2\cardin{\F}}
        \geq
        \frac{m(m-1)}{2m \psi}
        \geq
        \frac{m}{3\psi}
    \end{equation*}
    boxes of $\BSB$.
\end{proof}

For a subset $\BSB\subseteq \Boxes$, a vertex $v\in\ArrX{\BSB}$ is
\emphw{defined} by $k$ faces of $\CFaceX{\BSB}$ if $v$ is a vertex in
the arrangement of a subset of $k$ faces of $\CFaceX{\BSB}$.

\begin{lemma}
    \lemlab{sparse}%
    Let $\BSB=\{\bb_1,...,\bb_m\}$ be a set of $m$ boxes in $\Re^d$
    such $\bigcap\BSB \neq \emptyset$. Then there exists a vertex of
    $\ArrX{\BSB}$ that is contained in at least $m/2^{4d}$ boxes of
    $\BSB$ and is defined by a subset of $ 1 + \floor{\log_2 d}$ faces
    of $\CFaceX{\BSB}$.
\end{lemma}
\begin{proof}
    The proof is by induction on $d$. For $d=2$, there is clearly a
    vertex formed by the intersection point of two edges in $\BSB$
    that lies on the boundary of $\bigcap\nolimits_{i=1}^m\bb_i$. Thus
    the claim holds.

    For $d >2$, by \corref{half:dim}, there exists a $\floor{d/2}$
    face of a box in $ \BSB$ that intersects at least $m/ \psi$ boxes
    of $\BSB$. Let $\BSC\subseteq\BSB$ be the subset of these
    boxes. Observe that the boxes of $\BSC$ have a common intersection
    point in $f$, i.e.
    $(\bigcap\nolimits_{\bb\in\BSC} \bb)\cap f\neq \emptyset$.  Let
    $h$ be the $\floor{d/2}$-flat that supports $f$. We apply the
    claim recursively to the set of $\floor{d/2}$-boxes
    $\BSC \cap h = \Set{ \bb \cap h}{\bb \in \BSC}$ on $h$. Clearly,
    this recursion terminates with a vertex $v$.

    Let $\tau(d)$ be the number of boxes of $\BSB$ defining $v$. Then
    we have the recurrence

\[
\tau(d) =
\begin{cases}
  1, & d = 1, \\
  2, & d = 2, \\
  \tau(\floor{ d/2 })+1, & d > 2.
\end{cases}
\]
    The solution of the recurrence is clearly $1+\floor{\log_2 d}$.
    Let $u(d)$ be the fraction of boxes of $\BSB$ that contain
    $v$. We now have the following recurrence,
    \begin{equation*}
        u(d) =
        \begin{cases}
          1, &  d = 1,2 \\
          \displaystyle \frac{u(\floor{ d/2 })}{\psi(d)}, & d > 2.
        \end{cases}
    \end{equation*}
    Since $\psi\leq 2^{3d/2}$, it is easily seen that
    \begin{math}
        u(d) \geq
        \frac{1}{3^{\log_2 d}\cdot 2^{3d}} \geq 1/2^{4d}.
    \end{math}
\end{proof}

For a set $\BS$ of boxes in $\Re^d$, let
\begin{equation}
    \VSetX{\BS} = \bigcup_{\BSB \subseteq \BS\,:\, \cardin{\BSB}
       \leq 1 + \floor{\log_2 d} } \VX{\ArrX{\BSB}}.
    \eqlab{V:Set}
\end{equation}
Observe that
$\cardin{\VSetX{\BS}} = O( \cardin{\BS}^{1 + \floor{\log_2 d}})$.

\begin{lemma}
    \lemlab{replacement}%
    Let $\BSB$ be a set of boxes in $\Re^d$ such that $\bigcap \BSB\neq \emptyset$.  Then, there exists a set $\S \subseteq \VSetX{\BSB}$
    that stabs all the boxes of $\BSB$, and $|\S|=(4d+1)\cdot 2^{4d+1}$.
\end{lemma}
\begin{proof}
    Let $\P = \VSetX{\BSB}$. Clearly, $\P$ stabs all the boxes in
    $\BSB$. Consider the piercing \LP and its dual \LP described in
    \secref{improved} (see \figref{lp:s}) \emph{piercing linear
       program}. Let $\alpha$ be the fractional independence number,
    i.e., $\alpha=\fIS (\BSB)$, the optimal solution of the dual \LP.

    Let $\omega : \BSB \to \Re$ represent the weight assigned to each
    box in the optimal solution of the dual \LP. Take a random sample
    $\BSC \subseteq \BSB$, of size $N = c \fIS \log \fIS$, where each
    sample is independent and each box $\bb \in \BSB$ is sampled
    proportional to its weight. Specifically, for a box $\bb \in \BSB$
    assigned weight $\omega(\bb)$, it is sampled with probability
    $\omega(\bb) / \fIS$. Note that
    $\sum\nolimits_{\bb \in \BSB} \omega(\bb) = \fIS$.

    For a point $p$, let $\BSB \sqcap p = \Set{\bb\in\BSB}{p\in \bb}$,
    and consider the dual set system
    \begin{equation*}
        \Sigma^*=(\BSB,\Set{ \BSB \sqcap p}{ p\in \P }),
    \end{equation*}
    where each range is the subset of boxes pierced by a point of
    $\P$. By lifting each box $\bb$ in $\BSB$ to a point
    $\bb^*\in\Re^{2d}$ and each point in $p\in\P$ to a quadrant $p^*$
    in $\Re^{2d}$ such that $p\in \bb$ if and only if $\bb^*\in p^*$,
    we can show that the VC-dimension of $\Sigma^*$ is bounded by
    $4d$. Thus, by selecting the constant $c$ appropriately while
    fixing $N$ (where $c$ is a small absolute constant depending only
    on $d$), it can be ensured that $\BSC$ forms a
    $(p, \eps)$-relative approximation \cite{h-gaa-11,hs-rag-11}, with
    $p = 1 / (2\alpha)$ and $\eps = 1 / 4$.

    Since for any $p \in \P$, we have
    $\frac{1}{\alpha}\cdot\sum\nolimits_{\bb \in \BSB: p \in \bb}
    {\omega(\bb)} \leq \frac{1}{\alpha}$, it follows that for any
    $p \in \P$, the set satisfies
    \begin{math}
        \frac{\cardin{ \BSC \sqcap p } }{N} \leq%
        \frac{1 + {1}/{4}}{\alpha}.
    \end{math}
    This implies that for any $p \in \P$,
    \[
        \cardin{\BSC \sqcap p}%
        \leq%
        \Bigl(1 + \frac{1}{4}\Bigr)c \log \alpha \leq 2c \log \alpha.
    \]

    On the other hand, by \lemref{sparse}, there is a point $p \in \P$
    that stabs at least $N/2^{4d}$ boxes of $\BSC$. We thus have the
    inequality $N/2^{4d} \leq 2c \log \alpha$, which implies
    \begin{equation*}
        \frac{c}{2^{4d}}\alpha \log \alpha
        \leq
        2c \log \alpha
        \qquad \text{or}\qquad
        \alpha \leq 2^{4d+1}.
    \end{equation*}

    As mentioned above, by strong duality the fractional stabbing number of $(\P, \BSB)$ is at most
    $ \alpha \leq 2^{4d+1}$. However, standard $\eps$-net argument readily
    implies that a sample of size $O( \alpha \log \alpha) = O(1)$ of points from
    $\P$ (sampled according to these fractional weights) is a stabbing
    set for $\BSB$ with constant probability. Let $\S$ be such a stabbing set of size $O\bigl((4d+1)\cdot 2^{4d+1}\bigr)$.  All the boxes of
    $\BSB$ are stabbed by points of $\S$.
\end{proof}

\begin{corollary}
    Let $\BS=\{\bb_1,...,\bb_n\}$ be a set of boxes in $\Re^d$. Let
    $\popt$ be the size of the smallest piercing set of $\BS$.  A set
    $\Ps = \VSetX{\BS} \subset \Re^d$ can be computed, in
    $O(n^{\floor{\log_2 d}+1})$ time, such that the range space
    $(\Ps,\BS)$ has a hitting set of size $O\bigl((4d+1)\cdot 2^{4d+1}\cdot \popt \bigr)$.
\end{corollary}
\begin{proof}
    Let $\P^*$ be a piercing set of $\BS$ of size $\popt$. By
    \lemref{replacement}, for each point $ p \in \P^*$, there is a set of $O\bigl((4d+1)\cdot 2^{4d+1}\cdot \popt \bigr)$ points in $\Ps$ that stabs all the boxes of $\BS$ that $p$
    stabs (it potentially stabs more boxes). This readily implies the
    bound on the size of the hitting set. As for the running time, it
    follows readily from computing the set of vertices of the
    arrangement $\ArrX{\BSB}$, over all subsets $\BSB \subseteq \BS$
    of size $1 +\floor{\log_2d} $, see \Eqref{V:Set}.
\end{proof}

By combining the above corollary with Agarwal and Pan's \cite{ap-naghs-20} algorithm for computing hitting sets for boxes we obtain the following theorem:

\begin{theorem} \thmlab{thm:fast-pierce}

Let $\BS=\{\bb_1,...,\bb_n\}$ be a set of boxes in $\Re^d$. Let
    $\popt$ be the size of the smallest piercing set of $\BS$. A piercing set of $\BS$ of size $O(d^2\cdot 2^{4d}\cdot\popt\log\log\popt)$ can be computed in $O(n^{\floor{\log_2 d}+1})$ expected time.

\end{theorem}

\section{Multi-round piercing algorithm}
\seclab{iteralg}

Let $\Boxes$ be a set of (closed) boxes in $\Re^d$, and let
$\VV = \VX{\Boxes}$ be the set of vertices of the arrangement
$\ArrX{\Boxes}$. When considering a piercing set for $\Boxes$, one can
restrict the selection of piercing points to points of $\VV$.

The key insight is that a piercing set for a sufficiently large, but
not too large, sample is a piercing set for almost all the boxes.

\begin{lemma}
    \lemlab{sampling}%
    Let $\Boxes$ be a set of $n$ boxes in $\Re^d$, $\delta \in (0,1)$
    and $t > 0$ be parameters, and let $\Sample \subseteq \Boxes$ be a
    random sample of size $O( \tfrac{dt}{\delta} \log n)$. Assume
    there is a set $\Q$ of $t$ points that stabs all the boxes of
    $\Sample$. Then, at most $\delta n$ boxes of $\Boxes$ are not
    pierced by $\Q$, and this holds with probability
    $\geq 1 - 1/n^{O(d)}$.
\end{lemma}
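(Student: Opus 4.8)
The plan is to run the standard $\eps$-net/union-bound argument, exploiting the bound — already noted immediately before the lemma — that there are only $n^{O(dt)}$ combinatorially distinct sets of $t$ piercing points. First I would treat $\Sample$ as a collection of $m = O\!\bigl(\tfrac{dt}{\delta}\log n\bigr)$ independent uniform draws from $\Boxes$ (with replacement; sampling without replacement only shrinks the relevant tail, and if $m \ge n$ the claim is vacuous since one may take $\Sample = \Boxes$). Call a set $\Q$ of $t$ points \emph{bad} if more than $\delta n$ boxes of $\Boxes$ avoid $\Q$. For a fixed bad $\Q$, a single draw fails to be pierced by $\Q$ with probability $> \delta$, so the probability that $\Q$ pierces \emph{all} $m$ sampled boxes is at most $(1-\delta)^m \le e^{-\delta m}$; by taking the hidden constant in $m$ large enough, this is at most $n^{-C dt}$ for whatever constant $C$ we wish.

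Next I would invoke the counting bound. As observed just before the lemma, whether a $t$-point set pierces a given box depends only on which cell (face of any dimension) of $\ArrX{\Boxes}$ each of its points lies in, and $\ArrX{\Boxes}$ has $O(n^d)$ cells; hence there are at most $n^{c' dt}$ combinatorially distinct $t$-point sets, for a fixed constant $c'$ (absorbing the constant hidden in $O(n^d)$ into the exponent, which is valid once $n$ exceeds that constant), and equivalent sets pierce exactly the same boxes. Taking a union bound over one representative of each bad class, the probability that \emph{some} bad set pierces all of $\Sample$ is at most $n^{c' dt} \cdot n^{-C dt} = n^{-(C - c') dt}$. Picking $C$ a sufficiently large constant and using $t \ge 1$, this is at most $1/n^{O(d)}$.

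Hence, with probability $\ge 1 - 1/n^{O(d)}$, \emph{every} set of $t$ points that pierces all of $\Sample$ is good, i.e., leaves at most $\delta n$ boxes of $\Boxes$ unpierced; in particular the piercing set $\Q$ whose existence is hypothesized has this property, which is exactly the lemma.

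The one subtlety is that $\Q$ is chosen after the sample is drawn, not before, which is why we must quantify over all $n^{O(dt)}$ equivalence classes rather than over a single fixed $\Q$; the argument works precisely because the $n^{O(dt)}$ class count is overwhelmed by the $n^{-\Theta(dt)}$ per-class failure probability. I do not anticipate any real obstacle beyond this bookkeeping and the trivial edge cases noted above.
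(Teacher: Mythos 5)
Your proposal is correct and follows essentially the same route as the paper: both arguments fix the $n^{O(dt)}$ equivalence classes of $t$-point sets (determined by the faces of $\ArrX{\Boxes}$), bound the probability that a fixed bad set pierces the whole sample by $(1-\delta)^m \le n^{-\Theta(dt)}$, and finish with a union bound. The extra remarks on with/without replacement and the vacuous case $m \ge n$ are harmless bookkeeping that the paper leaves implicit.
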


\begin{proof}
    For a set $P$ of points in $\Re^d$, let
    $\Boxes \setminus P = \Set{ \bb \in \Boxes}{ \bb \cap P =
       \emptyset}$ be the leftover set of all the boxes not stabbed by
    points of $P$. The family of all ``large'' leftover sets is
    \begin{equation*}
        \Family =%
        \Set{\Boxes \setminus P}{P \subseteq \VV,
           \cardin{P} \leq t
           \text{ and }
           \cardin{\Boxes \setminus P} \geq \delta n
        },
    \end{equation*}
    where $\VV = \VX{\Boxes}$.  As $\cardin{\VV} =O(n^d)$, it follows
    that $\cardin{\Family} = O(n^{dt})$.

    Fix a ``bad'' set $\BoxesB \in \Family$.  Let
    $u = c\tfrac{dt}{\delta} \ln n$ be the size of $|\Sample|$, where
    $c$ is a sufficiently large constant.  If
    $\Sample \cap \BoxesB \neq \emptyset$, then $\Boxes \setminus \Q$
    can not be $\BoxesB$, as $\Q$ starts all the boxes of $\Sample$.
    The probability that $\Sample$ misses all the boxes of $\BoxesB$
    is at most
    \begin{equation*}
        \psi
        =%
        \pth{ 1- \delta}^u
        \leq%
        \exp\pth{ -\delta u }
        =%
        \exp( - c\cdot d\cdot t \cdot \ln n)
        =%
        \frac{1}{n^{cdt}}.
    \end{equation*}
    In particular, by the union bound, the probability that $\Sample$
    misses any of the sets of $\Family$ is at most
    $\cardin{\Family'} \psi < 1/n^{O(d)}$, for $c$ sufficiently large.
    Namely, the residual set of boxes $\Boxes \setminus \Q$ is not in
    $\Family$, and it thus has size $< \delta n$,
\end{proof}

\lemref{sampling} suggests a natural algorithm for piercing -- pick a
random sample from $\Boxes$, compute (or approximate) a piercing set
for it, compute the boxes this piercing set misses, and repeat the
process for several rounds. In the last round, hopefully, the number
of remaining boxes is sufficiently small, that one can apply the
piercing approximation algorithm directly to it. We thus get the
following.

\begin{lemma}
    \lemlab{multi:round}%
    Let $\Boxes$ be a set of $n$ boxes in $\Re^d$, and let
    $\nRounds > 1$ be a parameter. Furthermore, assume that we are
    given an algorithm \AlgPierce that for $m$ boxes, can compute a
    $O( \log \log \popt )$ approximate to their piercing set in
    $\TPierce(m)$ time, where $\popt$ is the size of the optimal
    piercing set for the given set. Then, one can compute a piercing
    set for $\Boxes$ of size $O( \nRounds \popt \log \log \popt)$ in
    expected time
    \begin{equation*}
        O\pth {\nRounds \TPierce\bigl( \popt^{1-{1}/{\nRounds}}
           n^{1/\nRounds} \log n \bigr) + n \log^{d-1} \popt }.
    \end{equation*}
\end{lemma}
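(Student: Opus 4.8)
The plan is to implement the sampling scheme suggested right before the statement, using \lemref{sampling} to control how many boxes ``survive'' (i.e., are not pierced) between consecutive rounds, and then picking the per‑round sample sizes so that they form a geometric‑type interpolation between $\popt$ and $n$. Concretely, I would set up $\nRounds$ rounds where round $j$ maintains a working set $\Boxes_j$ of as‑yet‑unpierced boxes (with $\Boxes_1 = \Boxes$), draws a random sample $\Sample_j \subseteq \Boxes_j$ of an appropriate size $s_j$, runs \AlgPierce on $\Sample_j$ to get a piercing set $\P_j$ of size $O(\log\log\popt)$ times the optimum of $\Sample_j$ (which is $\le \popt$, since $\Sample_j \subseteq \Boxes$), and then forms $\Boxes_{j+1}$ as the boxes of $\Boxes_j$ missed by $\P_j$. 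In the final round we just run \AlgPierce directly on whatever remains. The output is $\bigcup_j \P_j$ together with the last piercing set; each $\P_j$ has size $O(\popt \log\log\popt)$, so the union has size $O(\nRounds \popt \log\log\popt)$, matching the claim.

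The heart of the argument is choosing $s_j$ and bounding $|\Boxes_{j+1}|$. By \lemref{sampling} applied with $n \leftarrow |\Boxes_j|$, $t \leftarrow \popt$, and a parameter $\delta_j$, a sample of size $O\bigl(\tfrac{d\popt}{\delta_j}\log |\Boxes_j|\bigr) = O\bigl(\tfrac{d\popt}{\delta_j}\log n\bigr)$ guarantees $|\Boxes_{j+1}| \le \delta_j |\Boxes_j|$ with probability $\ge 1 - 1/n^{\Omega(d)}$. I would choose $\delta_j$ so that $|\Boxes_j|$ shrinks geometrically toward $\popt$: since $|\Boxes_1| = n$ and we want $|\Boxes_{\nRounds}|$ down to roughly $\popt$, the natural choice is a uniform shrink factor $\delta = (\popt/n)^{1/(\nRounds-1)}$ per round, giving $|\Boxes_j| \le \popt^{(j-1)/(\nRounds-1)} n^{(\nRounds-j)/(\nRounds-1)}$. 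Then $s_j = O\bigl(\tfrac{d\popt}{\delta}\log n\bigr) = O\bigl(d\popt\,(n/\popt)^{1/(\nRounds-1)}\log n\bigr) = O\bigl(d\,\popt^{1-1/(\nRounds-1)} n^{1/(\nRounds-1)}\log n\bigr)$, which is (up to a harmless shift of $\nRounds$ by one in the exponent, or a reindexing so that there are $\nRounds-1$ shrinking rounds plus one cleanup round) exactly the $\popt^{1-1/\nRounds}n^{1/\nRounds}\log n$ figure appearing in the bound. Running \AlgPierce on this sample costs $\TPierce\bigl(\popt^{1-1/\nRounds}n^{1/\nRounds}\log n\bigr)$; summing over the $\nRounds$ rounds gives the first term $O\bigl(\nRounds\,\TPierce(\cdots)\bigr)$.

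The remaining cost is bookkeeping: after computing $\P_j$ (a set of $O(\popt\log\log\popt)$ points) we must identify which boxes of $\Boxes_j$ it misses, i.e.\ perform a box–stabbing/point‑enclosure test for each box against $\P_j$. Using a standard orthogonal data structure on the $O(\popt)$ points of $\P_j$ (e.g.\ a range tree as in \cite{a-rs-04}), we can test each of the $|\Boxes_j|$ boxes in $O(\log^{d-1}\popt)$ time, for a total of $O(|\Boxes_j|\log^{d-1}\popt)$ in round $j$. Since $|\Boxes_1| = n$ dominates the geometric sum $\sum_j |\Boxes_j|$, this contributes the $O(n\log^{d-1}\popt)$ term. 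Finally I would take a union bound over the $\nRounds = O(1)$ (or at most $O(\log n)$) invocations of \lemref{sampling}, each failing with probability $1/n^{\Omega(d)}$, so the whole process succeeds with high probability; the ``expected time'' phrasing in the statement then just absorbs the low‑probability failure branch (retry) and the expectation already present in \AlgPierce. The main obstacle I anticipate is purely cosmetic rather than mathematical — reconciling the off‑by‑one in the exponent ($1/\nRounds$ vs.\ $1/(\nRounds-1)$) by either reindexing the rounds (use $\nRounds$ shrinking rounds, with the last working set of size $\le \popt\cdot(\text{polylog})$ which \AlgPierce handles directly) or by absorbing the discrepancy into the constant in the $O(\cdot)$; I would handle it by defining the shrink factor as $(\,|\Boxes|/\popt\,)^{-1/\nRounds}$ so the per‑round sample size is literally $O(\popt^{1-1/\nRounds}n^{1/\nRounds}\log n)$ and the final working set has size $O(\popt \log n)$, small enough that one last call to \AlgPierce finishes within the stated budget.
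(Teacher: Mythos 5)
Your proposal is correct and follows essentially the same route as the paper's proof: the same multi-round structure driven by \lemref{sampling}, the same uniform shrink factor $\delta=(\popt/n)^{1/\nRounds}$ (after you resolve your off-by-one the way the paper does), the same range-tree filtering of unpierced boxes giving the $O(n\log^{d-1}\popt)$ term via the geometric sum dominated by the first round, the same retry-on-failure absorbed into the expectation, and the same final-round cleanup on a working set of size $O(\popt^{1-1/\nRounds}n^{1/\nRounds})$. The one substantive omission is that you set $\delta$ and the sample sizes directly in terms of $\popt$, which the algorithm does not know; the paper handles this by an exponential search on a guess $k$ with $\popt\le k\le 2\popt$, detecting a too-small guess by checking whether the piercing set returned by \AlgPierce exceeds $\Theta(k\log\log k)$ and restarting with a doubled guess, which does not affect the asymptotic bounds --- you should add this (or an equivalent mechanism), since otherwise the algorithm cannot be instantiated. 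A second, minor point: when you invoke \lemref{sampling} with $t\leftarrow\popt$, note that the set you actually use to filter $\Boxes_j$ is $\P_j$ of size $O(\popt\log\log\popt)$, so strictly $t$ must be taken at least that large (the paper is equally terse here, and the extra $\log\log$ factor is harmless).
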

\begin{proof}
    We assume that we have a number $k$ such that
    $\popt \leq k \leq 2\popt$, where $\popt$ is the size of the
    optimal piercing set for $\Boxes$. To this end, one can perform an
    exponential search for this value, and it is easy to verify that
    this would not effect the running time of the algorithm.

    The algorithm performs $\nRounds$ rounds. Let $\Boxes_0 =
    \Boxes$. Let $\delta = {(k/n)}^{1/\nRounds}$. In the $i$\th round,
    for $i=1, \ldots, \nRounds-1$, we pick a random sample $\Sample_i$
    from $\Boxes_{i-1}$ of size
    \begin{equation*}
        m
        =%
        O\Bigl(  d k{\delta}^{-1} \log n \Bigr)
        =%
        O\bigl(  d k^{1-{1}/{\nRounds}} n^{1/\nRounds} \log n \bigr)
        =%
        O\bigl(  \popt^{1-1/\nRounds} n^{1/\nRounds} \log n \bigr).
    \end{equation*}
    In the $\nRounds$\th round, we set
    $\Sample_{\nRounds} = \Boxes_{\nRounds-1}$. Now, we approximate
    the optimal piercing set for $\Sample_i$, by calling
    $\Q_i \leftarrow \AlgPierce( \Sample_i)$. If the piercing set
    $\Q_i$ is too large -- that is, $\cardin{\Q_i} \gg k \log \log k$,
    then the guess for $k$ is too small, and the algorithm restarts
    with a larger guess for $k$. Otherwise, the algorithm builds a
    range tree for $\Q_i$, and streams the boxes of $\Boxes_{i-1}$
    through the range tree, to compute the set
    $\Boxes_i = \Boxes_{i-1} \setminus \Q_i$, the boxes in
    $\Boxes_{i-1}$ not pierced by $\Q_i$. By \lemref{sampling}, we
    have
    $\cardin{\Boxes_{i}} \leq \delta \cardin{\Boxes_{i-1}} \leq
    \delta^i n$ with high probability. If
    $\cardin{\Boxes_{i}}> \delta \cardin{\Boxes_{i-1}}$, we repeat
    round $i$, so assume
    $\cardin{\Boxes_i}\leq \delta\cardin{\Boxes_{i-1}}$. In particular
    $\cardin{\Sample_{\nRounds}}=\cardin{\Boxes_{\nRounds-1}}\leq
    \delta^{\nRounds-1}n\leq
    \popt^{1-1/\nRounds}n^{1/\nRounds}$. Therefore, the total time
    spent by $\AlgPierce(.)$ in $\nRounds$ rounds is
    $O(\nRounds\TPierce(\popt^{1-1/\nRounds}n^{1/\nRounds}\log
    n))$. Finally, during the first $\nRounds-1$ iterations, computing
    $\Boxes_i$ takes
    \begin{equation*}
        \sum_{i=1}^{\nRounds-1}
        O( \cardin{\Boxes_{i-1}} \log^{d-1} \cardin{\Q_i})
        =%
        \sum_{i=1}^{\nRounds-1}
        O( \delta^i n \log^{d-1} (\popt \log\log \popt))
        =
        O( n \log^{d-1} \popt )
    \end{equation*}
    time. Clearly, $\cup_i \Q_i$ is the desired piercing set.
\end{proof}

We can use the algorithm of \thmref{pierce:alg}, for the piercing
algorithm. For this choice,
\begin{equation*}
    \TPierce(m) = O( m^{d/2} \log^{2d+3} m).
\end{equation*}

We then get an approximation algorithm with running time
\begin{equation*}
    O\pth{ \nRounds {\pth{\popt^{1-1/\nRounds} n^{1/\nRounds}}}^{d/2}\log^{O(d)} (n) +
       n \log^{d-1} \popt }.
\end{equation*}

We thus get our second main result.
\begin{theorem}
    \thmlab{main:2}%
    Let $\Boxes$ be a set of $n$ axis-aligned boxes in $\Re^d$, for
    $d\geq 2$, and let $\nRounds > 0$ be an integer. A piercing set of
    $\Boxes$ of size $O(\nRounds d^2 \popt \log\log\popt )$ can be
    computed in
    \begin{equation*} %
        O\pth{ \nRounds \popt^{d/2-d/2\nRounds}  n^{d/2\nRounds}  \log^{O(d)} (n) +
           n \log^{d-1} \popt }.
    \end{equation*}
    expected time, where $\popt = \PNX{\Boxes}$ is the size of the
    optimal piercing set.
\end{theorem}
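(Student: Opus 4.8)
The plan is to obtain \thmref{main:2} as a direct instantiation of the generic multi-round reduction \lemref{multi:round}, using the improved \MWU algorithm of \thmref{pierce:alg} as the black-box piercing routine \AlgPierce, followed by a simplification of the resulting running-time expression. So the first step is to set $\TPierce(m) = O(m^{d/2}\log^{2d+3} m)$, as guaranteed by \thmref{pierce:alg}; note this routine indeed returns an $O(\log\log\popt)$-approximate piercing set for any input set of boxes, which is exactly the hypothesis required by \lemref{multi:round}. The exponential search for (a constant-factor estimate of) $\popt$ is already folded into the statement of \lemref{multi:round}, so nothing extra is needed there.

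Next I would plug $\TPierce$ into the time bound of \lemref{multi:round}, whose dominant term is $\nRounds\cdot\TPierce\bigl(\popt^{1-1/\nRounds}\, n^{1/\nRounds}\log n\bigr)$. Writing $m = \popt^{1-1/\nRounds}\, n^{1/\nRounds}\log n$, we have $m^{d/2} = \popt^{(d/2)(1-1/\nRounds)}\, n^{d/(2\nRounds)}\,(\log n)^{d/2} = \popt^{d/2 - d/(2\nRounds)}\, n^{d/(2\nRounds)}$ up to a $(\log n)^{d/2}$ factor. Since $\popt \le n$ we also have $m = O(n\log n)$, hence $\log^{2d+3} m = O(\log^{2d+3} n)$. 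As $d$ is a fixed constant, all of these logarithmic factors (the $(\log n)^{d/2}$ and the $\log^{2d+3} m$) collapse into a single $\polylog(n)$, so the dominant term becomes $O\bigl(\nRounds\,\popt^{d/2 - d/(2\nRounds)}\, n^{d/(2\nRounds)}\,\polylog(n)\bigr)$. Adding the $O(n\log^{d-1}\popt)$ term carried over unchanged from \lemref{multi:round} yields precisely the claimed running time, and the output size $O(\nRounds\,\popt\log\log\popt)$ is exactly what \lemref{multi:round} delivers.

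There is essentially no genuine obstacle here — the statement is a corollary of two earlier results — so the only care required is bookkeeping: verifying that every stray $\log$ factor (from the $\log n$ inside the sample size, from $m^{d/2}$, and from $\log^{2d+3} m$) genuinely folds into $\polylog(n)$, and checking that the dependence on $\nRounds$ coming out of \lemref{multi:round} is exactly linear with no hidden lower-order terms. Both are immediate once one writes out the substitution, and the preceding discussion in the text already records the intermediate form $O\pth{ \nRounds \pth{\popt^{1-1/\nRounds} n^{1/\nRounds}}^{d/2}\polylog (n) + n \log^{d-1} \popt }$ of the bound, from which the stated exponents follow by the arithmetic $(d/2)(1-1/\nRounds) = d/2 - d/(2\nRounds)$ and $(d/2)(1/\nRounds) = d/(2\nRounds)$.
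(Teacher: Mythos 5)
Your proposal is correct and follows exactly the paper's own route: the paper likewise obtains \thmref{main:2} by plugging $\TPierce(m)=O(m^{d/2}\log^{2d+3}m)$ from \thmref{pierce:alg} into \lemref{multi:round} and absorbing the logarithmic factors into $\polylog(n)$. Your bookkeeping of the exponents and of the $\log$ factors matches the paper's intermediate bound, so nothing is missing.
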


The above algorithm provides a trade-off between the approximation
factor and the running time. It readily leads to a near linear time
algorithm if the piercing set is sufficiently small. For example, by
choosing $\nRounds=d$, we obtain the following:

\begin{corollary}
    \corlab{near:linear:time}%
    Let $\Boxes$ be a set of $n$ axis-aligned boxes in $\Re^d$ for
    some fixed $d\geq 2$, and assume $\PNX{\Boxes} = O(
    n^{1/(d-1)})$. Then, a piercing set of $\Boxes$ of size
    $O(d^3 \popt \log\log\popt )$ can be computed in
    $O(n \log^{O(d)} n)$ expected time.
\end{corollary}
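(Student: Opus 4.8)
The plan is to invoke \thmref{main:2} with the specific choice $\nRounds = d$, and then simplify the two terms in its running-time bound using the standing hypothesis $\popt = O(n^{1/(d-1)})$ together with the fact that $d$ is a fixed constant.

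First I would do the exponent bookkeeping. Substituting $\nRounds = d$ into the running time of \thmref{main:2}, the exponent of $\popt$ becomes $d/2 - d/(2\nRounds) = d/2 - 1/2 = (d-1)/2$, while the exponent of $n$ becomes $d/(2\nRounds) = 1/2$. Thus the expected running time is
\begin{equation*}
    O\pth{ d\, \popt^{(d-1)/2}\, n^{1/2}\, \polylog(n) + n \log^{d-1}\popt }.
\end{equation*}
The second term is already $O(n\polylog n)$ since $\popt \le n$. For the first term, I would plug in $\popt = O(n^{1/(d-1)})$, which yields $\popt^{(d-1)/2} = O\!\pth{\pth{n^{1/(d-1)}}^{(d-1)/2}} = O(n^{1/2})$; hence the first term is $O\pth{ d\, n^{1/2}\cdot n^{1/2}\, \polylog(n)} = O(n\polylog n)$, the constant factor $d$ being harmless because $d$ is fixed. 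Combining the two terms gives total expected running time $O(n\polylog n)$.

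The size bound is read off directly from \thmref{main:2} with $\nRounds = d$: the computed piercing set has size $O(\nRounds\,\popt\log\log\popt) = O(d\,\popt\log\log\popt)$, as claimed. There is essentially no obstacle here beyond the exponent arithmetic; the only point to keep in mind is that $d$ is a fixed constant, so that both the leading factor $d$ and the $d$-dependent exponents can be absorbed into the $O(\cdot)$ notation, which is consistent with the assumptions throughout the paper.
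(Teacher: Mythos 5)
Your proposal is correct and follows exactly the route the paper intends: instantiate \thmref{main:2} with $\nRounds=d$, compute the exponents $d/2-1/2=(d-1)/2$ for $\popt$ and $1/2$ for $n$, and use $\popt=O(n^{1/(d-1)})$ to bound the first term by $O(n\polylog(n))$, with the factor $d$ absorbed since $d$ is a fixed constant. The paper gives no separate proof beyond this same observation, so there is nothing further to compare.
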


If the piercing set is slightly sublinear, the above leads to an
approximation algorithm with running time $O(n \log n)$.

\begin{corollary}
    \corlab{2d:linear:time}%
    Let $\Boxes$ be a set of $n$ axis-aligned rectangles in $\Re^2$
    for some fixed $d\geq 2$, and assume that it can be pierced by
    $\popt = O( n/ \log^{15} n)$ points. Then, a piercing set of
    $\Boxes$ of size $O( \popt \log\log\popt )$ can be computed in
    $O(n \log \popt)$ expected time.
\end{corollary}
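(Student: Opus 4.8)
The plan is to derive this as a direct instantiation of \thmref{main:2} with $d=2$ and a constant number of rounds. Concretely, I would run the multi-round algorithm of \secref{iteralg} with the piercing subroutine of \thmref{pierce:alg} and set $\nRounds$ to a fixed constant (say $\nRounds = 3$). Since $\nRounds = O(1)$, the size guarantee $O(\nRounds \popt \log\log\popt)$ of \thmref{main:2} is $O(\popt\log\log\popt)$, which is what we want; so the only thing left is to show that for $d=2$ the expected running time
\[
O\pth{ \nRounds \popt^{1 - 1/\nRounds} n^{1/\nRounds} \polylog(n) + n \log \popt }
\]
collapses to $O(n\log\popt)$ under the hypothesis $\popt = O(n/\log^{15} n)$.

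The key computation I would carry out is bounding the ``balanced'' factor using the smallness of $\popt$. Writing $\popt^{1-1/\nRounds} n^{1/\nRounds} = \popt\,(n/\popt)^{1/\nRounds}$ and using $\popt = O(n/\log^{15} n)$ gives
\[
\popt^{1 - 1/\nRounds} n^{1/\nRounds}
=
O\pth{ \pth{\frac{n}{\log^{15} n}}^{1 - 1/\nRounds} n^{1/\nRounds} }
=
O\pth{ \frac{n}{\log^{15(1 - 1/\nRounds)} n} },
\]
which for $\nRounds = 3$ is $O(n/\log^{10} n)$. The polylogarithmic factor that \thmref{main:2} carries for $d=2$ is $O(\log^{8} n)$: it comes from $\TPierce(m) = O(m\log^{7} m)$ (that is, \thmref{pierce:alg} with $d=2$) applied to a sample of size $O(\popt^{1-1/\nRounds} n^{1/\nRounds}\log n)$, which contributes one extra $\log n$. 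Hence the first term is $O\pth{ 3 \cdot \frac{n}{\log^{10} n} \cdot \log^{8} n } = O\pth{ \frac{n}{\log^{2} n} } = O(n)$, and the running time is dominated by the second term, $O(n\log\popt)$.

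I do not expect any genuine obstacle here; the statement is a corollary, and the only point requiring care is the bookkeeping of logarithmic exponents --- one must pick the number of rounds $\nRounds$ large enough that $\log^{15(1-1/\nRounds)} n$ dominates the $\log^{O(1)} n$ overhead of $\TPierce$. Any constant $\nRounds \ge 3$ suffices, and taking $\nRounds$ larger only shrinks the balanced factor $\popt(n/\popt)^{1/\nRounds}$ while leaving the approximation ratio at $O(\popt\log\log\popt)$; so $\nRounds = 3$ is a convenient choice.
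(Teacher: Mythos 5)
Your derivation is correct, but it is not the route the paper takes. You obtain the corollary as a black-box instantiation of \thmref{main:2} with $\nRounds=3$, and your bookkeeping is right: for $d=2$ the hidden polylog is $\log^8 n$ (namely $\TPierce(m)=O(m\log^7 m)$ applied to a sample of size $O(\popt^{1-1/\nRounds}n^{1/\nRounds}\log n)$), and since $15(1-1/3)=10>8$ the first term becomes $O(n/\log^2 n)=O(n)$, leaving $O(n\log\popt)$. The paper instead gives a bespoke two-round argument: pick a sample of \emph{fixed} size $O(n/\log^7 n)$ (independent of $\popt$), pierce it in $O(n)$ time via \thmref{pierce:alg}, use \lemref{sampling} to argue the residual set has size $\delta n = \popt\log^8 n = O(n/\log^7 n)$ (this is exactly where the exponent $15=7+8$ in the hypothesis comes from), and pierce the residual again in $O(n)$ time, with the $O(n\log\popt)$ cost coming from range-emptiness queries. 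The comparison is instructive: your approach is cleaner as a corollary and avoids reproving anything, but it genuinely needs $\nRounds\ge 3$ --- plugging $\nRounds=2$ into \thmref{main:2} gives a first term of $n\log^{1/2}n$, which is not $O(n\log\popt)$ when $\popt$ is small, and this is presumably why the paper wrote a direct proof rather than citing \thmref{main:2}. The paper's fixed-sample-size trick buys a two-round algorithm whose sample size does not depend on knowing $\popt$ (so no exponential search enters this proof), at the cost of redoing the sampling argument by hand; both routes give size $O(\popt\log\log\popt)$ and expected time $O(n\log\popt)$.
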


\begin{proof}
    Pick a random sample $\Sample \subseteq \Boxes$ of size
    $O(n /\log^7n)$. The algorithm of \thmref{pierce:alg} yields in
    $O(n)$ time a piercing set $\Q$ for $\Sample$, of size
    $u = O( \popt \log \log \popt)$.  Preprocess $\Q$ for orthogonal
    range emptiness queries -- this takes $O( \popt \log^2 \popt)$ time,
    and one can decide if a rectangle is not pierced by $\Q$ in
    $O( \log \popt )$ time.  \lemref{sampling} implies that at most
    $\delta n$ rectangles unpierced by $\Q$, where
    \begin{equation*}
        \delta = \frac{ \popt  \log^8 n }{n}.
    \end{equation*}
    Namely, the \si{unhit} set has size $\delta n = O( n / \log^7
    n)$. Running time algorithm of \thmref{pierce:alg} on this set of
    rectangles, takes $O(n)$ time, and yields a second piercing set
    $\Q'$ of size $O( \popt \log \log \popt)$. Combining the two sets
    results in the desired piercing set.
\end{proof}

Alternatively, we can use the algorithm of \thmref{thm:fast-pierce}, for the piercing
algorithm. For this choice,
\begin{equation*}
    \TPierce(m) = O( m^{\log d+1}).
\end{equation*}

We then get an approximation algorithm with running time
\begin{equation*}
    O\pth{ \nRounds {\pth{\popt^{1-1/\nRounds} n^{1/\nRounds}}}^{\log
          d+1 }
       \log^{O(d)} n +
       n \log^{d-1} \popt }.
\end{equation*}

\begin{theorem}
    \thmlab{faster:bad:approx}%

    Let $\Boxes$ be a set of $n$ axis-aligned boxes in $\Re^d$, for
    $d\geq 2$, and let $\nRounds > 0$ be an integer. A piercing set of
    $\Boxes$ of size $O(\nRounds 2^{O(d)} \popt \log\log\popt )$ can be
    computed in
    \begin{equation*} %
        O\pth{ \nRounds \popt^{\log d+1 -\frac{\log
                 d+1}{\nRounds}}
           n^{\frac{\log d+1}{\nRounds}}  \log^{O(d)} n +
           n \log^{d-1} \popt }.
    \end{equation*}
    expected time, where $\popt = \PNX{\Boxes}$ is the size of the
    optimal piercing set.
\end{theorem}

By choosing $\nRounds=2\log d+2$, we obtain the following:

\begin{corollary}
    \corlab{near:linear:b:aprx}%
    Let $\Boxes$ be a set of $n$ axis-aligned boxes in $\Re^d$ for
    some fixed $d\geq 2$, assuming
    $\PNX{\Boxes} = O( n^{1/(2\log d+1)})$. Then, a piercing set of
    $\Boxes$ of size $O( 2^{O(d)} \popt \log\log\popt )$ can be
    computed in $O(n \log^{O(d)} n )$ expected time.
\end{corollary}

\section{Dynamic algorithm for piercing}
\seclab{dynamic}

We present a data structure for maintaining a near-optimal piercing
set for a set $\Boxes$ of boxes in $\Re^2$ as boxes are inserted into
or deleted from $\Boxes$.  By adapting the multi-round sampling based
algorithm described in \secref{iteralg}, we obtain a Monte Carlo
algorithm that maintains a piercing set of $O(\popt\log\log\popt)$ size
with high probability and that can update the piercing set in
$O^*(n^{1/2})$ amortized expected time per update.  (The $O^*()$
notation hides polylogarithmic factors). The update time can be
improved to $O^*(n^{1/3})$ if $\Boxes$ is a set of squares in $\Re^2$.

\paragraph{Overview of the Algorithm}

We observe that the size of the optimal piercing set changes by at
most one when a box is inserted or deleted. We periodically
reconstruct the piercing set using a faster implementation of the
multi-round sampling based algorithm in \secref{iteralg}, as described
below.  More precisely, if $s$ is the size of the piercing set
computed during the previous reconstruction, then we reconstruct the
piercing set after $\ceil{s/2}$ updates. To expedite the
reconstruction, we maintain $\Boxes$ in a data structure as
follows. We map a box $\bb=[a_1,a_2]\times [b_1,b_2]$ to the point
$\bb^*=(a_1,b_1,a_2,b_2)$ in $\Re^4$, and let $\Boxes^*$ be the
resulting set of points in $\Re^4$. We store $\Boxes^*$ into a
4-dimensional dynamic range tree $T$, which is a 4-level tree. Each
node $v$ of $T$ is associated with a \emph{canonical subset}
$\Boxes_{v}^*\subseteq\Boxes^*$ of points. Let $\Boxes_v$ be the set
of boxes corresponding to $\Boxes_v^*$. For a box $\square$ in
$\Re^4$, $\square\cap \Boxes^*$ can be represented as the union of
$O(\log^4 n)$ canonical subsets, and they can be computed in
$O(\log^4 n)$ time. The size of $T$ is $O(n\log^4 n)$, and it can be
updated in $O(\log^4 n)$ amortized time per insertion/deletion of
point. See \cite{bcko-cgaa-08}.

Between two consecutive reconstructions, we use a lazy approach to
update the piercing set, as follows: Let $\P$ be the current piercing
set. When a new box $\bb$ is inserted, we insert it into $T$.  If
$\P\cap \bb=\emptyset$, we choose an arbitrary point $p$ inside $\bb$
and add $p$ to $\P$.  When we delete a box $\bb$, we simply delete
$\bb^*$ from $T$ but do not update $\P$. If $\ceil{s/2}$ updates have
been performed since the last reconstruction, we discard the current
$\P$ and compute a new piercing set as described below.

We show below that a piercing set of size $s=O(\popt\log\log \popt)$ of
$\Boxes$ can be constructed in
\begin{equation*}
    O^*\pth{(\popt n)^{1/2}+\min\{\popt^2,n\}}
\end{equation*}
expected time, where $\popt$ is the size of the optimal piercing set
of $\Boxes$. This implies the amortized expected update time is
$O^*\pth{(n/\popt)^{1/2}+\min \{\popt, n/\popt\}}$, including the time
spent in updating $T$.  The second term is bounded by $n^{1/2}$, so
the amortized expected update time is $O^*(n^{1/2})$.

\paragraph{Reconstruction algorithm} Here is how we construct the
piercing set of boxes in $\Re^2$. Let $\Boxes$ be the current set of
boxes. We follow the algorithm in \thmref{main:2} and set the number
of rounds to 2. More precisely, perform an exponential search on the
value of $k$, the guess for the size of the optimal piercing set,
every time we reconstruct the piercing set. For a fixed $k$, the
reconstruction algorithm consists of the following steps:

\begin{compactenumI}
    \item Choose a random sample $\Boxes_1$ of $\Boxes$ of size
    $r=c_1(kn)^{1/2}$, where $c_1$ is a suitable constant.
    \item Construct a piercing set $\P_1$ of $\Boxes_1$ of size
    $s=O(k\log\log k)$ in $O^*(r)$ time using the algorithm in
    \secref{improved}.

    \item Compute $\Boxes_2\subseteq\Boxes$, the subset of boxes that
    are not pierced by $\P_1$. If $|\Boxes_2|>c_2{(kn)}^{1/2}$, where
    $c_2$ is a suitable constant, we return to Step 1. As described
    below, this step can be computed in
    $O^*(\min\{k^2,n\}+{(kn)}^{1/2})$ time.

    \item Compute a piercing set $\P_2$ of $\Boxes_2$, again using the
    algorithm in \secref{improved}.

    \item Return $\P_1\cup\P_2$.

\end{compactenumI}

The expected running time of this algorithm is
$O^*(\min\{k^2,n\}+{(kn)^{1/2}})$, as desired.

\paragraph{Computing $\Boxes_2$}
We now describe how to compute $\Boxes_2$ efficiently using $T$. If
$\popt \geq n^{1/2}$, then we simply preprocess $\P$ into a
2-dimensional range tree in $O(s\log s)$ time. By querying with each
box in $\Boxes$, we can compute $\Boxes_2$ in $O(n\log n)$ time
\cite{bcko-cgaa-08}. The total time spent is $O(n\log n)$. So assume
$\popt<n^{1/2}$. For a point $p = (x_p, y_p) \in \Re^2$, let
$Q_p \subset \Re^4$ be the orthant
\begin{equation*}
    Q_p = \Set{(x_1, x_2, x_3, x_4)}{ x_1 \leq x_p, x_2 \leq y_p,
       x_3 \geq x_p, x_4 \geq y_p \bigr.}.
\end{equation*}
 Then, a box $\bb \subset \Re^2$
contains $p$ if and only if $\Boxes^* \in Q_p$. Therefore, an input
box $\bb\in\Boxes$ is not pierced by $\P_1$ if
$\bb^* \not\in \bigcup\nolimits_{p\in\P} Q_p$. Let
$\K = \Re^4\setminus \bigcup\nolimits_{p\in\P} Q_p$. It is well known
that the complexity of $\K$ is $O(s^2)$. Furthermore, $\K$ can be
partitioned into $O(s^2)$ boxes with pairwise-disjoint interiors, as
follows.

Let $\Q=\Set{Q_p}{ p \in \P}$, and let
$\hat{\P} = \Set{(x_p, y_p, x_p, y_p)}{ p \in \P}$ be their corners.
We sort $\hat{\P}$ by the $x_4$-coordinates of its points. Let
$\Delta$ be an $x_4$-interval between the $x_4$-coordinates of two
consecutive points of $\hat{\P}$. For any value $a \in \Delta$, the
cross-section $\Q_a$ of $\Q$ with the hyperplane $h_a: x_4 = a$ is a
collection of $s$ 3-dimensional octants, and $\K_a = \K \cap h_a$ is
the complement of the union of $\Q_a$.  Furthermore, the cross-section
$\K_a$ remains the same for any value of $a\in\Delta$. It is well
known that the complexity of $\K_a$ is $O(s)$, and that it can be
partitioned into a set $\R_a$ of 3-dimensional boxes in $O^*(s)$ time.
Hence, we can partition $\K$ inside the slab $\Re^3\times\Delta$ by
the set $\R_{\Delta} = \Set{R \times \Delta}{ R \in \R_{a}}$. By
repeating this procedure for all $x_4$-intervals between two
consecutive points of $\hat{\P}$, we partition $\K$ into a family $\R$
of $O(s^2)$ boxes.

Next, we query $T$ with each box $R \in \R$. The query procedure
returns a set $V_{R}$ of $O(\log^4 n)$ nodes of $T$ such that
$\Boxes^*\cap R = \bigcup\nolimits_{v\in V_{R}} \Boxes_v^*$. We thus
obtain a set $V$ of $O(s^2 \log^4 n)$ nodes of $T$ such that
$\Boxes_2^* = \bigcup\nolimits_{v\in V} \Boxes_v^*$. If
$\sum\nolimits_{v\in V}|\Boxes_v^*| \leq c_2{(kn)}^{1/2}$, we return
$\bigcup\nolimits_{v\in V}\Boxes_v$ as $\Boxes_2$. Otherwise, we return
NULL. The total time spent by this procedure is
$O^*(\min\{n,k^2\}+(kn)^{1/2})$.  Putting everything together we
obtain the following.

\begin{theorem}
    \thmlab{dynamic:rect}
    A set $\Boxes$ of $n$ boxes in $\Re^2$ can be stored in a data
    structure so that a piercing set of $\Boxes$ of size
    $O(\popt\log\log\popt)$ can be maintained with high probability
    under insertion and deletion of boxes with amortized expected time
    $O(n^{1/2}\polylog n )$ per insertion or deletion, where $\popt$
    is the piercing number of $\Boxes$.
\end{theorem}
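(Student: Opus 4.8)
The plan is to verify that the data structure and reconstruction procedure laid out above together achieve the stated bounds; the argument splits into correctness of a single reconstruction, the cost of one reconstruction, and the amortized analysis, and many of the pieces are already in place in the preceding discussion.

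For correctness, the key point is to apply \lemref{sampling} with matching parameters. In a reconstruction with guess $k$, the sample $\Boxes_1$ has size $r = \Theta((kn)^{1/2})$, polylog factors being absorbed into $O^*()$; taking $t$ to be $|\P_1| = O(k\log\log k)$, the size of the piercing set returned by \thmref{pierce:alg}, and $\delta = \Theta((k/n)^{1/2})$, the sample size $O((t/\delta)\log n)$ demanded by \lemref{sampling} equals $r$ up to those hidden polylog factors. Hence with probability $1 - n^{-\Omega(1)}$ the set $\Boxes_2$ of boxes missed by $\P_1$ has size at most $\delta n = O^*((kn)^{1/2})$, the size test in Step 3 succeeds, and only $O(1)$ repetitions of Steps 1--3 are needed in expectation. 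Since $\Boxes_2 \subseteq \Boxes$ gives $\popt(\Boxes_2) \le \popt \le k$, the set $\P_2$ from \thmref{pierce:alg} has size $O(k\log\log k)$, so $\P_1 \cup \P_2$ pierces $\Boxes$ and has size $O(k\log\log k) = O(\popt\log\log\popt)$ once the exponential search has reached $k \in [\popt,2\popt]$. A too-small guess is detected when $|\P_1|$ or $|\P_2|$ overshoots $\Theta(k\log\log k)$ or Step 3 keeps failing; we then double $k$, and the usual geometric summation shows the extra searching does not change the asymptotics.

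For the cost of one reconstruction, with $d = 2$ \thmref{pierce:alg} gives $\TPierce(m) = O(m\log^7 m)$, so Steps 1, 2, 4 and 5 each cost $O^*((kn)^{1/2})$. The substantive step is computing $\Boxes_2$ in Step 3, done exactly as in the text: if $\popt \ge \sqrt n$, a direct range-tree scan of $\Boxes$ against $\P_1$ costs $O(n\log n)$; otherwise one maps boxes to points of $\Re^4$ so that $\bb$ being missed by $\P_1$ is equivalent to $\bb^* \in \K$, where $\K = \Re^4 \setminus \bigcup_{p\in\P_1} Q_p$, then sweeps the $x_4$-axis over the $O(s)$ critical values, partitions the $O(s)$-complexity cross-section inside each slab into $O(s)$ boxes in $O^*(s)$ time for a total of $O(s^2)$ boxes, queries the range tree $T$ with each box to obtain $O(s^2\log^4 n)$ canonical nodes whose union is $\Boxes_2^*$, and sums the stored subset sizes to test $|\Boxes_2| \le c_2(kn)^{1/2}$ before materializing. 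This is $O^*(s^2 + (kn)^{1/2}) = O^*(\min\{k^2,n\} + (kn)^{1/2})$, which dominates, so a reconstruction costs $O^*(\min\{\popt^2,n\} + (\popt n)^{1/2})$ in expectation once $k = \Theta(\popt)$.

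For the amortization, the lazy rule keeps $\P$ a valid piercing set (a newly inserted box missed by $\P$ gets a fresh interior point; deletions cannot invalidate $\P$), and over an epoch $|\P|$ grows by at most one per update, so it stays $O(\popt\log\log\popt)$ — using the single-step-change observation for $\popt$ and, if $\popt$ has dropped by more than a constant factor since the last reconstruction, forcing an early rebuild, which only helps the amortized bound. Charging the $O^*(\min\{\popt^2,n\}+(\popt n)^{1/2})$ reconstruction cost over the $\Theta(\popt\log\log\popt)$ updates of an epoch, and adding the $O(\log^4 n)$ per-update cost of maintaining $T$, gives amortized expected update time $O^*(\min\{\popt,\,n/\popt\} + (n/\popt)^{1/2})$; since both $\min\{\popt,n/\popt\}$ and $(n/\popt)^{1/2}$ are at most $\sqrt n$, this is $O(n^{1/2}\polylog n)$. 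The part I expect to be delicate is not any single estimate but the bookkeeping that meshes the reconstruction's exponential search over $k$ with the internal exponential search of \thmref{pierce:alg} while preserving the $O(1)$-expected-repetitions guarantee from \lemref{sampling}, together with the epoch-length argument keeping the size bound tight; the geometric input — the $O(s^2)$ bound on the complexity of $\K$ and its slab-wise box partition — is classical staircase / Klee's-measure material, which I would cite rather than reprove.
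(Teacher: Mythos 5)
Your proposal follows essentially the same route as the paper: the same two-round sampling-based reconstruction triggered every $\Theta(s)$ updates with lazy maintenance in between, the same lifting of boxes to points in $\Re^4$ and decomposition of the complement $\K$ of the union of orthants into $O(s^2)$ boxes queried against the 4-dimensional range tree, and the same charging of the $O^*(\min\{\popt^2,n\}+(\popt n)^{1/2})$ reconstruction cost over the epoch. Your explicit invocation of \lemref{sampling} with $t=|\P_1|$ and the early-rebuild rule when $\popt$ drops are fine (if anything, slightly more careful bookkeeping than the paper's exposition), so the argument is correct.
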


\paragraph{Dynamic algorithm for squares in 2D}

If we have squares instead of boxes, then the reconstruction time
reduces to $O^{*}(\popt^{2/3}n^{1/3})$, which leads to an amortized
update time of $O^{*}((n/\popt)^{1/3}) = O^{*}( n^{1/3})$. We proceed
in a similar manner as before. There are two differences.  First, we
now choose a random sample of size $O(k^{2/3}n^{1/3})$, and the
algorithm works in three rounds.  After the first round, we have a
piercing set $\P_1$ of size $O(\popt\log\log\popt)$, and we need to
represent the set of squares not pierced by $\P_1$ as $O^*(\popt)$
canonical subsets, so that we can choose a random sample $\Boxes_2$
from this subset of squares. After the second round, we have a
piercing set $\P_2$ of $\Boxes_2$ of size
$O(\popt\log\log\popt)$. Finally, we find the subset
$\Boxes_3\subseteq \Boxes$ of squares not pierced by $\P_1\cup\P_2$
and compute a piercing set of $\Boxes$. It suffices to describe how we
compactly represent the set $\Boxes_2$.

We map a square, which is centered at a point $c$ and of radius (half
side length) $a$, to the point $\bb^*=(c,a)\in\Re^3$. A point
$p = (x_p, y_p) \in \Re^2$ is now mapped to the cone
\begin{equation*}
    C_p=\Set{(x,y,z)\in\Re^3}{\dY{(x,y)}{p}_{\infty}\leq
       z}
\end{equation*}
with the square cross-section and $p$ its apex. The set $C_p$ is the
graph of the $L_{\infty}$-distance function from $p$. It is easily
seen that $p \in \bb$ if and only if $\bb^* \in C_p$. Hence, a box
$\bb\in\Boxes_2$ if it lies below all the cones of
$\C=\Set{C_p}{p\in\P_1}$. Using a 3D orthogonal range-searching data
structure, we can compute $\Boxes_2$ as the union of $O^*(k)$
canonical subsets. We omit the details from here and obtain the
following.

\begin{theorem}
    \thmlab{dynamic:square}
    A set $\Boxes$ of $n$ squares in $\Re^2$ can be stored in the data
    structure described above so that a piercing set of $\Boxes$ of
    size $O(\popt\log\log\popt)$ can be maintained with high
    probability under insertion and deletion of boxes in
    $O(n^{1/3}\polylog n )$ amortized expected time per insertion or
    deletion.
\end{theorem}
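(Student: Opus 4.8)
The plan is to follow the scaffolding of the proof of \thmref{dynamic:rect} almost verbatim, changing only the lifting map and the number of rounds. As there, I would maintain $\Boxes$ in a dynamic orthogonal range tree, handle insertions lazily (if the newly inserted square is not pierced by the current set $\P$, add an arbitrary point of it to $\P$), handle deletions lazily (leave $\P$ untouched), and, letting $s=O(\popt\log\log\popt)$ be the size of the piercing set produced at the last reconstruction, rebuild $\P$ from scratch every $\ceil{s/2}$ updates. Since the optimal piercing number changes by at most one per update, between rebuilds the current $\P$ remains of size $O(\popt\log\log\popt)$, and the whole theorem reduces to showing that one reconstruction can be done in $O^*(\popt^{2/3}n^{1/3})$ expected time: amortizing that over the $\Theta(\popt\log\log\popt)$ updates between rebuilds gives $O^*((n/\popt)^{1/3})=O^*(n^{1/3})$ amortized expected time, plus $\polylog n$ per update for maintaining the range tree. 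The reason for using three rounds of the multi-round algorithm of \secref{iteralg} (rather than two, as for rectangles) is that for squares the set of input objects not pierced by an intermediate solution can be extracted much more cheaply than in the rectangle case, which rebalances the optimal number of rounds to $\nRounds=3$ and the sample sizes to $m=\Theta(\popt^{2/3}n^{1/3})$ in \lemref{multi:round}.

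The one genuinely new ingredient is the lifting used to detect the squares not pierced by a partial solution $\P'$ with $\cardin{\P'}=O^*(\popt)$. Map a square centered at $c\in\Re^2$ with radius (half side length) $a$ to the point $\bb^*=(c,a)\in\Re^3$, and map a point $p\in\Re^2$ to the cone $C_p=\{(x,y,z):\lVert (x,y)-p\rVert_\infty\le z,\ z\ge 0\}$, i.e.\ to the graph of the $L_\infty$-distance from $p$. Then $p\in\bb$ iff $\bb^*\in C_p$, so a square $\bb$ is pierced by $\P'$ iff $\bb^*\in\bigcup_{p\in\P'}C_p$, and $\bb$ is \emph{not} pierced iff $\bb^*$ lies strictly below the lower envelope of $\{C_p:p\in\P'\}$, equivalently below the graph of the $L_\infty$ nearest-point distance function of $\P'$. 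I would invoke the classical fact that the $L_\infty$ Voronoi diagram of $\cardin{\P'}$ planar points has complexity $O(\cardin{\P'})$, with cells bounded by segments of slopes $0,\infty,\pm 1$; hence the region of $\Re^3$ below this terrain has complexity $O(\cardin{\P'})$ and decomposes into $O(\cardin{\P'})$ cells whose facets have only a constant number of normal directions. After a fixed, $\P'$-independent linear change of coordinates — for instance storing each square as the $4$-tuple $(c_x-a,\ c_y-a,\ c_x+a,\ c_y+a)$, which turns each $C_p$ into an orthant exactly as in the rectangle case, or using a constant-level range-searching structure — these cells become axis-parallel boxes; querying the dynamic range tree on $\Boxes^*$ with each of them returns the unpierced squares as a union of $O(\cardin{\P'}\,\polylog n)=O^*(\popt)$ \emph{pairwise-disjoint} canonical subsets. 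From such a representation one can draw a uniform random subsample in $O^*(1)$ time per sample (choose a canonical node with probability proportional to its size, then a uniform leaf below it) and, when needed, list all members in time proportional to their number.

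With this in hand the reconstruction is a direct instantiation of \lemref{multi:round} with $\nRounds=3$ and $\AlgPierce$ the algorithm of \thmref{pierce:alg}, whose cost on $m$ planar squares is $\TPierce(m)=O^*(m)$. In round $1$ we sample $\Boxes_1\subseteq\Boxes$ of size $O^*(\popt^{2/3}n^{1/3})$ and pierce it to get $\P_1$ of size $O(\popt\log\log\popt)$ in $O^*(\popt^{2/3}n^{1/3})$ time; we then extract the squares of $\Boxes$ not pierced by $\P_1$ as $O^*(\popt)$ disjoint canonical subsets and subsample from them a set $\Boxes_2$ of size $O^*(\popt^{2/3}n^{1/3})$; round $2$ pierces $\Boxes_2$ to obtain $\P_2$; in round $3$ we extract the set $\Boxes_3$ of squares not pierced by $\P_1\cup\P_2$ — which, by two applications of \lemref{sampling}, has size $O^*(\popt^{2/3}n^{1/3})$ with high probability — materialize it, and pierce it directly to obtain $\P_3$. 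The output is $\P_1\cup\P_2\cup\P_3$, of size $O(\popt\log\log\popt)$; correctness of both the size bound and the piercing property follows from \lemref{sampling}, the approximation guarantee of \thmref{pierce:alg}, and the restart rule whenever an intermediate piercing set is too large, exactly as in \lemref{multi:round} and \thmref{dynamic:rect}. An exponential search over the guess $k$ for $\popt$ costs only an $O(\log n)$ factor. Every step costs $O^*(\popt^{2/3}n^{1/3})$: the three piercing calls directly; the $L_\infty$ Voronoi decompositions and range-tree queries cost $O^*(\popt)\le O^*(\popt^{2/3}n^{1/3})$; and materializing $\Boxes_3$ costs $O^*(\cardin{\Boxes_3})$. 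This yields the claimed reconstruction time, hence the theorem.

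The step I expect to be the main obstacle is the second paragraph: verifying that the ``unpierced region'' in the $3$-dimensional lift really has complexity only $O(\cardin{\P'})$, exhibiting an explicit $\P'$-independent linear coordinate change (or multi-level structure) under which its $O(\cardin{\P'})$ cells become axis-parallel boxes, and ensuring that the resulting canonical subsets are pairwise disjoint so that the subsampling in rounds $1$ and $2$ is unbiased. Once this is in place, the remaining analysis — the sampling lemma, the round recursion, the exponential search, and the lazy insert/delete bookkeeping — transfers from the rectangle case with no essential change.
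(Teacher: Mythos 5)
Your proposal matches the paper's construction essentially verbatim: the same lazy-update and periodic-rebuild scaffolding with exponential search, three rounds with samples of size $O^{*}(\popt^{2/3}n^{1/3})$, and the same lifting of a square to the point $(c,a)\in\Re^3$ and of a piercing point $p$ to the $L_{\infty}$-cone $C_p$, so that unpierced squares are exactly the lifted points below the lower envelope and are retrieved as $O^{*}(\popt)$ canonical subsets by (multi-level) orthogonal range searching. The decomposition/range-searching detail you flag as the main obstacle is precisely the step the paper itself omits ("we omit the details"), so your account is, if anything, slightly more explicit than the paper's.
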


\section{Data structure for maintaining weights of boxes}
\seclab{DS}

In this section, we describe how to implement the data structure
needed for our algorithms whose specifications are given in
\defref{arr:ds1}. Let $\Boxes$ be a set of axis-aligned boxes in
$\Re^d$. For simplicity, we assume the boxes in $\Boxes$ are in
general position. Let $\AB$ be a set of \emphi{active} boxes
(initially empty) and let $\VX{\AB}$ be the set of vertices of the
arrangement $\ArrX{\AB}$. Let $\S\subseteq \Boxes$ be a multi-set of
\emphi{update} boxes (initially empty). For a point $p\in \VV$, recall
that the \emphi{doubling weight} of $p$ is defined to be
$\hwY{p}{\S} = 2^{\nrstY{\S}{p}}$, where let
$\rstY{\S}{p} = \Set{ \bb \in \S}{ p \in \bb}$ is the multi-set of all
boxes in $\S$ containing $p$. We require a data-structure that
supports the following operations:
\begin{compactenumI}
    \smallskip%
    \item $\weight (\bb)$: given a box $\bb$, computes $\hwY{\VX{\AB}\cap \bb}{\S}$.

    \smallskip%
    \item $\double(\bb)$: given a box $\bb \in \Boxes$, adds a copy of
    $\bb$ to the multi-set of update boxes $\S$.

    \smallskip%
    \item $\halve(\bb)$: given a box $\bb \in \Boxes$, removes a copy
    of $\bb$ from the multi-set of update boxes $\S$.

    \smallskip%
    \item $\sample$: returns a random point $p \in \VV(\AB)$ with
    probability ${\hwY{p}{\S}}/{\hwY{\VX{\AB}}{\S}}$.

    \smallskip%
    \item $\Insert (\bb)$: inserts $\bb$ into the set of active boxes
    $\AB$.

    \smallskip%
    \item $\Delete (\bb)$: removes $\bb$ from the set of active boxes
    $\AB$.

\end{compactenumI}

\smallskip%
\noindent%

For a set $Z \subseteq \Re^d$ and $i \in \IRX{d} = \{1,\ldots, d\}$, let
$\projY{Z}{x_i} = \Set{ x_i }{ (x_1, \ldots, x_d) \in Z} $ be the
\emphi{projection} of $Z$ to the $x_i$-axis. For a multiset $\S$ of
boxes in $\Re^d$ and $i \in \IRX{d}$, let
$\projY{\S}{x_i} = \Set{ \projY{\bb}{x_i} }{ \bb \in \S}$ be the
multiset of intervals resulting from projecting $\S$ on the
$x_i$-axis.

We first describe the data structure for the line in \secref{ds:line},
then for the plane in \secref{ds:plane}, and finally extend it to
higher dimensions in \secref{ds:higher}.

\subsection{Data structure in 1D}
\seclab{ds:line}

A segment tree with minor tweaking provides the desired data-structure
for 1D, as described next.

The set of vertices $\VV$ is simply the endpoints of the
intervals of $\Boxes$. We construct the segment tree $\Tree$ of
$\Boxes$, where the endpoints of $\Boxes$ are stored at the leaves of
$\Tree$ (thus, a node $v$ in $\Tree$ corresponds to an ``interval''
which is the set of all endpoints stored in its subtree).  The idea is
to update the weights in a lazy fashion --- for completeness we
describe this in detail, but this is by now folklore.

We modify the segment tree so that each internal node $v$ has a total
weight $\wX{v}$ of all the vertices in its subtree, and a count
$\alpha(v)$.  In any given point in time, if $v$ has two children $x$
and $y$, we have that $\wX{v} = 2^{\alpha(v)} (\wX{x} + \wX{y})$. The
value of $\alpha(v)$ is pushed down to its children in a lazy
fashion. Specifically, whenever the algorithm traverses from a node
$v$ to one of its children, it adds the value of $\alpha(v)$ to
$\alpha(x)$ and $\alpha(y)$, and sets $\alpha(v)$ to zero. It
immediately also recomputes $\wX{x}$ and $\wX{y}$. Similarly, whenever
the traversal returns from a child, the weight of the parent node is
recomputed.

For the operation $\weight(\bb)$, the data-structure locates the
$O( \log n)$ nodes of $\Tree$ such that their (disjoint) union covers
the interval $\bb$, and it returns the sum of their weights (note,
that this propagates down the values of $\alpha(\cdot)$ at all the
ancestors of these nodes to them -- thus, all the ancestors have
$\alpha(\cdot)$ with value zero).  The operation $\double(\bb)$
(resp. $\halve(\bb)$) works in a similar fashion, except that the
data-structure increases (resp. decreases) the $\alpha(\cdot)$ of the
all the $O( \log n)$ nodes of $\Tree$ covering $\bb$ by one.

The operation \sample performs a random traversal down the tree. If
the traversal is at node $v$, with children $x$ and $y$, it continues
to $x$ with probability $\wX{x}/\wX{v}$, and otherwise it continues to
$y$. The function returns the endpoint stored in the leaf where the
traversal ends.

Finally, if we want to support \Insert and \Delete, we initially set
the weight of endpoints associated with each interval to zero. When an
interval $(a,b)\in \Boxes$ is inserted to the set of active intervals
$\AB$, we locate the leaves corresponding to $a$ and $b$ and set their
weights to one. We then traverse up the tree from the leaves
recomputing $\omega(\cdot)$ for the ancestors of either leaf. The
deletion sets the weight of the endpoints to zero and proceeds
analogously. Therefore, we obtain the following:

\begin{lemma}
    \lemlab{d:s:1:dim}%
    Let $\Boxes$ be a set of $n$ intervals in $\Re^1$. A
    data-structure of size $O(n\log n)$ can be constructed in
    $O(n\log n)$ time that supports each of the operations described
    in \defref{arr:ds} in $O(\log n)$ time.
\end{lemma}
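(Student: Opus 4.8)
The plan is to realize the data structure as a \emph{segment tree} over the $2n$ interval endpoints, augmented with two fields per node and maintained by lazy propagation; each of the six operations ($\weight$, $\double$, $\halve$, $\sample$, $\Insert$, $\Delete$) then reduces to a constant number of root-to-leaf descents together with the canonical decomposition of one query interval, touching only $O(\log n)$ nodes and doing $O(1)$ work per node.

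First I would build the skeleton: sort the $2n$ endpoints of $\Boxes$ and form a balanced binary tree $\Tree$ whose leaves are the endpoints in sorted order, so that each node $v$ owns the contiguous block $B_v$ of endpoints stored below it. To support $\Insert$ and $\Delete$, I also store with each input interval $\bb$ pointers to the $O(\log n)$ canonical nodes of $\Tree$ whose blocks partition the endpoints inside $\bb$; this accounts for the $O(n\log n)$ space. Each node $v$ carries $\wX{v}$, the total doubling weight of the currently active endpoints in its subtree, and a lazy counter $\alpha(v)$, and we maintain the invariant $\wX{v} = 2^{\alpha(v)}(\wX{x}+\wX{y})$ for the children $x,y$ of $v$ (a leaf holds $0$ if inactive and $2^{\alpha}$ if active), where $\alpha(v)$ is flushed to its children whenever a traversal passes through $v$ --- set $\alpha(x)\leftarrow\alpha(x)+\alpha(v)$ and $\alpha(y)\leftarrow\alpha(y)+\alpha(v)$, rescale $\wX{x}$ and $\wX{y}$, and zero $\alpha(v)$ --- while $\wX{v}$ is recomputed from its children whenever the traversal returns to $v$. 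The one thing to verify, by a short induction on depth, is that the sum of $\alpha(\cdot)$ along the root-to-leaf path to an endpoint $p$ equals $\nrstY{\S}{p}$; hence after that path is flushed an active leaf holds exactly $\hwY{p}{\S} = 2^{\nrstY{\S}{p}}$, and every internal $\wX{v}$, once its ancestors are flushed, equals $\hwY{\VX{\AB}\cap B_v}{\S}$.

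Given the invariant, the operations are mechanical. For $\weight(\bb)$, descend to the $O(\log n)$ canonical nodes covering $\bb$, flushing $\alpha$ along the way, and return the sum of their $\wX{\cdot}$ values. For $\double(\bb)$ (resp.\ $\halve(\bb)$), do the same descent but add $1$ to (resp.\ subtract $1$ from) the $\alpha$ of each of those canonical nodes, double (resp.\ halve) their $\wX{\cdot}$, and recompute $\wX{\cdot}$ back up the search paths. For $\sample$, perform a random root-to-leaf walk that, at an internal node $v$ with children $x,y$ (after flushing $v$), continues into $x$ with probability $\wX{x}/\wX{v}$ and into $y$ otherwise; telescoping the branching probabilities shows the walk ends at leaf $p$ with probability $\hwY{p}{\S}/\hwY{\VX{\AB}}{\S}$. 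For $\Insert(\bb)$ (resp.\ $\Delete(\bb)$), flush the root-to-leaf paths to the two endpoints of $\bb$, flip their active flags, reset their weights to the appropriate power of two (resp.\ to $0$), and recompute $\wX{\cdot}$ up to the root. Each operation visits $O(\log n)$ nodes with $O(1)$ work each, so runs in $O(\log n)$ time, and preprocessing is dominated by the sort, the construction of $\Tree$, and the canonical-node pointers, i.e.\ $O(n\log n)$ time and space.

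I expect no genuine obstacle here, only the bookkeeping of the lazy counters required to keep the invariant true after \emph{every} operation --- in particular flushing $\alpha$ before any weight is read or overwritten in $\sample$, $\Insert$, and $\Delete$, and observing that $\halve$ may momentarily drive an individual $\alpha(v)$ negative while the accumulated path sums stay nonnegative, which is harmless since $\alpha(v)$ is only ever used as an exponent of $2$. As the text already notes, this is by now folklore, so I would confine the write-up to stating the invariant precisely and giving the one-line inductive check that each operation preserves it.
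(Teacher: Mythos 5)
Your proposal is correct and matches the paper's own construction essentially verbatim: a segment tree over the interval endpoints with a lazy counter $\alpha(\cdot)$ and subtree weight $\wX{\cdot}$ per node, canonical-node decomposition for \weight, \double, \halve, a weighted random root-to-leaf walk for \sample, and endpoint activation/deactivation for \Insert/\Delete. Your extra care in flushing the lazy counters before reading or overwriting leaf weights (e.g.\ on \Insert) is a harmless refinement of the same folklore argument, so there is nothing to add.
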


\subsection{Data structure in the plane}
\seclab{ds:plane}
\paragraph{Basic idea}

Let $\Boxes$ be the set of input rectangles, and let $\cell$ be an
axis-parallel box that does not contain a vertex of a rectangle of
$\Boxes$ in its interior. Any edge $e$ of a rectangle $\bb \in \Boxes$
that intersects the interior of $\cell$, cuts it into two pieces by
the line supporting $e$. Let $X$ (resp. $Y$) be the set of all the
$x$-values (resp. $y$-values) of all edges that are orthogonal to the
$x$-axis (resp. $y$-axis) and that intersect $\cell$. Clearly, the
vertices of $\VV$ in the interior of $\cell$ are formed by the
Cartesian product $X \times Y $ (we refer to this set somewhat
informally, as a \emphw{grid}). Weights on such a grid of vertices can
be maintained implicitly by maintaining weights on the sets $X$ and
$Y$ separately (using the one dimensional data-structure). It is easy
to verify that all the operations of \defref{arr:ds} decompose into
these one-dimensional data-structures.

Thus, the data-structure is going to decompose the arrangement of
$\Boxes$ into cells, where the vertices inside each cell would be
represented by such a grid. Importantly, each box would appear
directly in only $O(\sqrt{n})$ grids. A technicality is that update
box might contain many such cells in their interior (and thus their
grids are contained inside the update box). Fortunately, by storing
these cells in an appropriately constructed tree, such updates could be handled by implicit updates on the nodes
of this tree.

\paragraph{Partitioning scheme}
Let $\Boxes$ be a set of axis-aligned boxes in $\Re^2$, and let
$\square$ be a rectangle that contains all boxes of $\Boxes$ in its
interior. We first partition $\square$ into $\ceil{ 2\sqrt{n} }$
vertical slabs, i.e., by drawing $y$-parallel edges, so that each slab
contains at most $\sqrt{n}$ vertical edges of boxes in $\Boxes$. Next,
we further partition each slab $\sigma$ into $O(\sqrt{n})$ rectangles
by drawing horizontal edges as follows. If a corner (vertex) $\xi$ of
a box of $\Boxes$ lies in $\sigma$, we partition $\sigma$ by drawing a
horizontal edge passing through $\xi$. Finally, if a rectangle
$\varrho$ in the subdivision of $\sigma$ intersects more than
$\sqrt{n}$ horizontal edges of $\Boxes$, we further partition
$\varrho$ into smaller rectangles by drawing horizontal edges, so that
it intersects at most $\sqrt{n}$ horizontal edges. Let $\Partition$ be
the resulting rectangular subdivision of $\square$. Observe that
$\Partition$ has $O(n)$ cells. By construction, no vertex of $\Boxes$
lies in the interior of a cell of $\Partition$, the boundary of any
box intersects $O(\sqrt{n})$ cells of $\Partition$, and each cell
$\cell$ intersects at most $2\sqrt{n}$ edges of $\Boxes$.  For a cell
$\cell$ of $\Partition$, let $\VV \cap {\cell}$ be the set of all
vertices of $\VV = \VX{\Boxes}$ that lie in the \emph{interior} of
$\cell$. Observe that we have
$\VV \cap {\cell} = \projY{(\VV \cap {\cell})}{x} \times \projY{(\VV
   \cap {\cell})}{y}$.

\paragraph{Weights decompose in a cell}
Consider a multiset $\S_{\cell} \subseteq \Boxes$ that intersect the
interior of a cell $\cell$. The multiset $\S_{\cell}$ can be
partitioned into two multisets of intervals:
\begin{equation*}
    \XX=
    \Set{ \projY{\bb}{x}}{ \bb \in \S_{\cell} \text{ and  }
       \projY{\cell}{x} \not\subset
       \projY{\bb}{x}}
    \qquad\text{and}\qquad%
    \YY=
    \Set{ \projY{\bb}{y}}{ \bb \in \S_{\cell} \text{ and  }
       \projY{\cell}{y} \not\subset
       \projY{\bb}{y}}.
\end{equation*}
Note, that a box $\rect \in \S_{\cell}$ contributes an interval (more
formally, its multiplicity of intervals) to exactly one of these sets,
as no box of $\S_{\cell}$ can have a vertex in the interior of $\cell$.

\begin{lemma}
    \lemlab{2dim}%
    Let $\cell$ be a cell of $\Partition$, $\S_{\cell} \subseteq \Boxes$ a
    multiset of rectangles whose boundaries intersect the interior of
    $\cell$, and let $\rect$ be a rectangle.  For
    $U = \VV \cap \cell \cap \rect$,
    $X = \projY{U}{x}$ and
    $Y = \projY{U}{y}$, we have that
    \begin{math}
        \hwY{U}{\S_{\cell}} =%
        \hwY{\bigl.X}{\XX} \cdot \hwY{\bigl.Y}{\YY},
    \end{math}
\end{lemma}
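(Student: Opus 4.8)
The plan is to reduce the two-dimensional weight of $U$ to a product of one-dimensional weights, exploiting the fact that $\VV\cap\cell$ is itself a Cartesian product. First I would record the grid structure of $U$: since $\VV\cap\cell = \projY{(\VV\cap\cell)}{x}\times\projY{(\VV\cap\cell)}{y}$ and the query rectangle factors as $\rect = \projY{\rect}{x}\times\projY{\rect}{y}$, the set $U = \VV\cap\cell\cap\rect$ is again a Cartesian product, so $U = X\times Y$ with $X = \projY{U}{x}$ and $Y = \projY{U}{y}$ (if $U=\emptyset$ then $X=Y=\emptyset$ and the claimed identity reads $0=0$, so assume $U\neq\emptyset$). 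Note also that $X\subseteq\projY{\cell}{x}$ and $Y\subseteq\projY{\cell}{y}$ because $U\subseteq\cell$.

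Second, I would pin down the partition of $\S_\cell$ underlying the definitions of $\XX$ and $\YY$. As already observed before the lemma, by the construction of $\Partition$ no box of $\S_\cell$ has a vertex in $\intX{\cell}$, while each box of $\S_\cell$ has a boundary edge meeting $\intX{\cell}$. From this one deduces that for every $\bb\in\S_\cell$ \emph{exactly one} of the conditions $\projY{\cell}{x}\not\subset\projY{\bb}{x}$ and $\projY{\cell}{y}\not\subset\projY{\bb}{y}$ holds: if neither held then $\cell\subseteq\bb$ and $\bd\bb$ would miss $\intX{\cell}$, and if both held then a vertical and a horizontal edge of $\bb$ would each cross $\intX{\cell}$, forcing a vertex of $\bb$ into $\intX{\cell}$. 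Hence $\S_\cell$ splits as a disjoint union $\S_\cell = \S^X\sqcup\S^Y$, where $\S^X$ is the set of boxes contributing their $x$-projection to $\XX$ and $\S^Y$ the set contributing their $y$-projection to $\YY$; moreover every $\bb\in\S^X$ satisfies $\projY{\cell}{y}\subseteq\projY{\bb}{y}$ and every $\bb\in\S^Y$ satisfies $\projY{\cell}{x}\subseteq\projY{\bb}{x}$.

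Third — the computational core — I would show the occupancy count is additive over the grid. For a grid point $q=(x_q,y_q)\in U = X\times Y$ we have $q\in\bb$ iff $x_q\in\projY{\bb}{x}$ and $y_q\in\projY{\bb}{y}$. For $\bb\in\S^X$ the second condition holds automatically, since $y_q\in Y\subseteq\projY{\cell}{y}\subseteq\projY{\bb}{y}$, so $\bb$ is counted at $q$ precisely when $x_q\in\projY{\bb}{x}$; symmetrically a box $\bb\in\S^Y$ is counted at $q$ precisely when $y_q\in\projY{\bb}{y}$. Since, as multisets, $\XX = \Set{\projY{\bb}{x}}{\bb\in\S^X}$ and $\YY = \Set{\projY{\bb}{y}}{\bb\in\S^Y}$, summing over the two groups gives $\nrstY{\S_\cell}{q} = \nrstY{\XX}{x_q} + \nrstY{\YY}{y_q}$. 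Plugging this into the definition of the doubling weight and using $U = X\times Y$, the sum distributes over the product:
\[
    \hwY{U}{\S_\cell}
    = \sum_{q\in U} 2^{\nrstY{\S_\cell}{q}}
    = \sum_{x\in X}\sum_{y\in Y} 2^{\nrstY{\XX}{x}}\cdot 2^{\nrstY{\YY}{y}}
    = \Bigl(\sum_{x\in X} 2^{\nrstY{\XX}{x}}\Bigr)\Bigl(\sum_{y\in Y} 2^{\nrstY{\YY}{y}}\Bigr)
    = \hwY{X}{\XX}\cdot\hwY{Y}{\YY},
\]
which is the claim.

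The only step needing genuine care is the ``exactly one side'' dichotomy for the boxes of $\S_\cell$ — turning the geometric hypotheses (the boundary of $\bb$ meets $\intX{\cell}$, and $\bb$ has no vertex in $\intX{\cell}$) into the clean containment statement used above. This is essentially asserted in the paragraph preceding the lemma, so it is a matter of making it rigorous; everything after that (the product factorization of $U$, the additive split of the count at a grid point, and distributing the double sum over the product) is routine bookkeeping.
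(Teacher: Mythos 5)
Your proof is correct and follows essentially the same route as the paper's: establish $U = X\times Y$, show the containment count at a grid point splits additively as $\nrstY{\S_{\cell}}{q} = \nrstY{\XX}{x_q} + \nrstY{\YY}{y_q}$, and factor the resulting double sum. The paper asserts the ``exactly one projection contributes'' dichotomy in the paragraph preceding the lemma and your write-up merely makes that step (and the automatic containment in the other coordinate) explicit, so there is no substantive difference.
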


\begin{proof}
    Since no box $\rect \in \S_{\cell}$ has a vertex in the interior of
    $\cell$, it follows that either the vertical or the horizontal
    edges of $\rect$ intersect $\cell$ (but not both). As such, we
    have $U = X \times Y$.  Furthermore, for $p =(x,y) \in U$, we have
    that $\nrstY{p}{\S_{\cell}} = \nrstY{x}{\XX} + \nrstY{y}{\YY}$, which
    implies that $\hwY{p}{\S_{\cell}} = \hwY{x}{\XX}\cdot \hwY{y}{\YY}$.  As
    such, we have
    \begin{align*}
        \hwY{U}{\S_{\cell}}
      &=%
        \sum_{p \in U}  \hwY{p}{\S_{\cell}}
        =%
        \sum_{(x,y) \in X \times Y}  \hwY{x}{\XX}\cdot \hwY{y}{\YY}
      =%
      \sum_{x \in X} \hwY{x}{\XX} \cdot\sum_{y \in Y}\hwY{y}{\YY}
      \\&
      =
      \hwY{X}{\XX}
      \cdot
      \hwY{Y}{\YY}.
    \end{align*}
\end{proof}

\begin{figure}[h]
    \centering \includegraphics{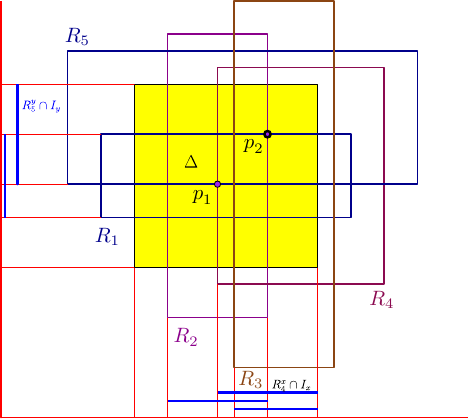}

    \caption{$\Delta=I_x \times I_y$ denotes a cell in
       $\Partition$. $\S_{\cell}=\{R_1,R_2,R_3,R_4\}$ intersects $\Delta$. Let
       $R_i=R_i^x\times R_i^y$ for $i\in\{1,2,3,4\}$.
       $\MX_{\Delta}=\{R_2^x, R_3^x, R_4^x\}$,
       $\MY_{\Delta}=\{R_1^y, R_5^y\}$. Observe that
       $\hwY{\{p_1\}}{\S_{\cell}} =
       \hwY{\projY{\{p_1\}}{x}}{\MX_{\Delta}} \cdot
       \hwY{\projY{\{p_1\}}{y}}{\MY_{\Delta}}=16$ and
       $\hwY{\{p_2\}}{\S_{\cell}} =
       \hwY{\projY{\{p_2\}}{x}}{\MX_{\Delta}} \cdot
       \hwY{\projY{\{p_2\}}{y}}{\MY_{\Delta}}=32$. }
    \figlab{cell}
 \end{figure}

 \paragraph{Tree data structure}
 We construct a balanced binary tree $\Tree$ storing the cells of
 $\Partition$ in the leaves -- the top levels form a balanced binary
 tree over the slabs, say from left to right. A ``leaf'' that stores a
 slab is then the root of a balanced binary tree on the cells within
 the slab, with cells ordered from bottom to top.  Each node $u$ of
 $\Tree$ is associated with a rectangle $\Box_u$. The root is
 associated with $\square$ itself and each leaf is associated with the
 corresponding cell of $\Partition$. For an internal node $u$ with
 children $w$ and $z$, $\Box_u=\Box_w\cup\Box_z$. Let
 $\VV_u=\VV\cap \intX{\Box_u}$. If $u$ is an internal node with $w$
 and $z$ as children and $e$ being the common edge of $\Box_w$ and
 $\Box_z$ (i.e.  $e=\Box_w\cap \Box_z$), then
 $\VV_u=\VV_w\cup \VV_z\cup (\VV\cap e)$. If $e$ does not lie in the
 horizontal edge of a rectangle of $\Boxes$ then
 $\VV\cap e=\emptyset$. If $e\in \partial \rect$ for some rectangle
 $\rect \in\Boxes$ then $\VV_u\cap e$ is the set of intersection
 points of $e\cap \partial \rect$ with the $x$-edges of $\Boxes$ that
 intersect $e$.

 \begin{figure}[htp]
     \phantom{}%
     \hfill%
     \includegraphics[page=1]{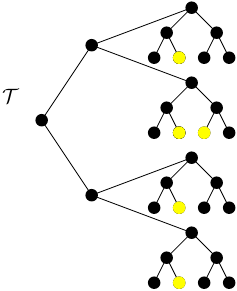}%
     \hfill%
     \includegraphics[page=2]{figs/tree}%
     \hfill%
     \phantom{}%

    \caption{The figure represents the partition $\Partition$ and the tree
       $\Tree$ built on $\Partition$. $\sigma$ is a
       strip of $\Partition$, $\Delta$ is
       a cell of $\sigma$. $\Tree_{\sigma}$ is the tree built on the
       cells of $\sigma$. $R$ is a rectangle in $\Boxes$. The highlighted
       nodes represent $\C(\bb)$. ( Note that the axes are flipped.) }
    \figlab{part}
\end{figure}

Let $\rect = \rect_x \times \rect_y$ be an arbitrary rectangle. The
rectangle $\rect$ is \emphi{long}, at a node $u$, if
$\Box_u\subseteq \rect $, and \emphi{short} if
$\Box_u\cap \partial \rect \neq \emptyset$. A node $u$ is
\emphi{canonical} for $\rect$, if ({\romannumeral 1}) $u$ is a leaf
and $\rect$ is short at $u$, or ({\romannumeral 2}) if $\rect$ is long
at $u$ but short at $p(u)$, the parent of $u$ in $\Tree$. Let
$\C(\rect)$ denote the set of canonical nodes for $\rect$ and let
$\C^*(\rect)$ be the set of ancestors of nodes in $\C(\rect)$
including $\C(\rect)$ itself. By the construction of $\Partition$,
$\rect$ is short at $O(\sqrt{n})$ leaves of $\Tree$, therefore
$|\C(\rect)|=|\C^*(\rect)|=O(\sqrt{n}\log n)$.

For simplicity, in the subsequent discussion we assume that all boxes
are active. Notice that the construction of $\Tree$ is not affected by
the set of active boxes. Later, we describe how to insert or delete
from the set of active boxes $\AB$ while preserving the invariants of
the data structure.  Recall that $\S$ is the multi-set of update
boxes. For a node $u\in \Tree$, let $\S_u\subseteq \S$
(resp. $\L_u\subseteq \S$) be the set of update boxes that are short
(resp. long) at $u$. Note that $u$ is a canonical node for all boxes
in $\L_u\setminus \L_{p(u)}$ and also for $\S_u$ if $u$ is a leaf. We
maintain two values at each node $u$ of $\Tree$:
\begin{compactenumi}
    \smallskip%
    \item $\lambda_u = \cardin{\L_u\setminus \L_{p(u)}}$: the number
    of boxes of $\S$ that are long at $u$ but short at $p(u)$;
    and

    \smallskip%
    \item $\omega_u=\hwY{\VV_u}{\S_u}$: the total weight of points in
    $\VV_u$ with respect to the update boxes that are short at
    $u$.
\end{compactenumi}
\smallskip%
The following inequalities follow easily from the definitions for any
rectangle $\rect$:
\begin{align}
  \eqlab{1}
  &\hwY{\VV_u}{\S}= \hwY{\VV_u}{\S_u}
    \times\hwY{\VV_u}{\L_u}
    =
    \omega_u\times \prod\nolimits_{v\in \anc(u)} 2^{\lambda_v}
  \\&
  \eqlab{2}
  \hwY{\VV_u\cap\rect}{\S}
  =%
  \hwY{\VV_u\cap \rect}{\S_u} \times \prod\nolimits_{v \in \anc(u)} 2^{\lambda_v},
\end{align}
where $\anc(u)$ is the set of ancestors of $u$, including $u$
itself. Let $u$ be an internal node of $\Tree$ with children $y$ and
$z$, and let $e_u=\Box_y\cap\Box_z $ be the common edge shared by
$\Box_y$ and $\Box_z$. Then, for any rectangle $\rect$, we have
\begin{align}
  \eqlab{3}
  &\omega_u=\omega_y2^{\lambda_y}+\omega_z 2^{\lambda_z}+\hwY{\VV\cap e_u}{\S_u}
  \\
  \eqlab{4}
  \text{and}\qquad%
  &\hwY{\VV_u\cap \rect}{\S_u}
    =
    \hwY{\VV_w\cap \rect}{\S_y}2^{\lambda_y}
    + \hwY{\VV_z\cap \rect}{\S_z} 2^{\lambda_z}+\hwY{\VV\cap e_u}{\S_u},
\end{align}
If $e_u$ is a vertical edge then $\VV\cap e_u=\emptyset$, so the last
term is positive only if $e_u$ is a horizontal edge of a cell in
$\Partition$.

By \Eqref{3}, $\omega_u$ can be maintained recursively if the value at
the leaves of $\Tree$ is known, and one can compute
$\hwY{\VV\cap e_u}{\S_u}$ efficiently. To this end, we use the ``lazy
segment tree'' data structure from \secref{ds:line}, as secondary data
structures, to maintain $\omega_u$ at the leaves. The same data
structure is also used to maintain $\hwY{\VV \cap e_u}{\S_u}$.

Let $z$ be a leaf of $\Tree$, and let $\Box_z$ be the cell in
$\Partition$ corresponding to $z$.  Let
$X = \projY{\pth{\Box_z \cap \VV}}{x}$ and
$Y = \projY{\pth{\Box_z \cap \VV}}{y}$. Note that,
$|X|=|Y|=O(\sqrt{n})$ because only $O(\sqrt{n})$ rectangle boundaries
intersect the interior of $\Box_z$. Then we construct lazy segment
trees $\Psi_z^x$ and $\Psi_z^y$ on $X$, and $Y$ respectively. Recall
that $\S_z\subseteq\S$ is the set of update rectangles that are short
at $z$. We can decompose $\S_z$ into the following multi-sets of
intervals:

\begin{equation*}
     \XXz=
    \Set{ \projY{\bb}{x}}{ \bb \in \S_{z} \text{ and  }
       \projY{(\Box_z)}{x} \not\subset
       \projY{\bb}{x}}
    \qquad\text{and}\qquad%
    \YYz=
    \Set{ \projY{\bb}{y}}{ \bb \in \S_{z} \text{ and  }
       \projY{(\Box_z)}{y} \not\subset
       \projY{\bb}{y}}. %
\end{equation*}

By \lemref{2dim},
\begin{math}
    \omega_z%
    =%
    \hwY{\VV_z}{\S_z}%
    =%
    \hwY{X}{\XXz}\cdot \hwY{Y}{\YYz}.
\end{math}
Hence, we can maintain $\omega_z$ by maintaining $\Psi_z^x$ and
$\Psi_z^y$. Furthermore, for each internal node $u$ with children
$v,w$ such that $\Box_u$ lies in a slab of $\Partition$, we also have
a lazy segment tree $\Psi_u^e$ for handling the vertices in $\VV \cap e$
where $e=\Box_v\cap\Box_w$.  We omit the details from here.

This completes the description of the data structure. The
tree $\Tree$, as well as the secondary data structures at all leaves
and relevant internal nodes of $\Tree$ can be constructed and
initialized in $O(n^{3/2}\log n)$ time.

\paragraph{Operations on \TPDF{$\Crates$}{T}}

We now briefly describe how to perform the operations mentioned
in \secref{mwu:basic} using the data structure.

\begin{compactitem}
    \item $\weight(\rect)$: Given a rectangle
    $\rect=\rect_x\times \rect_y$, the goal is to return
    $\hwY{\VV_{\rho}\cap \rect}{\S}$ where $\rho$ is the root of
    $\Tree$. We start at the root node and visit $\Tree$ in a top-down
    manner to find the set of canonical nodes in $\Tree$ with respect
    to $\rect$.  Then, starting from the canonical set of nodes we
    visit $\Tree$ bottom-up computing
    $\hwY{\VV_{u}\cap \rect}{\S_{u}}$ for each node $u$ we visit. In
    particular, for each internal node $u$ we visit, if $\rect$ is
    long at $u$ then $\hwY{\VV_u\cap \rect}{\S_u}=\omega_u$.  If
    $\rect$ is short at $u$, and $w, z$ are its children then,
    \begin{equation*}
        \hwY{\VV_u\cap \rect}{\S_u}
        =
         2^{\lambda_{w}}\hwY{\VV_w\cap \rect}{\S_{w}}
            +2^{\lambda_{z}}
            \hwY{\VV_z\cap \rect}{\S_{z}} + \hwY{\VV \cap e}{\S_u}
    \end{equation*}
    where $e=\Box_{w}\cap \Box_{z}$. We recursively compute each of
    the terms.  If $u$ is a leaf node, we compute
    $\hwY{\VV_u\cap \rect}{\S_u}$ by calling $\weight(\rect_x)$ and
    $\weight(\rect_y)$ on $\Psi_u^x$ and $\Psi_u^y$, respectively and
    taking their product.  \lemref{2dim} guarantees correctness in
    this case. In the case of the internal nodes, correctness follows
    from \Eqref{1}, \Eqref{2}, \Eqref{3}, and \Eqref{4}.

    \smallskip%

    \item $\double(\rect)$: Given a rectangle
    $\rect=\rect_x\times \rect_y\in \Boxes$, we first add a copy of
    $\bb$ to the set of update boxes $\S$. We then need to modify the
    data structure to ensure all the invariants are preserved. We
    start at the root of $\Tree$ and we traverse $\Tree$ in a top-down
    manner. If $\rect$ is long at $u$, we increment $\lambda_u$ by
    $1$. If $\rect$ is short at $u$ and $u$ is an internal node with
    children $w$ and $z$, and $e=\Box_{w}\cap\Box_{z}$, we first
    update $\Psi_u^e$ and then recurse on $w$ and $z$. If we reach a
    leaf $u$, and vertical (resp. horizontal) edges of $\rect$
    intersect $\Box_u$, we call $\double(\rect_x)$
    (resp. $\double(\rect_y)$) on $\Psi_u^x$ (resp. $\Psi_u^y$). We
    also update the values of $\omega_u$ for every node $u$ we
    visit. It is easy to verify that the operation preserves the
    invariants captured by \Eqref{1}, \Eqref{2}, \Eqref{3}, and
    \Eqref{4}.

    \smallskip%
    \item $\halve(\rect)$: Given a rectangle
    $\rect=\rect_x\times \rect_y\in \Boxes$, we proceed analogously to
    the way we do for $\double$. We start at the root of $\Tree$ and
    traverse in a top-down manner. If $\rect$ is long at $u$, we
    decrement $\lambda_u$ by $1$. If $\rect$ is short at $u$ and $u$
    is an internal node with children $w$ and $z$ and
    $e=\Box_{w}\cap\Box_{z}$, we first update $\Psi_u^e$ and then
    recurse on $w$ and $z$. If we reach a leaf $u$, and vertical
    (resp. horizontal) edges of $\rect$ intersect $\Box_u$, we call
    $\halve(\rect_x)$ (resp. $\halve(\rect_y)$) on $\Psi_u^x$
    (resp. $\Psi_u^y$).

    \smallskip%
    \item $\sample$: We start at the root of $\Tree$ and we traverse
    $\Tree$ in a top-down manner. For a node $u$ with children $w, z$,
    we pick one of $w, z$ and $e=\Box_{w}\cap \Box_{z}$ with
    probabilities
    $\hwY{\VV\cap \Box_{w}}{\S}, \hwY{\VV \cap \Box_{z}}{\S}$, and
    $\hwY{\VV\cap e}{\S}$ respectively. These probabilities can easily
    be computed using the secondary data structures and the values
    maintained at each node, as done in the $\weight$ operation. If
    $e$ is picked, we then use the $\sample$ operation on the
    secondary data structure $\Psi_z^e$ to sample a point from
    $\VV\cap e$. If $w$ or $z$ is picked we recurse on it. If at any
    stage we reach a leaf $l$, we pick $a$ and $b$ by calling
    $\sample$ on $\Psi_l^x$ and $\Psi_l^y$, respectively, and return
    the point $(a,b)$.

    \smallskip%

    \item $\Insert(\rect)$: Given a rectangle
    $\rect=\rect_x\times \rect_y$, we first add a copy of $\bb$ to the
    set of active boxes $\AB$. Although a new active box does not
    affect weights of existing vertices, it can add new vertices
    thereby making the value of $\omega_u$ inconsistent for a node
    $u$. We proceed in the same way we do for $\double$, except we do
    not modify the $\lambda_u$ values and use the $\Insert$ operation
    on the secondary data structures. We also update the values of
    $\omega_u$ for every node $u$ we visit.  It is easy to verify that
    the operation preserves the invariants captured by \Eqref{1},
    \Eqref{2}, \Eqref{3}, and \Eqref{4}.

    \item $\Delete(\rect)$: Given a rectangle
    $\rect=\rect_x\times \rect_y$, we first remove a copy of $\bb$ to
    the set of active boxes $\AB$. The rest of the procedure is the
    same as $\Insert$, except we use the $\Delete$ operation on the
    secondary data structures in the leaves.

\end{compactitem}

\paragraph{Running time} %

As discussed above, a rectangle $\rect$ has at most
$O(\sqrt{n} \log n)$ canonical nodes. Hence, to compute
$\weight(\cdot)$, $\double(\cdot)$, $\halve(\cdot)$, $\Insert(\cdot)$,
or $\Delete(\cdot)$ operations we need to visit $O(\sqrt{n} \log n)$
nodes.  Also, $O(\sqrt{n})$ secondary data structure operations are
performed each of which takes $O(\log n)$ time. Hence, the running
time for these operations is bounded by $O(\sqrt{n}\log n)$. For
$\sample$, it is clear that we visit $O(\log n)$ nodes and perform a
total of $O(1)$ secondary data structure operations. Hence, $\sample$
takes $O(\log n)$ time.

\subsection{Extending to higher dimensions}
\seclab{ds:higher}

The basic idea of 2D can be extended to higher dimensions. Namely, let
$\cell$ be an axis-aligned box that does not contain any $(d-2)$-face
of a box $\bb\in\Boxes$. If a facet $f$ of a box $\bb\in \Boxes$
intersects the interior of $\cell$ then $f$ splits $\cell$ into two
boxes by the hyperplane supporting $f$. For $i=1,...,d$, let $X_i$ be
the $x_i$-values of the facets of the boxes in $\Boxes$ that are
orthogonal to the $x_i$-axis and that intersect $\cell$. Then
$\VV(\Boxes)\cap \cell=X_1\times...\times X_d$. We thus extend the 2D
data structure to higher dimensions.

\paragraph{Partitioning scheme}

To extend the above data structure to $\Re^d$, for $d>2$, we
generalize the partition scheme described above. The partition scheme
works recursively on the dimensions -- first applying a hyperplane cut
every time $\sqrt{n}$ faces orthogonal to the first axis are
encountered. Now, we project on the boxes that intersect a slab into
its boundary hyperplane. We construct a partition on this hyperplane,
applying the construction in $\Re^{d-1}$, and then lift the partition
back to the slab. We now briefly describe the specifics.

Let $\square$ be a box containing all of $\Boxes$. A facet of a box is
an \emphi{$i$-side} if it is normal to the $x_i$-axis.  We first
partition $\square$ into $\ceil{ 2\sqrt{n}}$ slabs, by drawing
hyperplanes parallel to the $1$-sides of the boxes so that each slab
contains at most $\sqrt{n}$ $1$-sides of boxes in $\Boxes$.  These are
\emphw{level-1} slabs. Next, we further partition each level-1 slab by
drawing hyperplanes along the $2$-sides. If a corner (vertex) $\xi$ of
a box of $\Boxes$ lies in $\sigma$, we partition $\sigma$ by drawing a
hyperplane parallel to the $2$-side passing through $\xi$. Finally, if
a cell $\cell$ in the subdivision of $\sigma$ intersects more than
$\sqrt{n}$ $2$-sides of $\Boxes$, we further partition $\cell$ by
drawing hyperplanes parallel to the $2$-sides, so that it intersects
at most $\sqrt{n}$ $2$-sides. The above steps partition each level-1
slab $\sigma$ into $O(\sqrt{n})$ level-2 slabs. Next, we repeat the
previous steps for each level-2 slab $\gamma$, i.e. we partition it
using hyperplanes along the $3$-sides such that no cell in the
resulting subdivision contains more than $O(\sqrt{n})$ 3-sides of
boxes in $\Boxes$. We continue these steps on the resulting level-3
slabs and so on. For more details see \cite{oy-nubkm-91}.

Let $\Partition$ be the resulting partition. By construction,
$\Partition$ is a subdivision of $\square$ into boxes. We have
$|\Partition|=O(n^{d/2})$, no vertex of $\Boxes$ lies in the interior
of a cell of $\Partition$, the boundary of any box intersects at most
$O(n^{(d-1)/ 2})$ cells of $\Partition$, and each cell intersects at
most $O(\sqrt{n})$ faces of $\Boxes$. Borrowing the definition from
\cite{oy-nubkm-91}, for a cell $\cell \in \Partition$, a box
$\bb \in \Boxes$ is an \emphi{$i$-pile} with respect to $\cell$ if
$\partial \bb \cap \cell \neq \emptyset$ and for all $j\neq i$, the
$j$\th interval of $\bb$ spans the $j$\th interval of $\cell$. The
partition ensures that if a box $\bb \in \Boxes$, has
$\partial \bb\cap \intX{\cell} \neq \emptyset$ then it is an $i$-pile
with respect to $\cell$ for some $i$.

For a cell $\cell \in \Partition$, let $\Edges_\cell^i$ denote the set
of $i$-sides of $\Boxes$ that intersect $\cell$, clipped within
$\cell$. Let $Z_\cell^i=\projY{(\VV\cap\cell)}{x_i}$ be the set of
$x_i$-coordinates of the $i$-sides in $\Edges_\cell^i$.  Let
$\VV_\cell\subseteq \VV$ be the subset of vertices that lie in the
interior of $\cell$. Then
\begin{math}
    \VV_\cell =%
    \Set{e_1 \cap \cdots \cap e_d}{ e_1\in \Edges_\cell^1,\ldots,
       e_d\in \Edges_\cell^d }.
\end{math} The following is the straightforward extension of \lemref{2dim} to
higher dimensions.

\begin{lemma}
    \lemlab{high}
    Let $\cell$ be a cell of $\Partition$, let $\S_{\cell}\subseteq \Boxes$ be
    a multiset of boxes whose boundaries intersect $\cell$, let $ \I^i_{\cell}=
    \Set{ \projY{\bb}{x_i}}{ \bb \in \S_{\cell} \text{ and  }
       \projY{\cell}{x_i} \not\subset
       \projY{\bb}{x_i}}$
    be the multiset of $x_i$-projections of boxes in $\S_{\cell}$ whose
    $i$-sides intersect $\cell$. For an arbitrary box $\bb$, we have
    \begin{equation}
        \hwY{\VV_\cell\cap \bb}{\S_{\cell}}
        =%
        \prod\nolimits_{i=1}^d \hwY{Z_\cell^i\cap \projY{\bb}{x_i}}{\I^i_{\cell}}.
    \end{equation}
\end{lemma}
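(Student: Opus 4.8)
The plan is to follow the proof of \lemref{2dim} essentially verbatim, with the two-way split of $\S_{\cell}$ into $\XX$ and $\YY$ replaced by the $d$-way split coming from the pile structure of the partition $\Partition$. The one structural fact I would pin down first is that every box of $\S_{\cell}$ fails to span $\cell$ in \emph{exactly} one coordinate direction. Existence is the pile guarantee of the construction: each $\bb'\in\S_{\cell}$ has $\partial\bb'\cap\intX{\cell}\neq\emptyset$ (as in \lemref{2dim}), hence $\bb'$ is an $i$-pile with respect to $\cell$ for some $i$, i.e.\ $\bb'$ spans $\cell$ in every coordinate other than $i$. Uniqueness is immediate: if $\bb'$ were both an $i$-pile and a $j$-pile with $i\neq j$, it would span $\cell$ in \emph{every} coordinate, forcing $\intX{\cell}\subseteq \bb'$ and hence $\partial\bb'\cap\intX{\cell}=\emptyset$, a contradiction. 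Writing $i(\bb')$ for this unique direction, the condition ``$\projY{\cell}{x_i}\not\subset\projY{\bb'}{x_i}$'' in the definition of $\I^i_{\cell}$ holds iff $i=i(\bb')$, so the multisets $\I^1_{\cell},\dots,\I^d_{\cell}$ partition the projected boxes of $\S_{\cell}$: $\I^i_{\cell}$ is exactly the multiset $\{\projY{\bb'}{x_i} \mid \bb'\in\S_{\cell},\ i(\bb')=i\}$.

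With this in hand, the rest is pure bookkeeping, carried out exactly as in the plane. By the product description of a cell's vertex set recalled just before the lemma, $\VV_\cell = Z_\cell^1\times\cdots\times Z_\cell^d$ with $Z_\cell^i=\projY{(\VV\cap\cell)}{x_i}$; intersecting with the box $\bb=\projY{\bb}{x_1}\times\cdots\times\projY{\bb}{x_d}$ and writing $U_i:=Z_\cell^i\cap\projY{\bb}{x_i}$ gives $\VV_\cell\cap\bb=U_1\times\cdots\times U_d$. Next, the doubling weight factorizes pointwise: fix $p=(p_1,\dots,p_d)\in\VV_\cell$. For $\bb'\in\S_{\cell}$ with $i(\bb')=i$, the box $\bb'$ spans $\cell$ in every coordinate $j\neq i$ and $p\in\intX{\cell}$, so $p_j\in\projY{\bb'}{x_j}$ automatically for all $j\neq i$; hence $p\in\bb'$ iff $p_i\in\projY{\bb'}{x_i}$. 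Summing this equivalence over the boxes of $\S_{\cell}$ and using the partition $\{\I^i_{\cell}\}_i$ yields
\begin{equation*}
    \nrstY{p}{\S_{\cell}} \;=\; \sum_{i=1}^{d}\nrstY{p_i}{\I^i_{\cell}},
    \qquad\text{hence}\qquad
    \hwY{p}{\S_{\cell}} \;=\; 2^{\nrstY{p}{\S_{\cell}}} \;=\; \prod_{i=1}^{d} 2^{\nrstY{p_i}{\I^i_{\cell}}} \;=\; \prod_{i=1}^{d}\hwY{p_i}{\I^i_{\cell}}.
\end{equation*}

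Finally I would sum over $p\in\VV_\cell\cap\bb=U_1\times\cdots\times U_d$ and convert the product of sums into a sum of products, exactly as at the end of the proof of \lemref{2dim}:
\begin{equation*}
    \hwY{\VV_\cell\cap\bb}{\S_{\cell}}
    \;=\;\sum_{p\,\in\,U_1\times\cdots\times U_d}\ \prod_{i=1}^{d}\hwY{p_i}{\I^i_{\cell}}
    \;=\;\prod_{i=1}^{d}\ \sum_{z\,\in\,U_i}\hwY{z}{\I^i_{\cell}}
    \;=\;\prod_{i=1}^{d}\hwY{Z_\cell^i\cap\projY{\bb}{x_i}}{\I^i_{\cell}},
\end{equation*}
which is the claimed identity. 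The only non-routine step is the structural claim of the first paragraph --- that a box of $\S_{\cell}$ fails to span $\cell$ in exactly one direction --- and even that is read off directly from the construction of $\Partition$ (the pile property). I therefore do not expect a genuine obstacle: the substance is the $d$-fold distributivity that already appeared in dimension two, and the rest is just tracking which of the $d$ coordinate data structures a given box feeds into.
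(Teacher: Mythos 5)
Your proof is correct and is exactly the argument the paper intends: it states \lemref{high} without proof as ``the straightforward extension of \lemref{2dim},'' and your write-up is precisely that extension, with the only genuinely new ingredient (each box of $\S_{\cell}$ is an $i$-pile for exactly one $i$, so the multisets $\I^1_{\cell},\dots,\I^d_{\cell}$ partition the projections) correctly read off from the pile property of the Overmars--Yap partition. The remaining steps --- the product structure $\VV_\cell\cap\bb=U_1\times\cdots\times U_d$, the pointwise factorization of the doubling weight, and the distributivity of the product over the sums --- mirror the paper's two-dimensional proof verbatim.
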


We build a $d$-dimensional tree $\Tree$ on $\Partition$ analogously to
the 2D case. Let $z$ be a leaf of $\Tree$, and let
$\Box_z$ be the cell corresponding to $z$ . We construct a segment tree
$\Psi_z^i$ on $Z_{\Box_z}^i$ for each $i\in\{1,..,d\}$. The
operations of the data structure as well as the analysis are identical
to the 2D case and hence we omit the details. We get the following
lemma.

\begin{lemma}
    \lemlab{data-structure}%
    Let $\Boxes$ be a set of $n$ axis-aligned rectangles in $\Re^d$. A
    data-structure can be constructed in $O(n^{(d+1)/2}\log n)$ time
    that supports every operation specified in \defref{arr:ds}.
\end{lemma}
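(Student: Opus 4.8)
The plan is to lift the two-dimensional construction of \secref{ds:plane} to $\Re^d$, using the Overmars--Yap partition $\Partition$ whose relevant properties were recalled above: $|\Partition| = O(n^{d/2})$, no vertex of $\Boxes$ lies in the interior of a cell, the boundary of any box meets $O(n^{(d-1)/2})$ cells, and each cell meets $O(\sqrt n)$ faces of $\Boxes$. On top of $\Partition$ I would build the $d$-level recursive tree $\Tree$ sketched just before the statement: the top levels form a balanced binary tree over the level-$1$ slabs, the ``leaf'' for a level-$1$ slab is itself the root of a balanced binary tree over its level-$2$ slabs, and so on down to the actual leaves, which are the cells of $\Partition$; each node $u$ carries a box $\Box_u$ with $\Box_u=\Box_w\cup\Box_z$ for its children $w,z$, and $\VV_u := \VV\cap\intX{\Box_u}$. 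Since $d$ is constant, $\Tree$ has depth $O(\log n)$.

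At every node $u$ I would store the same two quantities as in the plane --- $\lambda_u$, the number of update boxes that are long at $u$ but short at $p(u)$, and $\omega_u=\hwY{\VV_u}{\S_u}$ --- and the recursive identities \Eqref{1}, \Eqref{2}, \Eqref{3}, \Eqref{4} carry over verbatim, with the common sibling facet $e_u=\Box_w\cap\Box_z$ now being an axis-aligned face of $\Partition$ whose induced subproblem is handled by a lower-dimensional instance of the same structure (the exact analogue of the secondary trees $\Psi^e_u$ used in 2D). At a leaf $z$ I would keep, for each coordinate $i\in\IRX{d}$, a one-dimensional lazy segment tree $\Psi_z^i$ (from \secref{ds:line}) on $Z^i_{\Box_z}=\projY{(\VV\cap\Box_z)}{x_i}$, a set of $O(\sqrt n)$ values. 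The crucial point is that a box whose boundary meets $\intX{\Box_z}$ is an $i$-pile with respect to $\Box_z$ for exactly one $i$, so each such update box contributes an interval to exactly one of the multisets $\I^i_{\Box_z}$, and \lemref{high} then gives the clean factorization $\omega_z=\prod_{i=1}^d\hwY{Z^i_{\Box_z}}{\I^i_{\Box_z}}$, generalizing \lemref{2dim}.

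The six operations are then executed exactly as in \secref{ds:plane}. For $\weight(\rect)$, $\double(\rect)$, $\halve(\rect)$, $\Insert(\rect)$, $\Delete(\rect)$ one descends from the root to the canonical nodes $\C(\rect)$, recombines bottom-up through \Eqref{2}--\Eqref{4}, and at each leaf whose cell is met by $\partial\rect$ invokes the corresponding $1$-dimensional segment-tree operation on each $\Psi_z^i$, combining the $d$ returned values by \lemref{high} in the case of $\weight$ and updating them independently otherwise; $\sample$ walks down a single root-to-leaf path, at each internal node choosing a child or the sibling facet with the appropriate probability (computable from the stored $\lambda$'s and $\omega$'s as in $\weight$) and finishing with one $\sample$ on each $\Psi_z^i$ at the leaf. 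Because $\partial\rect$ meets $O(n^{(d-1)/2})$ cells, $\rect$ is short at $O(n^{(d-1)/2})$ leaves, so $|\C(\rect)|=|\C^*(\rect)|=O(n^{(d-1)/2}\log n)$; each visited node costs $O(1)$ work plus $O(1)$ segment-tree operations of $O(\log n)$ each, giving the $O(n^{(d-1)/2}\log n)$ bound of \figref{d:s:p} for the standard operations and $O(\log n)$ for $\sample$. For the construction, the Overmars--Yap procedure \cite{oy-nubkm-91} builds $\Partition$ and the skeleton of $\Tree$ in $O(n^{d/2}\polylog n)$ time, and at each of the $O(n^{d/2})$ leaves I build $d$ segment trees on $O(\sqrt n)$ points in $O(\sqrt n\log n)$ time (the sibling-facet structures add no more in total), for an overall $O(n^{d/2}\cdot\sqrt n\log n)=O(n^{(d+1)/2}\log n)$ bound, as claimed.

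The main obstacle, as in the plane, is bookkeeping rather than any new idea: one must check that the lazy push-down of the $\lambda_u$ values (and of the $\alpha(\cdot)$ counters inside the segment trees) stays consistent through the $d$ nested balanced trees, and that the $i$-pile structure of $\Partition$ really does force the per-leaf and per-facet factorizations of \lemref{high} at every node an operation touches. Once these invariants are verified, the time bounds follow immediately from $|\Partition|=O(n^{d/2})$ and $|\C^*(\rect)|=O(n^{(d-1)/2}\log n)$.
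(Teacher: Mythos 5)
Your proposal follows essentially the same route as the paper: the Overmars--Yap partition, the $d$-level tree with $\lambda_u,\omega_u$ and the invariants \Eqref{1}--\Eqref{4}, per-leaf one-dimensional lazy segment trees $\Psi_z^i$ justified by the $i$-pile property and \lemref{high}, and the same counting via $|\Partition|=O(n^{d/2})$ and $O(n^{(d-1)/2})$ boundary cells; the paper itself gives only this sketch and omits the detailed verification, so your write-up is if anything more explicit. The only point to tighten is the per-operation accounting: the secondary segment-tree calls should be charged only to the $O(n^{(d-1)/2})$ short leaves (and the relevant sibling-facet structures), not to all $O(n^{(d-1)/2}\log n)$ visited nodes, since the latter reading would give an extra $\log n$ factor over the bound in \figref{d:s:p}.
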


\section{Conclusions}
\seclab{conclusions}

In this paper, we presented static and dynamic algorithms for piercing
boxes by points. We proposed two types of static algorithms: One where
the approximation ratio has a polynomial dependence on $d$ but the
runtime is expensive, and the other where the runtime is faster but
there is an exponential dependence on $d$.  As mentioned in the
introduction, the recent result by Bhore and Chan \cite{bc-fsdaa-25} gave an
$O(n^{1+\delta})$ time algorithm that computes a piercing set of size
$d^{O(d)} \popt \log \log \popt$, where $\popt$ is the size of the
optimal piercing set.  This leaves open the question of whether both
conditions can be satisfied together, i.e., are there near-linear
algorithms with
\begin{math}
    O(\poly(d) \log \log \popt)
\end{math}
approximation ratio for piercing? Given the close connection between
piercing and finding the deepest point in an arrangement of
rectangles, this seems unlikely. An interesting future direction would
be to consider this connection further. Another line of work could
involve expanding our technique for reducing the candidate point set
to design faster algorithms for piercing other geometric objects.

\BibTexMode{%
   \bibliographystyle{alpha}%
   \bibliography{rect_net}%
}%

\BibLatexMode{\printbibliography}

\end{document}